\theoremstyle{definition} \newtheorem{defn}{Definition}}
\newtheorem{theorem}{Theorem}
\newtheorem{observation}[theorem]{Observation}
\newtheorem{lemma}[theorem]{Lemma}
\newtheorem{corollary}[theorem]{Corollary}
\newtheorem{openproblem}{Problem}
\let\oldNP\NP\renewcommand{\NP}{\oldNP\xspace}
\newcommand{\bccol}[1]{\textsc{biclique $#1$-coloring}}
\newcommand{\stcol}[1]{\textsc{star $#1$-coloring}}
\newcommand{\stchose}[1]{\textsc{star $#1$-choosability}}
\newcommand{\bcchose}[1]{\textsc{biclique $#1$-choosability}}
\newcommand{\ptwop}{\ensuremath{\Pi^p_2}\xspace}
\newcommand{\stp}{\ensuremath{\Sigma^p_2}\xspace}
\newcommand{\ptp}{\ensuremath{\Pi^p_3}\xspace}
\newcommand{\qsat}[1]{\ensuremath{\textsc{qsat}_{#1}}}
\newcommand{\range}[3]{\ensuremath{#1 \in \{#2,\ldots,#3\}}}
\newcommand{\MC}{\ensuremath{\mathcal{C}}}
\newcommand{\MB}{\ensuremath{\mathcal{B}}}
\newcommand{\MS}{\ensuremath{\mathcal{S}}}
\newcommand{\naesat}{\textsc{nae-sat}\xspace}
\newcommand{\naesatt}{\textsc{nae$\forall\exists$sat}\xspace}
\let\VP=\mathbf 
\def\VEC#1{\vec{\VP #1}}
\newenvironment{Problem}{%
  \begin{Sbox}%
    \begin{minipage}{\textwidth-2\parindent}%
      \parskip=.1\baselineskip%
      \vspace{.25\baselineskip}%
}{%
      \vspace{.25\baselineskip}%
    \end{minipage}%
  \end{Sbox}%
  \vspace{\baselineskip}%
  \begin{center}%
    \doublebox{\TheSbox}%
  \end{center}%
  \vspace{\baselineskip}%
}
\newcommand{\problemName}[1]{\noindent#1\vspace{.5\baselineskip}}
\newcommand{\InputTag}{\textsf{INPUT:}\ }
\newlength{\InputLength}\settowidth{\InputLength}{\InputTag}
\newcommand{\problemInput}[1]{\hangindent=\InputLength\InputTag #1}     
\newcommand{\QuestionTag}{\textsf{QUESTION:}\ }
\newlength{\QuestionLength}\settowidth{\QuestionLength}{\QuestionTag}
\newcommand{\problemQuestion}[1]{\hangindent=\QuestionLength\QuestionTag #1}
\newcommand{\problem}[3]{%
  \begin{Problem}\problemName{#1}\par\problemInput{#2}\par\problemQuestion{#3}\end{Problem}%
}
\let\Definition=\emph
\def\imagenes{}
\title{The star and biclique coloring and choosability problems}
\author{%
  Marina Groshaus\thanks{CONICET}~\thanks{Departamento de Computaci\'on, FCEN, Universidad de Buenos Aires, 
Buenos Aires, Argentina.} \and 
  Francisco J.\ Soulignac\footnotemark[1]~\footnotemark[2]~\thanks{Universidad Nacional de Quilmes, Buenos Aires, Argentina.} \and Pablo Terlisky\footnotemark[2]
}
\date{\normalsize\texttt{\{groshaus,fsoulign,terlisky\}@dc.uba.ar}}
\begin{document}
\maketitle
\begin{abstract}
  A biclique of a graph $G$ is an induced complete bipartite graph.  A star of $G$ is a biclique contained in the closed neighborhood of a vertex.  A star (biclique) $k$-coloring of $G$ is a $k$-coloring of $G$ that contains no monochromatic maximal stars (bicliques).  Similarly, for a list assignment $L$ of $G$, a star (biclique) $L$-coloring is an $L$-coloring of $G$ in which no maximal star (biclique) is monochromatic.  If $G$ admits a star (biclique) $L$-coloring for every $k$-list assignment $L$, then $G$ is said to be star (biclique) $k$-choosable.  In this article we study the computational complexity of the star and biclique coloring and choosability problems.  Specifically, we prove that the star (biclique) $k$-coloring and $k$-choosability problems are $\Sigma_2^p$-complete  and $\Pi_3^p$-complete for $k > 2$, respectively, even when the input graph contains no induced $C_4$ or $K_{k+2}$.   Then, we study all these problems in some related classes of graphs, including $H$-free graphs for every $H$ on three vertices, graphs with restricted diamonds, split graphs, threshold graphs, and net-free block graphs.

 \vspace*{.2\baselineskip} {\bf Keywords:} star coloring, biclique coloring, star choosability, biclique choosability.
\end{abstract}

\section{Introduction}

Coloring problems are among the most studied problems in algorithmic graph theory.  In its classical form, the $k$-coloring problem asks if there is an assignment of $k$ colors to the vertices of a graph in such a way that no edge is monochromatic.  Many generalizations and variations of the classical coloring problem have been defined over the years.  One of such variations is the clique $k$-coloring problem, in which the vertices are colored so that no maximal clique is monochromatic.  In this article we study the star and biclique coloring problems, which are variations of the coloring problem similar to clique colorings.  A biclique is a set of vertices that induce a complete bipartite graph $K_{n,m}$, while a star is a biclique inducing the graph $K_{1,m}$.  In the star (biclique) $k$-coloring problem, the goal is to color the vertices with $k$ colors without generating monochromatic maximal stars (bicliques).

The clique coloring problem has been investigated for a long time, and it is still receiving a lot of attention.  Recently, the clique $k$-coloring problem was proved to be \stp-complete~\cite{MarxTCS2011} for every $k \geq 2$, and it remains \stp-complete for $k=2$ even when the input is restricted to graphs with no odd holes~\cite{DefossezJGT2009}.  The problem has been studied on many other classes of input graphs, for which it is was proved to be \NP-complete or to require polynomial time (e.g.~\cite{BacsoGravierGyarfasPreissmannSebHoSJDM2004,CerioliKorenchendler2009,DefossezJGT2006,GravierHoangMaffrayDM2003,KratochvilTuzaJA2002,MacedoMachadoFigueiredo2012}).  Due to the close relation between cliques and bicliques, many problems on cliques have been translated in terms of bicliques (e.g.~\cite{AmilhastreVilaremJanssenDAM1998,PrisnerC2000a,TuzaC1984}).  However, there are some classical problems on cliques whose biclique versions were not studied until recently~\cite{EguiaSoulignacDMTCS2012,GroshausMonteroJoGT2012,GroshausSzwarcfiterGC2007,GroshausSzwarcfiterJGT2010}.  Clique colorings are examples of such problems; research on biclique colorings begun in 2010 in the Master Thesis of one of the authors~\cite{Terlisky2010} whose unpublished results are being extended in the present article.  It is worth mentioning that, despite its youthfulness, at least two articles on biclique colorings were written: \cite{MacedoMachadoFigueiredo2012} develop a polynomial time algorithm for biclique coloring some unichord-free graphs, and \cite{MacedoDantasMachadoFigueiredo2012} determines the minimum number of colors required by biclique colorings of powers of paths and cycles.

The list coloring problem is a generalization of the coloring problem in which every vertex $v$ is associated with a list $L(v)$, and the goal is to color each vertex $v$ with an element of $L(v)$ in such a way that no edge is monochromatic.  Function $L$ is called a list assignment, and it is a $k$-list assignment when $|L(v)| = k$ for every vertex $v$.  A graph $G$ is said to be $k$-choosable when it admits an $L$-coloring with no monochromatic edges, for every $k$-list assignment $L$.  The choosability problem asks whether a graph is $k$-choosable. In the same way as the coloring problem is generalized to the clique (star, biclique) coloring problem, the choosability problem is generalized to the clique (star, biclique) choosability problem.  That is, a graph $G$ is clique (star, biclique) $k$-choosable when it admits an $L$-coloring generating no monochromatic maximal cliques (star, bicliques), for every $k$-list assignment $L$.  The choosability problems seem harder than their coloring versions, because a universal quantifier on the list must be checked.  This difficulty is reflected for the $k$-choosability and clique $k$-choosability problems in the facts that the former is \ptwop-complete for every $k \geq 3$~\cite{GutnerTarsiDM2009}, whereas the latter is \ptp-complete for every $k \geq 2$~\cite{MarxTCS2011}.  In~\cite{MoharvSkrekovskiEJC1999} it is proven that every planar graph is clique $4$-choosable.  However, contrary to what happens with the clique coloring problem, there are not so many results regarding the complexity of the clique coloring problem for restricted classes of graphs.

In this paper we consider the star and biclique coloring and choosability problems, both for general graphs and for some restricted classes of graphs.  The star and biclique coloring and choosability problems are defined in Section~\ref{sec:preliminaries}, where we introduce the terminology that will be used throughout the article.  In Section~\ref{sec:general case}, we prove that the star $k$-coloring problem is \stp-complete for $k \geq 2$, and that it remains \stp-complete even when its input is restricted to $\{K_{2,2}, K_{k+2}\}$-free graphs.  Clearly, every maximal biclique of a $K_{2,2}$-free graph is a star.  Thus, we obtain as a corollary that the biclique $k$-coloring problem on $\{K_{2,2}, K_{k+2}\}$-free graphs is \stp-complete as well.  The completeness proof follows some of the ideas by Marx~\cite{MarxTCS2011} for the clique coloring problem.  In Section~\ref{sec:choosability} we show that the star $k$-choosability problem is \ptp-complete for $k \geq 2$, and that it remains \ptp-complete for $\{K_{2,2}, K_{k+2}\}$-free graphs.  Again, the \ptp-completeness of the biclique $k$-coloring problem on $\{K_{2,2}, K_{k+2}\}$-free is obtained as a corollary.  As in~\cite{MarxTCS2011}, we require a structure to force a color on a vertex.  The remaining sections of the article study the star and biclique coloring problems on graphs with restricted inputs.  These graphs are related to the graph $G$ that is generated to prove the \stp-completeness of the star coloring problem in Section~\ref{sec:general case}.  The aim is to understand what structural properties can help make the problem simpler.  In Section~\ref{sec:small forbiddens}, we discuss the star and biclique coloring and choosability problems on $K_3$-free, $P_3$-free, $\overline{P_3}$-free, and $\overline{K_3}$-free graphs.  For $K_3$, $P_3$ and $\overline{P_3}$, the star coloring and star choosability problems are almost trivial and can be solved in linear time.  On the other hand, both problems are as hard as they can be for $\overline{K_3}$-free graphs, even when the input is a co-bipartite graph.  In Section~\ref{sec:diamond-free} we prove that the star coloring problem is \NP-complete for diamond-free graphs and that the star choosability problem is \ptwop-complete for a superclass of diamond-free graphs.  If no induced $K_{i,i}$ is allowed for a fixed $i$, then the biclique coloring and the biclique choosability problems are also \NP-complete and \ptwop-complete.  In Section~\ref{sec:split}, the star coloring and the star choosability problems on split graphs are proved to be \NP-complete and \ptwop-complete, respectively.  Finally, Sections \ref{sec:threshold}~and~\ref{sec:block} show that the star coloring and the star choosability problems are equivalent for both threshold and net-free block graphs, and both can be solved in linear time.  Table~\ref{tab:results} sums up the results obtained in the article for the star coloring and star choosability problems.

\begin{table}
 \centering
 \begin{tabular}{|l|c|c|}
 \hline
 Graph class                                   & star $k$-coloring  & star $k$-choosability  \\
 \hline
 $\{K_{2,2}, K_{k+2}\}$-free                   & \stp-complete      & \ptp-complete         \\
 $K_3$-free, $P_3$-free, $\overline{P_3}$-free & $O(n+m)$           & $O(n+m)$              \\
 $\overline{K_3}$-free                         & \NP-complete       & \ptwop-complete       \\
 co-bipartite                                  &                    & \ptwop-complete       \\
 \{$W_4$, gem, dart\}-free                     & \NP-complete       & \ptwop-complete       \\
 diamond-free                                  & \NP-complete       & \ptwop                \\
 $\overline{C_4}$-free                         & \NP-complete       & \ptwop-complete       \\
 split                                         & \NP-complete       & \ptwop-complete       \\
 threshold                                     & $O(n+m)$           & $O(n+m)$              \\
 net-free block                                & $O(n+m)$           & $O(n+m)$              \\
 \hline
 \end{tabular}
 \caption{Complexity results obtained in this article.}\label{tab:results}
\end{table}

\section{Preliminaries}
\label{sec:preliminaries}

In this paper we work with simple graphs.  The vertex and edge sets of a graph $G$ are denoted by $V(G)$ and $E(G)$, respectively.  Write $vw$ to denote the edge of $G$ formed by vertices $v, w \in V(G)$.  For the sake of simplicity, $E(G)$ is also considered as the family of subsets of $V(G)$ containing the set $\{v,w\}$ for each $vw \in E(G)$.  For $v \in V(G)$, the \Definition{neighborhood} of $v$ is the set $N_G(v)$ of vertices adjacent to $v$, while the \Definition{closed neighborhood} of $v$ is $N_G[v] = N_G(v) \cup \{v\}$.  A vertex $w$ \Definition{dominates} $v$, and $v$ is \Definition{dominated} by $w$, when $N_G[v] \subseteq N_G[w]$, while $w$ \Definition{false dominates} $v$, and $v$ is \Definition{false dominated} by $w$, when $N_G(v) \subseteq N_G(w)$.  If $N_G[v] = N_G[w]$, then $v$ and $w$ are \Definition{twins}, and if $N_G(v) = N_G(w)$, then $v$ and $w$ are \Definition{false twins}.   The \Definition{degree} of $v$ is $d_G(v) = |N_G(v)|$.  A vertex is an \Definition{isolated vertex}, a \Definition{leaf}, and a \Definition{universal vertex} when $d_G(v)$ equals $0$, $1$, and $|V(G)|-1$, respectively.  We omit the subscripts from $N$ and $d$ when no ambiguities arise.

The \Definition{complement} of $G$ is the graph $\overline{G}$ where $V(\overline{G}) = V(G)$ and $E(\overline{G}) = \{vw \mid vw \not\in E(G)\}$.  For a graph $H$, the \Definition{union} of $G$ and $H$ is the graph $G \cup H$ where $V(G \cup H) = V(G) \cup V(H)$ and $E(G \cup H) = E(G) \cup E(H)$.  Write $G = H$ to indicate that $G$ and $H$ are isomorphic.  The \Definition{$n$-cycle} graph ($n \geq 3$), denoted by $C_n$, is the connected graph that has $n$ vertices of degree $2$.  The \Definition{$n$-path} graph, denoted by $P_n$, is the graph obtained from $C_n$ by removing an edge.  The \emph{$n$-wheel} graph ($n \geq 3$), denoted by $W_n$, is the graph obtained from $C_n$ by inserting a universal vertex.  The \Definition{diamond}, \Definition{gem}, \Definition{dart}, and \Definition{net} are the graphs shown in Figure~\ref{fig:diam-dart-gem}.  The \Definition{$n$-complete} graph, denoted by $K_n$, is the graph formed by $n$ pairwise adjacent vertices.  An \Definition{independent set} is a subset of $V(G)$ formed by pairwise non-adjacent vertices.  Graph $G$ is \Definition{bipartite} when $V(G)$ can be partitioned into two independent sets $S$ and $T$. In this case, the unordered pair $ST$ is called a \Definition{bipartition} of $G$.  The \Definition{$(n,m)$-complete bipartite} graph ($n \geq 1$, $m \geq 1$), denoted by $K_{n,m}$, is the graph isomorphic to $\overline{K_n \cup K_m}$.  Note that $K_{n,m}$ is a bipartite graph.  The graph $K_{1,n}$ is also called the \Definition{$n$-star} graph.  The universal vertices of $K_{1,n}$ are referred to as the \Definition{centers} of $K_{1,n}$, while $K_{1,n}$ is said to be \Definition{centered} at $v$.  Note that $K_{1,1}$ has two centers.  

\begin{figure}
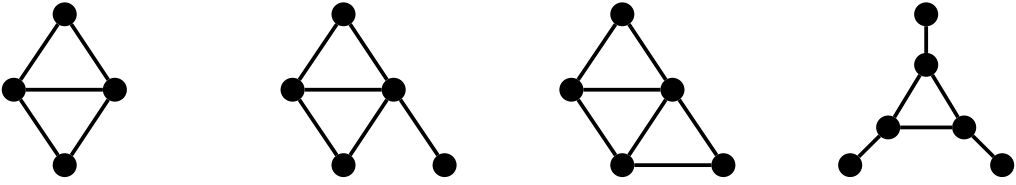

 \hfill\includegraphics{\imagenes diamond}\hfill\includegraphics{\imagenes dart}\hfill\includegraphics{\imagenes gem}\hfill{}\includegraphics{\imagenes net}\hfill{}
 \caption{The diamond, dart, gem, and net graphs are shown from left to right.}\label{fig:diam-dart-gem}
\end{figure}

Let $W \subseteq V(G)$.  We write $G[W]$ to denote the subgraph of $G$ induced by $W$, and $G \setminus W$ to denote $G[V(G) \setminus W]$.  Set $W$ is said to be a \Definition{clique}, \Definition{biclique}, and \Definition{star} when $G[W]$ is isomorphic to a complete, bipartite complete, and star graph, respectively.  For the sake of simplicity, we use the terms \Definition{clique} and \Definition{biclique} to refer to $G[W]$ as well.  Moreover, we may refer to $ST$ as a biclique or star when $G[S \cup T]$ is a bipartite complete or star graph with bipartition $ST$. The family of maximal cliques, maximal bicliques, and maximal stars are denoted by $\MC(G)$, $\MB(G)$, and $\MS(G)$, respectively. 

A sequence of distinct vertices $P = v_1, \ldots, v_n$ is a \Definition{path} of \Definition{length} $n-1$ when $v_i$ is adjacent to $v_{i+1}$.  If in addition $v_{n}$ is adjacent to $v_1$, then $P$ is a \Definition{cycle} of \Definition{length} $n$.  A \Definition{tree} is a connected graph that contains no cycles.  A \Definition{rooted tree} is a tree $T$ with a fixed vertex $r \in V(T)$ called the \Definition{root} of $T$.  The \Definition{parent} of $v$ in $T$ is the neighbor of $v$ in its path to $r$.  A path (resp.\ cycle) of $G$ is \Definition{chordless} when $G[P] = P_n$ (resp.\ $G[P] = C_n$).  A \Definition{hole} is a chordless cycle of length at least $4$.  A graph $G$ is said to be \Definition{$H$-free}, for some graph $H$, when no induced subgraph of $G$ is isomorphic to $H$.  Similarly, $G$ is $\mathcal{F}$-free, for any family of graph $\mathcal{F}$, when $G$ is $H$-free for every $H \in \mathcal{F}$.  A graph is \Definition{chordal} when it is $\{C_n\}_{n\geq4}$-free, i.e., chordal graphs have no holes.  

A \Definition{coloring} of $G$ is a function $\rho$ that maps each vertex $v \in V(G)$ to a color $\rho(v) \in \mathbb{N}$.  When $\rho(v) \leq k$ for every $v \in V(G)$, $\rho$ is called a \Definition{$k$-coloring}.  We define $\rho(W) = \{\rho(v) \mid v \in W\}$ for any $W \subseteq V(G)$.  Set $W$ is said to be \Definition{$\rho$-monochromatic} when $|\rho(W)| = 1$.  When there is no ambiguity, we say that $W$ \Definition{monochromatic} instead of $\rho$-monochromatic.  For a family $\mathcal{F}$ of subsets of $V(G)$, we say that $\rho$ is a \Definition{proper coloring} of $\mathcal{F}$ if no $W \in \mathcal{F}$ is monochromatic.  Four kinds of families are considered in this article.  A coloring $\rho$ is a \Definition{vertex}, \Definition{clique}, \Definition{biclique}, and \Definition{star coloring} when $\rho$ is a proper coloring of $E(G)$, $\MC(G)$, $\MB(G)$, and $\MS(G)$, respectively.  The problems of finding a proper coloring for these families are defined as follows.

\begin{Problem}
  \problemName{\textsc{Vertex (resp.\ clique, biclique, star) $k$-coloring}}

  \problemInput{A connected graph $G$ and a value $k \in \mathbb{N}$.}

  \problemQuestion{Is there a vertex (resp.\ clique, biclique, star) $k$-coloring of $G$?}
\end{Problem}

List colorings are a generalization of colorings. A \Definition{list assignment} of $G$ is a function that maps each vertex $v \in V(G)$ to a set $L(v) \subseteq \mathbb{N}$.  When $|L(v)| = k$ for every $v \in V(G)$, $L$ is called a \Definition{$k$-list assignment}.  An $L$-coloring of $G$ is a coloring $\rho$ such that $\rho(v) \in L(v)$ for every $v \in V(G)$.  Define $L(W) = \bigcup\{L(v) \mid v \in W\}$ for any $W \subseteq V(G)$.  Given family $\mathcal{F}$ of subset of $V(G)$ and a number $k \in \mathbb{N}$, graph $G$ is said to be \Definition{$k$-choosable} with respect to $\mathcal{F}$ when there exists a proper $L$-coloring of $\mathcal{F}$ for every $k$-list assignment $L$ of $G$.  Graph $G$ is \Definition{vertex}, \Definition{clique}, \Definition{biclique}, and \Definition{star $k$-choosable} when $G$ is $k$-choosable with respect to $E(G)$, $\MC(G)$, $\MB(G)$, and $\MS(G)$, respectively.  The problem of determining if $G$ is $k$-choosable is defined as follows.

\begin{Problem}
  \problemName{\textsc{Vertex (resp.\ clique, biclique, star) $k$-choosability}}

  \problemInput{A connected graph $G$ and a value $k \in \mathbb{N}$.}

  \problemQuestion{Is $G$ vertex (resp.\ clique, biclique, star) $k$-choosable?}
\end{Problem}

The vertex (resp.\ clique, biclique, star) \Definition{chromatic number}, denoted by $\chi(G)$ (resp.\ $\chi_C(G)$, $\chi_B(G)$, and $\chi_S(G)$), is the minimum $k \in \mathbb{N}$ such that $G$ admits a vertex (resp.\ clique, biclique, star coloring) $k$-coloring. Similarly, the vertex (clique, biclique, star) \Definition{choice number}, denoted by $ch(G)$ (resp.\ $ch_C(G)$, $ch_B(G)$, $ch_S(G)$) is the minimum number $k \in \mathbb{N}$ such that $G$ is vertex (resp.\ clique, biclique, star coloring) $k$-choosable.  By definition, $\chi(G) \leq ch(G)$ and $\chi_*(G) \leq ch_*(G)$ for $* \in \{C, B, S\}$.  

For a function $f$ with domain $D$, the \Definition{restriction} of $f$ to $D' \subseteq D$ is the function $f'$ with domain $D'$ where $f'(x) = f(x)$ for $x \in D'$.  In such case, $f$ is said to be an \Definition{extension} of $f'$ to $D$.  A \Definition{leafed vertex} is a vertex adjacent to a leaf.  For the sake of simplicity, whenever we state that $G$ \emph{contains a leafed vertex} $v$, we mean that $G$ contains $v$ and a leaf adjacent to $v$.  It is well known that $\chi(G) \geq \chi(H)$ for every induced subgraph $H$ of $G$.  Such a property is false for clique, biclique, and star colorings.  In particular, for any graph $H$, it is possible to build a graph $G$ such that $ch_*(G) = 2$ and $G$ contains $H$ as an induced subgraph.  For $* = C$, graph $G$ is built from $H$ by iteratively inserting a twin of each vertex of $H$.  Similarly, $G$ is obtained by inserting false twins for $*  = B$, while, by the next observation, $G$ is obtained by inserting a leaf adjacent to each vertex for $* = S$.  

\begin{observation}\label{obs:leafed vertex}
 Let $G$ be a graph with a list assignment $L$, $v$ be a leafed vertex of $G$, and $l$ be a leaf adjacent to $v$.  Then, any $L$-coloring of $G \setminus l$ can be extended into an $L$-coloring of $G$ in such a way that there is no monochromatic maximal star with center in $v$.
\end{observation}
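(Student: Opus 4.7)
The plan is to construct the extension explicitly by picking $\rho(l)\in L(l)\setminus\{\rho(v)\}$, where $\rho$ denotes the given $L$-coloring of $G\setminus l$. Such a choice is available under the mild hypothesis that $|L(l)|\geq 2$, which is the case in every $k$-list assignment with $k\geq 2$; this is the only setting in which the observation is used in the remainder of the article (in particular, it is exactly the regime needed to justify the preceding comment that inserting a leaf at every vertex of an arbitrary $H$ yields a star $2$-choosable supergraph).

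The heart of the argument is then to verify that every maximal star centered at $v$ must contain $l$. Because $l$ has degree one in $G$ with $v$ as its unique neighbor, $l$ is an isolated vertex of the induced subgraph $G[N(v)]$, and hence $l$ lies in every maximal independent set of $N(v)$. Since maximal stars centered at $v$ are exactly the sets of the form $\{v\}\cup I$ with $I$ a maximal independent set of $N(v)$, it follows that $l$ appears in every maximal star of $G$ with center $v$.

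Combining the two facts yields the conclusion at once: in the extended coloring, every maximal star centered at $v$ contains both $v$ and $l$, and these two vertices receive different colors by our choice of $\rho(l)$, so no such star is monochromatic. There is no real obstacle in the argument; the only subtle point is the bijection between maximal stars with center $v$ and maximal independent sets of $N(v)$ together with the remark that a leaf is forced into every such independent set because it is isolated inside $G[N(v)]$. Once this structural observation is made explicit, the rest is a one-line verification.
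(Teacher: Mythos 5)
Your proof is correct and is exactly the argument the paper leaves implicit (the observation is stated without proof): color $l$ with a color different from $\rho(v)$, and note that $l$, being isolated in $G[N(v)]$, belongs to every maximal independent set of $N(v)$ and hence to every maximal star centered at $v$. Your remark that $|L(l)|\geq 2$ is required is a fair catch---as stated the observation fails for $1$-list assignments, but it is only ever invoked for $k$-list assignments with $k\geq 2$.
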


A \Definition{block} is a maximal set of twin vertices.  If $v$ and $w$ are twin vertices, then $\{v\}\{w\}$ is both a maximal star and a maximal biclique, and thus $v$ and $w$ have different colors in any star or biclique $L$-coloring $\rho$.  Consequently, $|\rho(W)| = W$ for any block $W$ of $G$.  We record this fact in the following observation.

\begin{observation}\label{obs:block coloring}
 Let $G$ be a graph with a list assignment $L$, and $v,w$ be twin vertices.  Then, $\rho(v) \neq \rho(w)$ for any star or biclique $L$-coloring $\rho$ of $G$.
\end{observation}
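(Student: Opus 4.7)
The plan is to make precise the one-sentence justification given in the paragraph preceding the observation, namely that the pair $\{v,w\}$ is simultaneously a maximal star and a maximal biclique whenever $v$ and $w$ are twins. Once this structural fact is established, the conclusion is immediate from the definitions of star and biclique $L$-coloring, since no such coloring may leave a maximal star or maximal biclique monochromatic.

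The first step is the easy remark that $N[v]=N[w]$ forces $v\in N[w]$, hence (for $v\neq w$) the edge $vw$ exists, so $\{v\}\{w\}$ induces $K_{1,1}$ and is both a star and a biclique with bipartition $\{v\}\{w\}$. It is also worth noting for the following step that, from $N[v]=N[w]$, one obtains $N(v)\setminus\{w\}=N(w)\setminus\{v\}$, so for every $u\notin\{v,w\}$ we have $u\in N(v)\iff u\in N(w)$.

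For maximality as a biclique, I would argue by contradiction: if some $u\notin\{v,w\}$ extended $\{v\}\{w\}$ to a larger biclique, then $u$ would lie on one side, say the side of $v$, which forces $u\in N(w)$ and $u\notin N(v)$, contradicting the twin identity above. For maximality as a star, I would use that any star on at least three vertices has a unique center; any star strictly containing $\{v,w\}$ must therefore be centered at $v$ or at $w$ (since $v$ and $w$ are adjacent, one of them must play the role of the universal vertex), and a new leaf $u$ would again require $u$ to be adjacent to exactly one of $v,w$, contradicting the twin identity.

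I do not expect any real obstacle: the entire content of the observation is that twins produce a maximal $K_{1,1}$ in both senses, and the definition of star/biclique coloring then forbids $\rho(v)=\rho(w)$. The only mildly delicate point is being careful about what ``maximal star'' means when the center is ambiguous in $K_{1,1}$, but the uniqueness-of-center argument for stars on three or more vertices handles this cleanly.
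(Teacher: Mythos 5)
Your proposal is correct and follows exactly the paper's own justification, which is the single sentence preceding the observation asserting that $\{v\}\{w\}$ is simultaneously a maximal star and a maximal biclique when $v$ and $w$ are twins. You simply flesh out the routine details (adjacency of twins, and that no third vertex can extend the $K_{1,1}$ because it would have to be adjacent to exactly one of $v,w$), so there is nothing further to compare.
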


\section{Complexity of star and biclique coloring}
\label{sec:general case}

In this section we establish the hardness of the star and biclique coloring problems by showing that both problems are \stp-complete.  The main result of this section is that \stcol{k} is \stp-complete for every $k \geq 2$, even when its input is restricted to $\{C_4, K_{k+2}\}$-free graphs.  Since all the bicliques of a $C_4$-free graph are stars, this immediately implies that \bccol{k} is also \stp-complete for $\{C_4, K_{k+2}\}$-free graphs.  The hardness results is obtained by reducing instances of the \qsat{2} problem.  The \qsat{h} problem is known to be $\Sigma_h^p$-complete for every $h$~\cite{Papadimitriou1994}, and is defined as follows.

\problem{\textbf{Quantified $3$-satisfiability with $h$ alternations} (\qsat{h})}
{A formula $\phi(\VEC x_1, \VEC x_2, \ldots, \VEC x_h)$ that is in $3$-CNF if $h$ is odd, while it is in $3$-DNF if $h$ is even.}
{Is $(\exists\VEC x_1)(\forall\VEC x_2)(\exists\VEC x_3)\ldots (Q_h\VEC x_h)\phi(\VEC x_1, \VEC x_2, \VEC x_3, \ldots, \VEC x_h)$ true? ($Q_h \in \{\exists, \forall\}$.)}

Recall that $\phi$ is in $3$-CNF if it is a conjunction of clauses where each clause is a disjunction with three literals.  Similarly, $\phi$ is in $3$-DNF when it is a disjunction of clauses, each clause being a conjunction with three literals.

\subsection{Keepers, switchers, and clusters}

In this section we introduce the keeper, switcher, and cluster connections that are required for the reductions.  The keeper connections are used to force the same color on a pair of vertices, in any star coloring.  Conversely, the switcher connections force some vertices to have different colors.  Finally, the cluster connections are used to represent the variables of a DNF formula.  We begin defining the keeper connections.

\begin{defn}[$k$-keeper]\label{def:k-keeper connection}
  Let $G$ be a graph and $v, w \in V(G)$.  Say that $K \subset V(G)$ is a \Definition{$k$-keeper connecting $v, w$} ($k \geq 2$) when $K$ can be partitioned into a clique $D = \{d_1, \ldots, d_{k-1}\}$ and $k-1$ cliques $C_1, \ldots, C_{k-1}$ with $k$ vertices each in such a way that $D \cup \{v,w\}$ and $C_{i} \cup \{d_i\}$ are cliques for \range{i}{1}{k-1}, and there are no more edges incident to vertices in $K$.
\end{defn}

Figure~\ref{fig:keeper-switcher}~(a) shows a $k$-keeper connecting two vertices $v$ and $w$. The main properties of $k$-keepers are summarized in the following lemma.

\begin{lemma}\label{lem:keeper properties}
 Let $G$ be a graph and $K$ be a $k$-keeper connecting $v,w \in V(G)$ ($k \geq 2$). Then,
\begin{enumerate}[(i)]
  \item no induced hole or $K_{k+2}$ contains a vertex in $K$,\label{lem:keeper properties:forbidden}
  \item $v$ and $w$ have the same color in any star $k$-coloring of $G$, and\label{lem:keeper properties:colors}
  \item Any $k$-coloring $\rho$ of $G \setminus K$ in which $\rho(v) = \rho(w)$ can be extended into a $k$-coloring of $G$ in such a way that no monochromatic maximal star contains a vertex in $K$.\label{lem:keeper properties:extension}
\end{enumerate}
\end{lemma}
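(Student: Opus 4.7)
My plan is to exploit the rigid structure of a $k$-keeper: the vertices of each $C_i$ are pairwise twins whose closed neighborhood is the clique $C_i \cup \{d_i\}$ of size $k+1$, while $G[N(d_i)]$ decomposes as the disjoint union of two cliques, $(D \setminus \{d_i\}) \cup \{v, w\}$ and $C_i$, each of size $k$. These two observations drive all three items.

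For~(\ref{lem:keeper properties:forbidden}), the structural observations above bound the maximum clique through any $K$-vertex by $k+1$, ruling out $K_{k+2}$. For induced holes, any $c \in C_i$ has a clique neighborhood, so its two hole-neighbors would be adjacent, a contradiction. If the hole contains $d_i$, its two hole-neighbors must lie one in each side of the two-clique decomposition of $N(d_i)$; but the one on the $C_i$ side, say $y$, has closed neighborhood $C_i \cup \{d_i\}$, so the next hole vertex after $y$ lies in $C_i$ and is therefore adjacent to $d_i$, a chord.

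For~(\ref{lem:keeper properties:colors}), Observation~\ref{obs:block coloring} applied to the twins of $C_i$ implies that $\rho$ restricts to a rainbow $k$-coloring of $C_i$; therefore, for every $i$ there is $c^* \in C_i$ with $\rho(c^*) = \rho(d_i)$. For each $u \in (D \setminus \{d_i\}) \cup \{v, w\}$ the triple $\{d_i, u, c^*\}$ is a maximal star, because $\{u, c^*\}$ is a maximal independent set of the two-clique graph $G[N(d_i)]$. Non-monochromaticity forces $\rho(u) \neq \rho(d_i)$, and this single observation simultaneously yields that $\rho(d_1), \ldots, \rho(d_{k-1})$ are pairwise distinct and all different from both $\rho(v)$ and $\rho(w)$, leaving a unique color in $\{1, \ldots, k\}$ available to both $v$ and $w$; hence $\rho(v) = \rho(w)$.

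For~(\ref{lem:keeper properties:extension}), after relabeling colors I may assume $\rho(v) = \rho(w) = 1$, and I extend by setting $\rho(d_i) = i+1$ and giving each $C_i$ a rainbow coloring with the full palette. Every maximal star of $G$ that contains a $K$-vertex is centered at a vertex of $K \cup \{v, w\}$, since no other vertex has a neighbor in $K$, and it suffices to check: twin pairs inside $C_i$ (rainbow by construction), triples $\{d_i, u, c\}$ centered at $d_i$ (the color $i+1$ differs from $1$ or $j+1$ depending on whether $u \in \{v, w\}$ or $u = d_j$), and maximal stars centered at $v$ or $w$ that touch $K$. The main obstacle lies in this last case, because adding $K$ can create or destroy maximal stars at $v$ and $w$ by introducing the new common neighbors in $D$; I will handle it by noting that any such star must contain exactly one vertex $d_i \in D$ (as $D$ is a clique inside $N_G(v)$ or $N_G(w)$), so it already spans both colors $1$ and $i+1$.
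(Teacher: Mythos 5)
Your proof is correct and follows essentially the same route as the paper's: Observation~\ref{obs:block coloring} forces a rainbow coloring on each block $C_i$, the maximal stars $\{d_i\}\{u,c\}$ with $u$ and $c$ taken from the two cliques partitioning $N(d_i)$ force $|\rho(D)|=k-1$ and $\rho(v),\rho(w)\notin\rho(D)$, and the extension assigns $\rho(D)=\{1,\dots,k\}\setminus\{\rho(v)\}$ with each $C_i$ rainbow. The only differences are cosmetic: you pick a single witness $c^*$ of color $\rho(d_i)$ rather than separate witnesses of colors $\rho(v)$ and $\rho(d_j)$, and you spell out the verifications for items (\ref{lem:keeper properties:forbidden}) and (\ref{lem:keeper properties:extension}) that the paper dismisses as trivial.
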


\begin{proof}
Let $C_1$, \ldots, $C_{k-1}$ and $D = \{d_1, \ldots, d_{k-1}\}$ be as in Definition~\ref{def:k-keeper connection}.  (\ref{lem:keeper properties:forbidden}) is trivial.  

(\ref{lem:keeper properties:colors}) Let $\rho$ be a star $k$-coloring of $G$ and fix $\range{i,j}{1}{k-1}$.  Since $C_i$ is a block of size $k$, it contains vertices $c_i, c_j, c_v$ with colors $\rho(d_i)$, $\rho(d_j)$, and $\rho(v)$, respectively, by Observation~\ref{obs:block coloring}.  Hence, $\rho(d_i) \neq \rho(v)$ since otherwise $\{d_i\}\{v, c_v\}$ would be a monochromatic maximal star, and, similarly, $\rho(d_i) \neq \rho(d_j)$ because $\{d_i\}\{d_j, c_j\}$ is not monochromatic. In other words, $|\rho(D)| = k-1$ and $\rho(v) \not\in \rho(D)$.  Replacing $v$ and $w$ in the reasoning above, we can conclude that $\rho(w) \not\in \rho(D)$ as well.  Therefore, $\rho(v) = \rho(w)$.

(\ref{lem:keeper properties:extension}) To extend $\rho$, set $\rho(D) = \{1, \ldots, k\} \setminus \{\rho(v)\}$, and $\rho(C_i) = \{1, \ldots, k\}$ for every \range{i}{1}{k-1}.  It is not hard to see that no maximal star with a vertex in $K$ is monochromatic.
\end{proof}

\begin{figure}
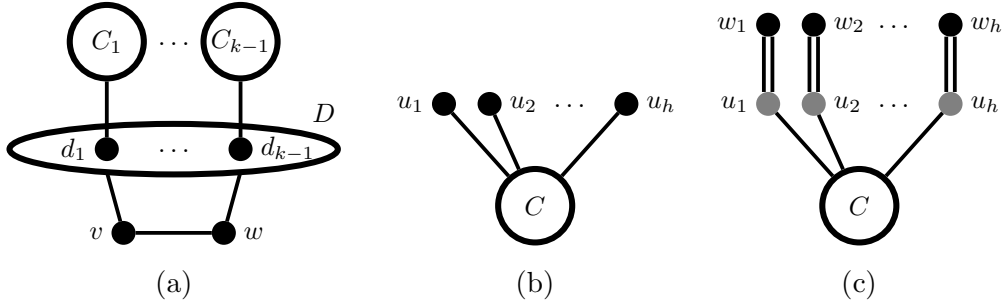

 \centering
 \begin{tabular}{ccc}
 \includegraphics{\imagenes k-keeper} & \includegraphics{\imagenes k-switcher} & \includegraphics{\imagenes long-k-switcher}\\
 (a) & (b) & (c)
 \end{tabular}
 \caption{(a) A $k$-keeper connecting $v,w$.  (b) A $k$-switcher connecting $\{u_1, \ldots, u_h\}$. (c) A long $k$-switcher connecting $\{w_1, \ldots, w_h\}$.  In every figure, gray nodes represent leafed vertices, circular shapes represent cliques, and double lines represent a $k$-keeper connecting two vertices.}\label{fig:keeper-switcher}
\end{figure}

The \Definition{switcher connection}, whose purpose is to force a set of vertices to have at least two colors, is now defined (see Figure~\ref{fig:keeper-switcher}~(b)).

\begin{defn}[$k$-switcher]\label{def:switcher}
  Let $G$ be a graph and $U = \{u_1, \ldots, u_h\}$ be an independent set of $G$ with $h \geq 2$.  Say that $C \subset V(G)$ is a \Definition{$k$-switcher connecting $U$} ($k \geq 2$) when $|C| = k$, $C \cup \{u_i\}$ is a clique for \range{i}{1}{h}, and there are no more edges incident to vertices in $C$.
\end{defn}

The following result is the analogous of Lemma~\ref{lem:keeper properties} for switchers.

\begin{lemma}\label{lem:switcher properties}
 Let $G$ be a graph and $C$ be a $k$-switcher connecting $U \subset V(G)$ ($k \geq 2$). Then,
\begin{enumerate}[(i)]
  \item $|\rho(U)| \geq 2$ for any star $k$-coloring $\rho$ of $G$, and \label{lem:switcher properties:colors}
  \item Any $k$-coloring $\rho$ of $G \setminus S$ in which $|\rho(U)| \geq 2$ can be extended into a $k$-coloring of $G$ in such a way that no monochromatic maximal star has its center in $C$.\label{lem:switcher properties:extension}
\end{enumerate}
\end{lemma}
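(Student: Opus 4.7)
The plan is to mimic the structure of the proof of Lemma~\ref{lem:keeper properties}, relying on Observation~\ref{obs:block coloring} together with the twin structure of $C$. From Definition~\ref{def:switcher}, every $c \in C$ satisfies $N[c] = C \cup U$, so the $k$ vertices of $C$ form a block of mutual twins; in particular any star $L$-coloring is forced to use all $k$ colors on $C$.

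For part~(\ref{lem:switcher properties:colors}) I would argue by contradiction. Assume $|\rho(U)|=1$ and let $a$ denote the common color of $U$. Because $|\rho(C)|=k$, there is a unique $c \in C$ with $\rho(c)=a$. The set $S = \{c\} \cup U$ induces a star centered at $c$, since $U$ is independent and $c$ is adjacent to every $u_i$; moreover $\rho(S)=\{a\}$. The decisive step is to verify that $S$ is in fact a \emph{maximal} star of $G$. Any extension $S' \supsetneq S$ with $G[S'] \cong K_{1,m'}$ must still be bipartite, and the presence of the non-adjacent pair $u_1,u_2 \in S$ (using $h \ge 2$) forces $c$ to remain the center of $S'$; then the only candidate for a new leaf lies in $N(c)\setminus U = C\setminus\{c\}$, but every such vertex is adjacent to each $u_i$ and therefore destroys the star structure. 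This contradiction proves $|\rho(U)| \ge 2$.

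For part~(\ref{lem:switcher properties:extension}) I would extend $\rho$ by assigning to $C$ the colors $\{1,\ldots,k\}$ through any bijection, and then enumerate the maximal stars of $G$ whose center lies in $C$. A leaf set of such a star is contained in $N(c)=(C\setminus\{c\})\cup U$, and because every vertex of $C\setminus\{c\}$ is adjacent to every vertex of $U$, the leaf set must sit entirely inside one of the two pieces. Leaf sets inside the clique $C\setminus\{c\}$ have at most one element, so the resulting maximal stars are edges $\{c,c'\}$, which are non-monochromatic since $\rho(c) \neq \rho(c')$. Leaf sets inside $U$ can always be enlarged while $U$ has more vertices, so the only maximal option here is $\{c\} \cup U$, whose color multiset contains $\rho(U)$ and therefore at least two distinct colors by hypothesis.

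The main obstacle is the maximality argument in (\ref{lem:switcher properties:colors}): it must be ruled out both that one can extend $\{c\}\cup U$ while keeping $c$ as center and that one can re-center the star at another vertex. Both cases reduce to the same observation that every $c' \in C\setminus\{c\}$ is adjacent to every $u_i \in U$, which is precisely what the switcher definition supplies. Everything else is book-keeping of the same flavor as in Lemma~\ref{lem:keeper properties}.
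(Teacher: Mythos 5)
Your proof is correct: the paper itself omits the argument (stating only that the lemma is ``the analogous of Lemma~\ref{lem:keeper properties} for switchers''), and your write-up supplies exactly the intended routine verification, using Observation~\ref{obs:block coloring} on the twin clique $C$ and the fact that each $c'\in C\setminus\{c\}$ is adjacent to all of $U$ to pin down the maximal stars centered in $C$ as the edges $\{c,c'\}$ and the star $\{c\}\cup U$. Nothing to add.
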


As defined, switchers are not useful for proving the hardness of the star coloring problem for $C_4$-free graphs.  The reason is that their connected vertices must have no common neighbors to avoid induced $C_4$'s, and our proof requires vertices with different colors and common neighbors.  To solve this problem we extend switchers into long switchers by combining them with keepers (see Figure~\ref{fig:keeper-switcher}~(c)).  We emphasize that the set of vertices connected by a long switcher need not be an independent set.

\begin{defn}[long $k$-switcher]\label{def:long switcher}
  Let $G$ be a graph, and $W = \{w_1, \ldots, w_h\}$ be a set of vertices of $G$ with $h \geq 2$.  Say that $S \subset V(G)$ is a \Definition{long $k$-switcher connecting $W$} ($k \geq 2$) when $S$ can be partitioned into an independent set of leafed vertices $U = \{u_1, \ldots, u_h\}$, a $k$-switcher $C$, and $k$-keepers $Q_1, \ldots, Q_h$ in such a way that $C$ connects $U$, $Q_i$ connects $w_i, u_i$ for \range{i}{1}{h}, and there are no more edges adjacent to vertices in $U$.
\end{defn}

The analogous of Lemma~\ref{lem:switcher properties} follows; its proof is a direct consequence of Observation~\ref{obs:leafed vertex} and Lemmas \ref{lem:keeper properties}~and~\ref{lem:switcher properties}.

\begin{lemma}\label{lem:long switcher properties}
 Let $G$ be a graph and $S$ be a long $k$-switcher connecting $W \subset V(G)$ ($k \geq 2$). Then,
\begin{enumerate}[(i)]
  \item no induced $C_4$ or $K_{k+2}$ of $G$ contains a vertex of $S$,
  \item $|\rho(W)| \geq 2$ for any star $k$-coloring $\rho$ of $G$, and \label{lem:long switcher properties:colors}
  \item Any $k$-coloring $\rho$ of $G \setminus S$ in which $|\rho(W)| \geq 2$ can be extended into a $k$-coloring of $G$ in such a way that no monochromatic maximal star has its center in $S$.\label{lem:long switcher properties:extension}
\end{enumerate}
\end{lemma}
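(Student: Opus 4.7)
The plan is to treat each piece of the partition $S = U \cup C \cup Q_1 \cup \ldots \cup Q_h$ separately, using Lemma \ref{lem:keeper properties} for the keepers, Lemma \ref{lem:switcher properties} for the switcher $C$, and Observation \ref{obs:leafed vertex} for the leafed vertices of $U$. The structural claim (i) is mostly a case analysis, whereas (ii) and (iii) reduce to direct compositions of the cited conclusions, as the remark preceding the statement advertises.

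For (i), Lemma \ref{lem:keeper properties}(i) already rules out induced $C_4$'s (as holes) and $K_{k+2}$'s through any vertex of a keeper $Q_i$, so only $C$ and $U$ remain. A vertex $c \in C$ has its neighbors in $C \cup U$ with $U$ independent, so any clique through $c$ is contained in some $C \cup \{u_i\}$ and has size at most $k+1$; a vertex $u_i \in U$ has its neighbors in $\{\ell_i\} \cup C \cup D_i \cup \{w_i\}$ (where $D_i$ is the core clique of $Q_i$), and since there are no edges between $C$ and $D_i \cup \{w_i\}$, the maximal cliques through $u_i$ are $C \cup \{u_i\}$ and $D_i \cup \{u_i, w_i\}$, both of size $k+1$. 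Induced $C_4$'s through $c$ or $u_i$ are then ruled out by a short case analysis that exploits the independence of $U$, the degree-one of $\ell_i$, the pairwise disjointness of the keepers, and the absence of edges between $C$ and $V(G) \setminus (C \cup U)$. This case-checking is the main technical obstacle of the lemma.

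For (ii), taking a star $k$-coloring $\rho$ of $G$, Lemma \ref{lem:keeper properties}(ii) gives $\rho(u_i) = \rho(w_i)$ for each $i$, while Lemma \ref{lem:switcher properties}(i) applied to $C$ and $U$ gives $|\rho(U)| \geq 2$, so that $|\rho(W)| \geq 2$. For (iii), given a $k$-coloring $\rho$ on $G \setminus S$ with $|\rho(W)| \geq 2$, I would first set $\rho(u_i) := \rho(w_i)$ to obtain $|\rho(U)| \geq 2$, then apply Lemma \ref{lem:switcher properties}(ii) to extend $\rho$ to $C$ and Lemma \ref{lem:keeper properties}(iii) to extend $\rho$ to each $Q_i$ (both extensions being local to their own part of $S$, and hence mutually non-interfering), and finally invoke Observation \ref{obs:leafed vertex} at each $u_i$ to color its leaf $\ell_i$ so that no monochromatic maximal star is centered at $u_i$.
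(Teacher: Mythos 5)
Your proposal is correct and follows exactly the route the paper intends: the paper omits the proof, stating only that the lemma "is a direct consequence of Observation~\ref{obs:leafed vertex} and Lemmas \ref{lem:keeper properties}~and~\ref{lem:switcher properties}," and your argument is precisely that composition, with the additional case analysis for part~(i) on the vertices of $C$ and $U$ correctly identified and sketched. No discrepancies with the paper's approach.
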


The last type of connection that we require is the cluster, which is used to represent variables of DNF formulas.  As switchers, clusters are used to connect several vertices at the same time.  Specifically, clusters require two sets $X$, $-X$, and a vertex $s$.  The purpose of the connection is to encode all the valuations of the variables in a formula $\phi$ by using monochromatic stars with center in $s$.  Thus, if a variable $\VP x$ is being represented by $X$ and $-X$, then each monochromatic maximal star with center in $s$ contains all the vertices in $X$ and none of $-X$, when $\VP x$ is true, or it contains all the vertices of $-X$ and none of $X$, when $\VP x$ is false.

\begin{defn}[$\ell$-cluster]\label{def:cluster}
  Let $G$ be a graph with vertices $s$, $X = \{x_1, \ldots, x_\ell\}$, and $-X = \{-x_1, \ldots, -x_\ell\}$ ($\ell \geq 2$).  Say that $K \subseteq V(G)$ is an \Definition{$\ell$-cluster connecting $\langle s, X, -X\rangle$} when $K$ has $\ell$ leafed vertices $v_1, \ldots, v_\ell$, vertex $s$ is adjacent to all the vertices in $X \cup -X \cup K$, sequence $x_1, -x_1, v_1, \ldots, x_\ell, -x_\ell, v_\ell$ is a hole, and there are no more edges incident to vertices in $K$.  For \range{i}{1}{\ell}, we write $X[i]$ and $-X[i]$ to refer to $x_i$ and $-x_i$, respectively.  
\end{defn}

Note that if $K$ is an $\ell$-cluster connecting $\langle s, X, -X\rangle$, then the subgraph induced by $K \cup X \cup -X \cup \{s\}$ is isomorphic to a $3\ell$-wheel that has $s$ as its universal vertex.  As mentioned, the main property of clusters is that its members can be colored in such a way that monochromatic stars represent valuations.

\begin{lemma}\label{lem:cluster properties}
 Let $G$ be a graph and $K$ be an $\ell$-cluster connecting $\langle s, X, -X\rangle$ for $\{s\} \cup X \cup -X \subseteq V(G)$.  Then, 
 \begin{enumerate}[(i)]
  \item if $N(x) \cap N(x') \subseteq K \cup \{s\}$ for every $x, x' \in X \cup -X$, then no induced $K_4$ or $C_4$ of $G$ contains a vertex in $K$,
  \item if $\{s\}S$ is a maximal star of $G$ with $S \cap K = \emptyset$, then $S \cap (X \cup -X)$ equals either $X$ or $-X$, and \label{lem:cluster properties:stars}
  \item any $k$-coloring $\rho$ of $G \setminus K$ ($k \geq 2$) can be extended into a $k$-coloring of $G$ in such a way that $\rho(s) \not\in \rho(K)$, and no monochromatic maximal star of $G$ has its center in $K$. \label{lem:cluster properties:extension}
 \end{enumerate}
\end{lemma}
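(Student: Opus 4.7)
The plan is to handle the three parts independently, leveraging in each case the explicit description $N_G(v_i) = \{s, -x_i, x_{i+1}, \ell_i\}$ forced by the cluster definition and the chordlessness of the $3\ell$-hole (here $\ell_i$ denotes the leaf attached to $v_i$ and indices are taken modulo $\ell$). In particular, $G[N(v_i)]$ is the path $(-x_i)\,s\,x_{i+1}$ plus the isolated leaf $\ell_i$, with only two maximal independent sets $\{s,\ell_i\}$ and $\{-x_i,x_{i+1},\ell_i\}$; these will play a central role in all three parts.

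For (i), a leaf $\ell_i \in K$ has $d_G(\ell_i)=1$ and so lies in no $K_4$ or $C_4$. For $u=v_i$, the graph $G[N(v_i)]$ described above contains no triangle, killing every $K_4$ through $v_i$. An induced $C_4$ through $v_i$ would require a non-adjacent pair of its neighbors sharing a common neighbor distinct from $v_i$; pairs involving $\ell_i$ are killed by $d(\ell_i)=1$, and for the remaining non-adjacent pair $\{-x_i,x_{i+1}\}$ the hypothesis $N(x)\cap N(x')\subseteq K\cup\{s\}$ (applied with $x\neq x'$) confines candidates to $K\cup\{s\}$. A direct case-check (using $s\sim v_i$ and $N(v_j)=\{s,-x_j,x_{j+1},\ell_j\}$ for every $j$) rules every one of them out.

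For (ii), I confront the maximality of $\{s\}S$ with each $v_i \in K$. Since $v_i\sim s$ and $v_i\notin S$ (as $S\cap K=\emptyset$), maximality forces some neighbor of $v_i$ to lie in $S$; but $\ell_i\notin S$ and $s$ is the center, so $\{-x_i,x_{i+1}\}\cap S\neq\emptyset$ for every $i$. Independence of $S$ together with the hole edge $x_i(-x_i)$ forbids both of $\{x_i,-x_i\}$ from lying in $S$. A cyclic propagation then closes the argument: once $-x_j\notin S$ for some $j$, one is forced through $x_{j+1}\in S,\ -x_{j+1}\notin S,\ x_{j+2}\in S,\ldots$ around the cycle and concludes $S\cap(X\cup -X)=X$; symmetrically the complementary case yields $S\cap(X\cup -X)=-X$.

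For (iii), the setup tells me that the maximal stars of $G$ centered at $v_i$ are precisely $\{v_i\}\{s,\ell_i\}$ and $\{v_i\}\{-x_i,x_{i+1},\ell_i\}$, both containing $\ell_i$; a leaf $\ell_i$ is never the center of a maximal star because $\{v_i\}\{\ell_i\}$ extends on the leaf side by any other neighbor of $v_i$. I extend $\rho$ by assigning each $v_i$ a color in $\{1,\dots,k\}\setminus\{\rho(s)\}$ and each $\ell_i$ a color in $\{1,\dots,k\}\setminus\{\rho(v_i)\}$ (also distinct from $\rho(s)$ where required), which is always available for $k\geq 2$. The first constraint delivers $\rho(s)\notin\rho(K)$, while the second breaks the monochromaticity of both maximal stars centered at $v_i$ since each of them contains $\ell_i$. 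The main obstacle, which conditions the entire proof, is the identification of \emph{all} maximal stars of $G$ with center in $K$ (not just those internal to the cluster); this reduces precisely to the $G[N(v_i)]$ analysis carried out for (i), after which parts (ii) and (iii) become combinatorial bookkeeping.
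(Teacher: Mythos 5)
Your proof is correct and follows essentially the same route as the paper's: part (i) via the explicit neighborhood $N(v_i)=\{s,-x_i,x_{i+1},\ell_i\}$ and the common-neighbor hypothesis for the pair $\{-x_i,x_{i+1}\}$, part (ii) via the cyclic propagation forced by the maximality of $\{s\}S$ at each $v_i$, and part (iii) by coloring each $v_i$ away from $\rho(s)$ and using the attached leaf (the paper's Observation~\ref{obs:leafed vertex}) to break every maximal star centered in $K$. You are somewhat more explicit than the paper (e.g., enumerating the two maximal independent sets of $G[N(v_i)]$), but there is no substantive difference.
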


\begin{proof}
  Let $K = \{v_1, \ldots, v_\ell\}$, $x_1, \ldots, x_{\ell}$, and $-x_1, \ldots, -x_{\ell}$ be as in Definition~\ref{def:cluster}.
  
  (i) For \range{i}{1}{\ell}, the non-leaf neighbors of $v_i$ are $s, -x_i, x_{i+1}$.  Since $-x_i$ and $x_{i+1}$ are not adjacent and $N(-x_i) \cap N(x_{i+1}) = \{v_i, s\}$, it follows that $v_i$ belongs to no induced $K_4$ or $C_4$ of $G$.  
  
  (ii) Suppose $S \cap K = \emptyset$.  If $x_i \in S$ for some \range{i}{1}{\ell}, then $-x_{i} \not\in S$ because $-x_i \in N(x_i)$.  Then, since $v_i \not\in S$, it follows that $x_{i+1}$ belongs to $S$.  Consequently, by induction on $i$, we obtain that $S \cap X$ equals either $X$ or $\emptyset$.  In the former case, $S \cap -X = \emptyset$ because every vertex in $-X$ has a neighbor in $X$.  In the latter case, $S \cap -X = -X$ because $S \cap K = \emptyset$ and at least one of $\{x_{i+1}, -x_i, v_i\}$ belongs to $S$ for every \range{i}{1}{\ell}.
  
  (iii)  Just extend $\rho$ so that $\rho(s) \not\in \rho(K)$, and color the leaves according to Observation~\ref{obs:leafed vertex}.
\end{proof}

\subsection{Hardness of the star-coloring problem}
\label{sec:general case:2-stp}

It is well known that a problem $P$ is \stp when the problem of authenticating a positive certificate of $P$ is $\coNP$~\cite{Papadimitriou1994}.  For the star $k$-coloring problem, a $k$-coloring of $G$ can be taken as a positive certificate.  Since it is $\coNP$ to authenticate that a $k$-coloring of $G$ is indeed a star coloring, we obtain that \stcol{k} is \stp.  The following theorem states the \stp-hardness of the problem.  

\begin{theorem}\label{thm:stcol-2}
 \stcol{k} is \stp-complete and it remains \stp-complete even when its input is restricted to $\{C_4, K_{k+2}\}$-free graphs.
\end{theorem}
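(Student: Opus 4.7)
Membership of \stcol{k} in \stp was already argued in the paragraph preceding the statement. For the hardness, the plan is to reduce from $\qsat{2}$, following the blueprint that Marx used for clique coloring. Given an instance $\Phi = (\exists\VEC{x}_1)(\forall\VEC{x}_2)\phi(\VEC{x}_1,\VEC{x}_2)$ with $\phi = \bigvee_k c_k$ in $3$-DNF, I will construct a $\{C_4, K_{k+2}\}$-free graph $G$ that admits a star $k$-coloring if and only if $\Phi$ is true. The guiding idea is to let the prover's coloring encode a valuation of $\VEC{x}_1$, let each potential monochromatic maximal star centered at a distinguished vertex $s$ encode a valuation of $\VEC{x}_2$, and design clause gadgets so that such a star is monochromatic precisely when $\phi$ is falsified.

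The construction centers around a vertex $s$. For each universal variable $\VP{y}_j \in \VEC{x}_2$ I attach a cluster connecting $\langle s, Y_j, -Y_j\rangle$ (Definition~\ref{def:cluster}) with $\ell_j$ taken large enough for the clause gadgets below; Lemma~\ref{lem:cluster properties}(\ref{lem:cluster properties:stars}) then guarantees that any maximal monochromatic star $\{s\}S$ whose $S$ avoids the cluster interiors must contain either $Y_j$ or $-Y_j$ entirely, so $S$ encodes a valuation $\sigma$ of $\VEC{x}_2$ via ``$\VP{y}_j=\top$ iff $Y_j\subseteq S$''. For each existential variable $\VP{x}_i\in\VEC{x}_1$ I attach a variable gadget that robustly encodes its truth value into the coloring, for instance two leafed vertices $t_i, f_i$ adjacent to $s$ and joined by a long $k$-switcher so that $\rho(t_i)\neq\rho(f_i)$, interpreting $\VP{x}_i=\top$ as $\rho(t_i)=\rho(s)$. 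For each clause $c_k = l_{k,1}\wedge l_{k,2}\wedge l_{k,3}$ I add a leafed clause vertex $v_k$ adjacent to $s$ and to a single representative of the \emph{negation} of each literal of $c_k$ (an element of $-Y_j$ if $l_{k,i}=\VP{y}_j$, of $Y_j$ if $l_{k,i}=\neg\VP{y}_j$, and the appropriate one of $t_i, f_i$ for the $\VEC{x}_1$ literals), using a fresh slot inside each $Y_j, -Y_j$ for different clauses. Auxiliary keeper and long-switcher structure forces $\rho(v_k)\neq\rho(s)$ in every star $k$-coloring.

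To verify the equivalence, consider a star $k$-coloring $\rho$ of $G$ and write $c = \rho(s)$. The extension clauses of Lemmas~\ref{lem:keeper properties}, \ref{lem:switcher properties}, \ref{lem:long switcher properties}, and \ref{lem:cluster properties} let us assume every gadget-interior vertex receives a colour different from $c$, and in particular $\rho(v_k)\neq c$ for every $k$. A hypothetical monochromatic maximal star $\{s\}S$ then has $S\subseteq N_c(s)$ and, by Lemma~\ref{lem:cluster properties}(\ref{lem:cluster properties:stars}), $S$ encodes a valuation $\sigma$ of $\VEC{x}_2$. Maximality of $S$ in $N(s)$ forces every $v_k$ (which lies in $N(s)\setminus S$) to be adjacent to $S$, and by the construction of $v_k$'s neighborhood this is equivalent to some literal of $c_k$ being false at the pair $(\VEC{x}_1, \sigma)$, where $\VEC{x}_1$ is read off from the variable gadgets. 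Hence such a star exists iff $\phi$ is falsified at $(\VEC{x}_1, \sigma)$ for some $\sigma$, so $G$ admits a star $k$-coloring iff some $\VEC{x}_1$ makes $\phi$ a tautology over $\VEC{x}_2$, i.e.\ iff $\Phi$ is true. The converse direction is handled by starting from a satisfying $\VEC{x}_1$, colouring $s$, the appropriate $t_i$'s and the remaining non-gadget vertices accordingly, and completing through the gadgets via the extension statements.

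The main obstacle is to verify that $G$ is actually $\{C_4, K_{k+2}\}$-free. The keeper, (long) switcher and cluster gadgets were crafted to avoid induced $C_4$'s and $K_{k+2}$'s internally (Lemmas~\ref{lem:keeper properties}(\ref{lem:keeper properties:forbidden}), \ref{lem:long switcher properties}, \ref{lem:cluster properties}); the remaining danger is that two clause vertices $v_k, v_{k'}$ share two common nonadjacent neighbors and thus form an induced $C_4$, which is precisely why each clause uses its own slot inside every cluster it touches, so that $\ell_j$ must be at least the number of clauses mentioning $\VP{y}_j$. Monochromatic maximal stars centered at non-$s$ vertices are handled uniformly by Observation~\ref{obs:leafed vertex} together with the extension parts of the gadget lemmas, since every non-$s$ vertex is either leafed or lies inside a gadget that carries its own extension guarantee. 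Combining these ingredients with the membership argument yields the desired \stp-completeness even when restricted to $\{C_4, K_{k+2}\}$-free graphs.
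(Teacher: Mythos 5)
Your overall strategy is the one the paper actually uses: reduce from \qsat{2}, attach clusters $\langle s, Y_j, -Y_j\rangle$ so that maximal stars centered at $s$ encode valuations of the universal variables, make each clause vertex adjacent to representatives of the \emph{negations} of its literals, and read the existential valuation off the coloring. However, there is a genuine gap in the backward direction (star $k$-coloring $\Rightarrow$ formula true): nothing in your construction forces the literal representatives to actually carry the color $c=\rho(s)$. Your ``iff'' step, ``such a star exists iff $\phi$ is falsified at $(\VEC x_1,\sigma)$ for some $\sigma$,'' silently assumes that for \emph{every} $\sigma$ the candidate set $S_\sigma$ (the union of the chosen $Y_j$ or $-Y_j$, etc.) is monochromatic of color $c$. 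An adversarial coloring can simply give one vertex of each $Y_j$ and one vertex of each $-Y_j$ a color different from $c$; then no maximal star centered at $s$ is ever monochromatic, all other stars are killed by the leaf/extension lemmas, and $G$ is star $k$-colorable even when the formula is false. The same problem afflicts your existential gadget for $k\geq 3$: the long switcher only forces $\rho(t_i)\neq\rho(f_i)$, so both may avoid the color $c$, making $\VP x_i$ behave as simultaneously true and false in the clause constraints.

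This is precisely what the paper's \emph{color vertices} $C=\{c_1,\dots,c_k\}$ are for, and your proposal omits them. The paper adds long $k$-switchers connecting $\{c_1,c_2\}$ and $\{c_q,w\}$ for every $q>2$ and every connection vertex $w\neq s$, which pins $|\rho(C)|=k$ and confines every connection vertex to the two colors $\{\rho(c_1),\rho(c_2)\}$; further switchers $\{c_2,y\}$ for all $y\in L_Y$ force the entire universal literal layer to color $\rho(c_1)$, and $\{c_2,t\}$ together with the keeper $s$--$t$ forces $\rho(s)=\rho(c_1)$ as well. Only then is every $S_\sigma$ guaranteed monochromatic, so that its non-maximality yields a satisfied clause for every $\sigma$. (The paper's sets $X_i,-X_i$ of $\ell$ literal copies tied together by long switchers serve the analogous purpose on the existential side.) Without this color-forcing layer your reduction is not sound, so the proof as written does not go through.
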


\begin{proof}
  We already know that $\stcol{k}$ belongs to \stp.  To prove its hardness, we show a polynomial time reduction from \qsat{2}.  That is, for any $3$-DNF formula $\phi(\VEC x, \VEC y)$ with $\ell \geq 2$ clauses $\VP P_1, \ldots, \VP P_\ell$, and $n+m$ variables $\VEC x= \VP x_1, \ldots, \VP x_n$, $\VEC y = \VP y_1, \ldots, \VP y_m$, we build a graph $G$ that admits a star $k$-coloring if and only if $(\exists\VEC x)(\forall \VEC y)\phi(\VEC x, \VEC y)$ is true.  For the sake of simplicity, in this proof we use $i$, $j$, $h$, and $q$ as indices that refer to values in $\{1, \ldots, n\}$, $\{1, \ldots, m\}$, $\{1, \ldots, \ell\}$, and $\{1, \ldots, k\}$.

  Graph $G$ can be divided into connection, inner, and leaf vertices.  \Definition{Connection} vertices are in turn divided into a set of \Definition{clause} vertices $P = \{p_1, \ldots, p_\ell\}$, a set of \Definition{$x$-vertices} $X = \{x_1, \ldots, x_n\}$, two sets of \Definition{$x$-literal} vertices $X_i$ and $-X_i$ with $\ell$ vertices each (for each $i$), two sets of \Definition{$y$-literal} vertices $Y_j$ and $-Y_j$ with $\ell$ vertices each (for each $j$), a set of \Definition{color} vertices $C = \{c_1, \ldots, c_k\}$, and two special vertices $s$ and $t$. Let $L_X = \bigcup_i(X_i \cup -X_i)$, and $L_Y = \bigcup_j^m(Y_j \cup -Y_j)$.  \Definition{Inner} vertices are those vertices included in switchers, keepers, and clusters of $G$.  Inner vertices and the edges between the connection vertices are given by the next rules.
\begin{description}
  \item [Edges:]  $s$ is adjacent to all the vertices in $P$, and if $\VP x_i$ (resp.\ $\overline{\VP x_i}$, $\VP y_j$, $\overline{\VP y_j}$) is a literal of $\VP P_h$, then $p_h$ is adjacent $-X_i[h]$ (resp.\ $X_i[h]$, $-Y_j[h]$, $Y_j[h]$).  

  \item [Keepers:] there is a $k$-keeper connecting $s$ with $t$.

  \item [Long switchers:] there are long $k$-switchers connecting $\{x_i, -X_i[h]\}$ and $\{X_i[h], -X_i[h]\}$ (for every $i,h$), $\{c_1, p_h\}$ (for every $h$), $\{c_2, y\}$ for every $y \in L_Y$, $\{c_q, w\}$ for every $q > 2$ and every connection vertex $w \neq s$, $\{c_1, c_2\}$, and $\{c_2,t\}$.

  \item [Variables:] there are $\ell$-clusters connecting $\langle s, X_i, -X_i\rangle$ and $\langle s, Y_j, -Y_j\rangle$.
\end{description}
  Finally, each connection vertex other than $s$ is leafed.  This ends up the construction of $G$, which can be easily computed from $\phi(\VEC x, \VEC y)$ in polynomial time. Figure~\ref{fig:coloring2} depicts a schema of the graph.

\begin{figure}
  \centering
  \includegraphics{\imagenes complexity-1-v2a}
  \caption{Schema of the graph obtained from $\phi$ in Theorem~\ref{thm:stcol-2}.  For the sake of simplicity, we omit $c_3, \ldots, c_k$ and the edges from $P$ to $L_X \cup L_Y$.  Circular shapes drawn with dashes represent independent sets; marked edges between two vertices represent $k$-switchers connecting them; circular shapes with marks represent sets of vertices pairwise connected by $k$-switchers; and squares represent sets of vertices.}\label{fig:coloring2}
\end{figure}

  Before dealing with the star $k$-coloring problem on $G$, we show that $G$ is $\{C_4, K_{k+2}\}$-free.  By statement~(\ref{lem:keeper properties:forbidden}) of Lemmas \ref{lem:keeper properties}, \ref{lem:long switcher properties}~and~\ref{lem:cluster properties}, it suffices to prove that the subgraph $H$ induced by the connection vertices is $\{K_{k+2}, C_4\}$-free.  For this, observe that any induced $C_4$ or $K_{k+2}$ must contain a vertex of $L_X \cup L_Y$ because $\{s\}P$ is an induced star of $H$ and $x_i$, $c_q$, and $t$ have degree at most $1$ in $H$.   Now, $X_i[h]$ has at most three neighbors in $H$, namely $-X_i[h]$, $s$, and maybe $p_h$.  Hence, since $N(-X_i[h]) = \{s, X_i[h]\}$ when $X_i[h]$ is adjacent to $p_h$, we obtain that $X_i[h]$ belongs to no induced $K_{k+2}$ nor $C_4$.  A similar analysis is enough to conclude no vertex of $L_X \cup L_Y$ belongs to an induced $K_{k+2}$ nor $C_4$, thus $H$ is $\{C_4, K_{k+2}\}$-free.

  Now we complete the proof by showing that $(\exists\VEC x)(\forall\VEC y)\phi(\VEC x, \VEC y)$ is true if and only if $G$ admits a star $k$-coloring.  Suppose first that $(\forall\VEC y)\phi(\VEC x, \VEC y)$ is true for some valuation $\nu:\VEC x \to \{0,1\}$, and define $\rho$ as the $k$-coloring of $G$ that is obtained by the following two steps.  First, set $\rho(c_q) = q$, $\rho(X_i) = \rho(x_i) = 2-\nu(\VP x_i)$, $\rho(-X_i) = 1 + \nu(\VP x_i)$, $\rho(p_k) = 2$, and $\rho(L_Y) = \rho(s) = \rho(t) = 1$.  Next, iteratively set $\rho$ for the leaves and inner vertices according to Observation~\ref{obs:leafed vertex}, and statement~(\ref{lem:keeper properties:extension}) of Lemmas~\ref{lem:keeper properties}, \ref{lem:long switcher properties},~and~\ref{lem:cluster properties}.  Observe that the second step is well defined because every pair of vertices connected by $k$-keepers have the same color, while every pair of vertices connected by long $k$-switchers have different colors.

  We claim that $\rho$ is a star $k$-coloring of $G$.  To see why, consider a maximal star $\{w\}S$ of $G$ and observe that $w$ cannot be a leaf.  If $w \neq s$, then $\{w\}S$ is not monochromatic by Observation~\ref{obs:leafed vertex} and Lemmas~\ref{lem:keeper properties}, \ref{lem:long switcher properties}~and~\ref{lem:cluster properties}.  Suppose, then, that $w = s$ and, moreover, that $S\setminus P$ is monochromatic.  Then, no $\ell$-cluster intersects $S$ by statement (\ref{lem:cluster properties:extension}) of Lemma~\ref{lem:cluster properties}.  Consequently, by statement (\ref{lem:cluster properties:stars}) of Lemma~\ref{lem:cluster properties}, $S \cap (X_i \cup -X_i)$ equals either $X_i$ or $-X_i$, while $S \cap (Y_j \cup -Y_j)$ equals either $Y_j$ or $-Y_j$ for every $i$ and every $j$.  Extend $\nu$ to include $\VEC y$ in its domain, so that $\nu(\VP y_j) = 1$ if and only if $Y_j \subseteq S$.  By hypothesis, $\nu(\phi(\VEC x, \VEC y)) = 1$, thus there is some clause $\VP P_h$ whose literals are all true according to $\nu$.  If $p_h$ has some neighbor in $-X_i$, then $\nu(\VP x_i) = 1$, thus $-X_i \not\subset S$ because $\rho(-X_i) = 2$.  Similarly, if $p_h$ has some neighbor in $X_i$ (resp.\ $-Y_j$, $Y_j$), then $X_i \not\subset S$ (resp.\ $-Y_j \not\subset S$, $Y_j\not\subset S$).  Therefore, since $P$ is an independent set and $S \subset \{t\} \cup L_X \cup L_Y \cup P$, we obtain that $p_k \in S$, thus $\{s\}S$ is not monochromatic.

  For the converse, let $\rho$ be a star $k$-coloring of $G$.  Since there is are long $k$-switchers connecting $\{c_1, c_2\}$ and $\{c_q, w\}$ for every $q > 2$ and every connection vertex $w \neq s$, we obtain that $|\rho(C)| = k$ and $\rho(w) \in \{\rho(c_1), \rho(c_2)\}$ by statement~(\ref{lem:long switcher properties:colors}) of Lemma~\ref{lem:long switcher properties}. Define $\nu\colon \VEC x + \VEC y \to \{0,1\}$ as any valuation in which $\nu(\VP x_i) = 1$ if and only if $\rho(x_i) = \rho(c_1)$.  Since $\nu(\VP y_j)$ can take any value from $\{0,1\}$, it is enough to prove that $\nu(\phi(\VEC x, \VEC y)) = 1$.  Let $V_X = \bigcup_i((X_i \mid \nu(\VP x_i) = 1) \cup (-X_i \mid \nu(\VP x_i) = 0))$ and $V_Y = \bigcup_j((Y_j \mid \nu(\VP y_j) = 1) \cup (-Y_j \mid \nu(\VP y_j) = 0))$.

  Let $S = V_X \cup V_Y \cup \{t\}$.  By construction, $S$ is an independent set, thus $\{s\}S$ is a star of $G$.  Recall that there are long $k$-switchers connecting $\{x_i, -X_i[h]\}$ and $\{X_i[h], -X_i[h]\}$.  Hence $\rho(x_i) = \rho(X_i)$ and $\rho(-X_i) \neq \rho(x_i)$ by statement~(\ref{lem:long switcher properties:colors}) of Lemma~\ref{lem:long switcher properties}.  This implies that $\rho(V_X) = \{1\}$.  Similarly, there are long $k$-switchers connecting $\{c_2, y\}$ for each $y \in L_Y$, thus $\rho(V_Y) \subset \rho(L_Y) = 1$ as well.  Finally, using Lemma~\ref{lem:keeper properties}, we obtain that $\rho(s) = \rho(t) = 1$ because there is a long $k$-switcher connecting $\{c_2,t\}$ and a $k$-keeper connecting $t, s$.  So, by hypothesis, $\{s\}S$ is not a maximal star, which implies that $S \cup \{w\}$ is also an independent set for some $w \in N(s)$.

  Since $t$ and $s$ have the same neighbors in the $k$-keeper $K$ connecting them, it follows that $w \not\in K$.  Similarly, all the vertices in a cluster are adjacent to at least one vertex of $V_X \cup V_Y$.  Finally, each vertex of $L_X \cup L_Y$ either belongs or has some neighbor in $V_X \cup V_Y$.  Consequently, $w = p_h$, i.e., $w$ represents some clause $\VP P_h$.  If $\VP x_i$ is a literal of $\VP P_h$, then $p_h$ is adjacent to $-X_i[h]$.  Hence, since $V_X \cup V_Y \cup \{p_h\}$ is an independent set, it follows that $-X_i[h] \not\in V_X$.  By the way $V_X$ is defined, this means that $\nu(\VP x_i) = 1$.  Similar arguments can be used to conclude that if $\VP l$ is a literal of $\VP P_h$, then $\nu(\VP l) = 1$.  That is, $\VP P_h$ is satisfied by $\nu$, thus $(\exists\VEC x)(\forall \VEC y)\phi(\VEC x, \VEC y)$ is true.
\end{proof}

\subsection{Graphs with no short holes and small forbidden subgraphs}
\label{sec:chordal}

Note that every hole $H$ of the graph $G$ defined in Theorem~\ref{thm:stcol-2} either 1. contains an edge $xy$ for vertices $x,y$ connected by a $k$-keeper, or 2.\ contains a path $x,v,-x$ for a vertex $v$ in a cluster $K$ connecting $\langle s, X, -X\rangle$ with $x \in X$ and $-x \in -X$.  A slight modification of $G$ can be used in the proof of Theorem~\ref{thm:stcol-2} so as to enlarge the hole $H$.  In case 1., $xy$ can be subdivided by inserting a vertex $z$ in such a way that $x,z$ and $z,y$ are connected by $k$-keepers.  Similarly, in case 2., dummy vertices not adjacent to any $p_h$ can be inserted into $X$ and $-X$ so as to increase the distance between $x$ and $-x$ in $H$.  Neither of these modifications generates a new hole in $G$.  Thus, in Theorem~\ref{thm:stcol-2} we can use a iterative modification of $G$ whose induced holes have length at least $h$.  The following corollary is then obtained.

\begin{corollary}
 For every $h \in O(1)$, \stcol{k} is \stp-complete when the input is restricted to $K_{k+2}$-free graphs whose induced holes have length at least $h$.
\end{corollary}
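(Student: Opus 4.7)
The plan is to modify the graph $G$ built in the proof of Theorem~\ref{thm:stcol-2} by a finite sequence of local operations that enlarge every induced hole while preserving both $K_{k+2}$-freeness and the equivalence with $(\exists\VEC x)(\forall\VEC y)\phi(\VEC x,\VEC y)$. As observed immediately before the statement, every induced hole of $G$ falls into one of two cases: (1) it uses an edge $xy$ whose endpoints are directly connected by a $k$-keeper, or (2) it uses a rim subpath $x,v,-x$ coming from a cluster. I would design one elementary modification per case and apply both in a single sweep.

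For case~(1) I use a \emph{keeper subdivision}: pick the offending $k$-keeper $K'$ connecting $x,y$, delete $K'$, and replace it by a chain $x,z_1,\ldots,z_r,y$ in which each consecutive pair is linked by a fresh $k$-keeper. By Lemma~\ref{lem:keeper properties} any star $k$-coloring must satisfy $\rho(x)=\rho(z_1)=\cdots=\rho(z_r)=\rho(y)$, and any star $k$-coloring of the rest of the graph with $\rho(x)=\rho(y)$ extends to the new chain without creating monochromatic maximal stars, so the bookkeeping of Theorem~\ref{thm:stcol-2} transcribes verbatim. Each hole that used $xy$ now uses a length-$(r+1)$ path, and no new hole is introduced because the only non-keeper, non-leaf neighbors of $z_i$ are $z_{i-1}$ and $z_{i+1}$, which by Lemma~\ref{lem:keeper properties}\,(\ref{lem:keeper properties:forbidden}) cannot be shortcut through the keeper interiors.

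For case~(2) I \emph{pad the clusters}: given the cluster connecting $\langle s, X_i, -X_i\rangle$, insert $r$ dummy literal pairs into $X_i$ and $-X_i$ whose sole role is to lengthen the rim. Each dummy is made adjacent to $s$ exactly like a regular rim vertex, each new inner $v$-vertex is leafed, no dummy is adjacent to any clause vertex $p_h$, and a long $k$-switcher is installed between every inserted pair $\{X_i^{\mathrm{dum}}, -X_i^{\mathrm{dum}}\}$ so that the coloring extensions of Theorem~\ref{thm:stcol-2} go through unchanged. The resulting object is an $(\ell+r)$-cluster, so Lemma~\ref{lem:cluster properties} applies literally; in particular part~(\ref{lem:cluster properties:stars}) still forces any monochromatic maximal star at $s$ that avoids the cluster interior to contain either the whole enlarged $X_i$ or the whole enlarged $-X_i$, because the inductive alternation argument along the rim is purely local and is blind to which rim vertices are adjacent to clause vertices. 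Since the dummies are disconnected from $P$, they contribute nothing to the clause-satisfaction step of Theorem~\ref{thm:stcol-2}.

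Choosing $r$ proportional to $h$ in both operations, a single sweep produces a graph whose every induced hole has length at least $h$; it is polynomial in $|\phi|$ because $h = O(1)$, remains $K_{k+2}$-free by the same vertex-by-vertex analysis as in Theorem~\ref{thm:stcol-2} combined with part~(\ref{lem:keeper properties:forbidden}) of Lemmas~\ref{lem:keeper properties},~\ref{lem:long switcher properties}, and~\ref{lem:cluster properties}, and admits a star $k$-coloring iff $(\exists\VEC x)(\forall\VEC y)\phi$ is true. The main technical obstacle, and the step I would work out carefully, is re-verifying Lemma~\ref{lem:cluster properties}\,(\ref{lem:cluster properties:stars}) on the padded cluster: one must check that the alternation $x_j\in S \Rightarrow x_{j+1}\in S$ still propagates through each newly inserted dummy pair and that the dummies neither create a chord inside the rim nor open a shortcut to the rest of $G$, so that the crucial ``all of $X_i$ or all of $-X_i$'' dichotomy survives the padding.
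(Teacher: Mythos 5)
Your proposal is correct and follows essentially the same route as the paper: the paper likewise observes that every induced hole of $G$ either contains an edge joining two keeper-connected vertices or a rim path $x,v,-x$ of a cluster, and then enlarges holes by subdividing such edges with intermediate vertices chained by $k$-keepers and by padding the clusters with dummy literal pairs not adjacent to any $p_h$. Your extra care in wiring the dummy pairs with long $k$-switchers and in re-checking Lemma~\ref{lem:cluster properties}\,(\ref{lem:cluster properties:stars}) on the padded cluster only makes explicit what the paper leaves implicit.
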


An interesting open question is what happens when $h$ grows to infinity, i.e., what is the complexity of star $k$-coloring a chordal graph or a chordal $K_{k+2}$-free graph.  In following sections we consider the star coloring and star choosability problems in some subclasses of chordal graphs, namely split, threshold, and block graphs.

Theorem~\ref{thm:stcol-2} also shows that the star $2$-coloring problem is hard for $\{K_4, C_4\}$-free graphs, which is a class of graphs defined by forbidding two small subgraphs.  Thus, another interesting question posed by Theorem~\ref{thm:stcol-2} is what happens when other small graphs are forbidden, so as to understand what structural properties can simplify the problem.  Following sections discuss the coloring problems from this perspective as well.  In particular, we study the problem for: every $H$-free graphs where $H$ has three vertices; a superclass of diamond-free graphs; and split and threshold graphs.  Before dealing with this restricted versions, we establish the complexity of the star $k$-choosability problem for $\{C_4, K_{k+2}\}$-free graphs.

\section{Complexity of the choosability problems}
\label{sec:choosability}
 
In this section we deal with the list version of the star and biclique-coloring problems.  The goal is to show that \stchose{k} and \bcchose{k} are \ptp-complete problems even when their inputs are restricted to $\{C_4, K_{k+2}\}$-free graphs.  Again, only one proof is required because the star and biclique-choosability problems coincide for $C_4$-free graphs.  In This opportunity, however, the proof is by induction on $k$.  That is, we first conclude that the \stchose{2} problem is \ptp-complete with a polynomial-time reduction from the \qsat{3}, and next we show that \stchose{k} can be reduced in polynomial time to the \stchose{(k+1)}.  The proof for $k = 2$ is similar to the proof of Theorem~\ref{thm:stcol-2}; however, we did not find an easy way to generalize it for $k > 2$ because long switchers generate graphs that are not star $k$-choosable.

\subsection{Keepers, switchers, clusters and forcers}

For the case $k = 2$ we require the keeper and cluster connections once again, and a new version of the long switcher.  We begin reviewing the main properties of keepers, switchers, and clusters with respect to the star choosability problem.

\begin{lemma}\label{lem:keeper choosability}
 Let $G$ be a graph, $L$ be a $k$-list assignment of $G$, and $K$ be a $k$-keeper connecting $v,w \in V(G)$ ($k \geq 2$).  Then, any $L$-coloring $\rho$ of $G \setminus (K \cup \{w\})$ can be extended into an $L$-coloring of $G$ in such a way that no monochromatic maximal star contains a vertex in $K$.
\end{lemma}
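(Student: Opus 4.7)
The plan is to extend $\rho$ by coloring the vertices of $K \cup \{w\}$ in three steps, mirroring the extension used for Lemma~\ref{lem:keeper properties}(iii) while exploiting the freedom that $\rho(w)$ is not yet fixed. Let $D = \{d_1, \ldots, d_{k-1}\}$ and $C_1, \ldots, C_{k-1}$ be as in Definition~\ref{def:k-keeper connection}.

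First, I would assign pairwise distinct colors $\rho(d_i) \in L(d_i) \setminus \{\rho(v)\}$ for $i = 1, \ldots, k - 1$. Each such list has size at least $k - 1 = |D|$, so Hall's condition is immediate---for any $S \subseteq D$ and any $d \in S$, $|N(S)| \geq |L(d) \setminus \{\rho(v)\}| \geq k - 1 \geq |S|$---and a system of distinct representatives exists. Second, I would pick $\rho(w) \in L(w) \setminus \{\rho(d_1), \ldots, \rho(d_{k-1})\}$, which is nonempty since $|L(w)| = k$ and at most $k - 1$ values are excluded. Third, for each $i$, the $k$ twin vertices of $C_i$ each have a $k$-list, and the same Hall argument yields pairwise distinct colors for $C_i$ respecting $L$ (which is in fact forced by Observation~\ref{obs:block coloring}).

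With these choices in place, $\rho(d_i)$ differs from $\rho(v)$, from $\rho(w)$, and from every $\rho(d_j)$ with $j \neq i$, so $\rho(u) \neq \rho(d_i)$ for every $u \in (D \setminus \{d_i\}) \cup \{v, w\}$. To complete the proof I would enumerate the maximal stars of $G$ containing a vertex of $K$: those centered at some $d_i$ have the form $\{d_i\}\{u, c\}$ with $u \in (D \setminus \{d_i\}) \cup \{v, w\}$ and $c \in C_i$, and are nonmonochromatic because $\rho(u) \neq \rho(d_i)$; those centered at some $c \in C_i$ reduce to edges $\{c\}\{c'\}$ inside $C_i$, handled by the third step; and any remaining maximal star through $K$ is centered at $v$ or at $w$ with some $d_i$ among its leaves, hence nonmonochromatic because $\rho(d_i) \notin \{\rho(v), \rho(w)\}$. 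Maximal stars centered outside $K \cup \{v, w\}$ cannot contain a vertex of $K$, since $K$'s vertices have no neighbors there.

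The crux is the order of the three steps. In the worst case $L(d_i) \setminus \{\rho(v)\}$ has size exactly $|D|$, so the first step succeeds only because Hall's condition is tight; coloring $D$ first leaves $\rho(w)$ with at least one surviving candidate in $L(w)$, whereas fixing $\rho(w)$ first could shrink every $L(d_i)$ by one additional color and break Hall's condition for $D$ whenever $\rho(v), \rho(w) \in L(d_i)$ for all $i$.
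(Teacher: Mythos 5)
Your proposal is correct and follows essentially the same route as the paper: the paper's proof simply asserts that $\rho$ can be extended so that $|\rho(D)\setminus\{\rho(v)\}|=k-1$, $\rho(w)\notin\rho(D)$, and $|\rho(C_i)|=k$, and that this kills all monochromatic maximal stars meeting $K$. You construct exactly that extension, additionally supplying the Hall/SDR justification, the enumeration of maximal stars through $K$, and the (correct) observation that $D$ must be colored before $w$ --- details the paper leaves implicit.
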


\begin{proof}
 Let $C_1, \ldots, C_{k-1}$ and $D = \{d_1, \ldots, d_{k-1}\}$ be the vertices of $K$ as in Definition~\ref{def:k-keeper connection}.  Extend $\rho$ into an $L$-coloring of $G$ such that $|\rho(D) \setminus \{\rho(v)\}| = k-1$, $|\rho(C_i)| = k$ for every \range{i}{1}{k-1}, and $\rho(w) \not\in \rho(D)$.  Since $L$ is a $k$-list assignment, such an extension can always be obtained.  Furthermore, no monochromatic maximal star has a vertex in $K$.
\end{proof}

\begin{lemma}\label{lem:switcher choosability}
 Let $G$ be a graph, $L$ be a $k$-list assignment of $G$, and $C$ be a $k$-switcher connecting $U \subset V(G)$ ($k \geq 2$).  Then, any $L$-coloring $\rho$ of $G \setminus C$ in which $|\rho(U)| \geq 2$ can be extended into an $L$-coloring of $G$ in such a way that no monochromatic maximal star has its center in $C$.
\end{lemma}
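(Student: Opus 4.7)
The plan is to classify every maximal star centered in $C$, observe that the hypothesis $|\rho(U)|\geq 2$ already takes care of all but one constraint, and then finish by list-coloring the $k$-clique $C$ using Hall's marriage theorem. First I would unpack the structure around a vertex $c\in C$: Definition~\ref{def:switcher} gives $N(c)=(C\setminus\{c\})\cup U$, where $C\setminus\{c\}$ is a clique and every $u\in U$ is adjacent to every vertex of $C$. The maximal independent subsets of $N(c)$ therefore come in two flavors: $U$ itself (the only maximal independent subset of $N(c)$ that avoids $C$), and, for each $c'\in C\setminus\{c\}$, the singleton $\{c'\}$ (since $c'$ dominates the rest of $C$ and is adjacent to every $u\in U$). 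Hence the maximal stars centered at $c$ are exactly $\{c\}U$ and the edges $\{c,c'\}$ with $c'\in C\setminus\{c\}$.

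Next, I would note that $\{c\}U$ is automatically non-monochromatic by the hypothesis $|\rho(U)|\geq 2$, regardless of $\rho(c)$, and $\{c,c'\}$ is non-monochromatic precisely when $\rho(c)\neq\rho(c')$. So the only remaining task is to extend $\rho$ to $C$ by choosing $\rho(c)\in L(c)$ for every $c\in C$ so that the resulting $k$ values are pairwise distinct.

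Finally, I would invoke Hall's marriage theorem on the bipartite graph that joins each $c\in C$ to the colors in $L(c)$. For every $j$-subset $C'\subseteq C$, the union $\bigcup_{c\in C'}L(c)$ already contains at least $k\geq j$ colors because every individual list has size $k$ and $j\leq|C|=k$; hence a system of distinct representatives exists, and any such SDR yields the required extension. Since the vertices of $C$ have no neighbors outside $C\cup U$ by Definition~\ref{def:switcher}, the colors just assigned do not interfere with any maximal star centered outside $C$, so the extension is indeed an $L$-coloring of $G$ with no monochromatic maximal star centered in $C$. I expect no serious obstacle; the only delicate point is the exhaustive enumeration of maximal stars, and that becomes transparent once the neighborhood structure is laid out.
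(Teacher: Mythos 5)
Your proposal is correct; the paper in fact states Lemma~\ref{lem:switcher choosability} without proof, and your argument --- enumerating the maximal stars centered in $C$ as $\{c\}U$ (killed by $|\rho(U)|\geq 2$) together with the twin pairs $\{c,c'\}$ inside $C$, and then assigning the $k$ vertices of $C$ pairwise distinct colors from their lists --- is exactly the routine verification the authors intend. The only remark worth making is that Hall's theorem is heavier than needed: since $|L(c)|=k$ and at most $k-1$ colors are already used on $C$ when a vertex $c$ is reached, a greedy choice already produces the required system of distinct representatives.
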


By the previous lemma, if $C$ is a $k$-switcher connecting $U$, then $G[U \cup C]$ is star $k$-choosable.  This property does not hold for long switchers because it is no longer true that a $k$-keeper connects vertices of the same color.  So, to avoid induced $C_4$'s, we need a new version of the long switcher.  This new switcher is defined only for $2$-colorings, and it is star $2$-choosable as desired.  We refer to this switcher as the \Definition{list switcher}.  In short, the difference between the long $2$-switcher and the list switcher is that the latter has no leafed vertices and $2$-keepers are replaced by edges (see Figure~\ref{fig:switcher-forcer}~(a)).  Its definition is as follows.

\begin{defn}[list switcher]\label{def:list switcher}
  Let $G$ be a graph, and $W = \{w_1, \ldots, w_h\}$ be a set of vertices of $G$ with $h \geq 2$.  Say that $S \subset V(G)$ is a \Definition{list switcher connecting $W$} when $S$ can be partitioned into an independent set $U = \{u_1, \ldots, u_h\}$ and a $2$-switcher $C$ in such a way that $C$ connects $U$, $w_iu_i \in E(G)$ for \range{i}{1}{h}, and there are no more edges adjacent to vertices in $U$.
\end{defn}

The following lemma is equivalent to Lemma~\ref{lem:long switcher properties} for list switchers.

\begin{lemma}\label{lem:list switcher properties}
 Let $G$ be a graph and $S$ be a list switcher connecting $W \subset V(G)$.  Then,
\begin{enumerate}[(i)]
  \item no induced $C_4$ or $K_4$ of $G$ contains a vertex of $S$,
  \item $|\rho(W)| \geq 2$ for any star $k$-coloring $\rho$ of $G$, and \label{lem:list switcher properties:colors}
  \item for any $2$-list assignment $L$ of $G$, every $L$-coloring $\rho$ of $G \setminus S$ in which $|\rho(W)| \geq 2$ can be extended into an $L$-coloring of $G$ in such a way that no monochromatic maximal star contains a vertex in $S$.\label{lem:list switcher properties:extension}
\end{enumerate}
\end{lemma}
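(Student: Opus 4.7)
The plan is to prove (i)--(iii) in order, exploiting the very restricted local structure of a list switcher: the only neighbours of $u_i$ in $G$ are $\{w_i,c_1,c_2\}$, the only neighbours of $c_r$ in $G$ are $U\cup\{c_{3-r}\}$, and consequently $N(u_i)\cap N(w_i)=\emptyset$.

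For (i), I would check that no induced $K_4$ or $C_4$ meets $S$ by a short case analysis on the two neighborhoods above. Through $u_i$: no $K_4$ because $w_i$ is not adjacent to either $c_r$, and no $C_4$ because each non-adjacent pair $\{w_i,c_r\}$ in $N_G(u_i)$ has no common neighbour outside $\{u_i\}$. Through $c_r$: no $K_4$ because $U$ is independent, and no $C_4$ because the only non-adjacent pairs in $N_G(c_r)$ are pairs inside $U$ whose common neighbours all lie in $C\subseteq N[c_r]$.

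For (ii) I would argue by contradiction, assuming $\rho(w_i)=a$ for every $i$. Lemma~\ref{lem:switcher properties}(i) applied to the $2$-switcher $C$ gives $|\rho(U)|\geq 2$; with only two colors available there is some $u_i$ with $\rho(u_i)=a$. Each of the two maximal stars $\{u_i,w_i,c_r\}$ must then contain the color $3-a$, forcing $\rho(c_1)=\rho(c_2)=3-a$. But $\{c_1,c_2\}$ is itself a maximal star, since $N(c_1)\setminus\{c_2\}=U=N(c_2)\setminus\{c_1\}$ and any $u_j$ we try to add is adjacent to both $c_1$ and $c_2$, so no extension is possible on either side; its monochromaticity contradicts $\rho$ being a star coloring.

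For (iii) I would extend $\rho$ to $S$ in two stages. First, choose $\rho(u_i)\in L(u_i)\setminus\{\rho(w_i)\}$ for each $i$, nonempty because $|L(u_i)|=2$. Since every maximal star of $G$ centered at $w_i$ must contain $u_i$ (from $N(u_i)\cap N(w_i)=\emptyset$), this choice makes every maximal star through $u_i$ or through $w_i$ non-monochromatic. Second, once $|\rho(U)|\geq 2$ is secured, I would apply Lemma~\ref{lem:switcher choosability} to the $2$-switcher $C$ to extend the coloring to $C$ without creating any monochromatic maximal star centered there.

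The main obstacle is maintaining $|\rho(U)|\geq 2$ under the constraints $\rho(u_i)\neq\rho(w_i)$. Using $|\rho(W)|\geq 2$ the generic case is immediate, but the delicate case is when $L(u_i)=\{\rho(w_i),c\}$ for a common color $c$ and every $i$, which forces $\rho(U)=\{c\}$. Here I would relax the constraint at a single index $i^*$ with $\rho(w_{i^*})\neq\rho(w_j)$ for some $j$, setting $\rho(u_{i^*})=\rho(w_{i^*})$ to enlarge $\rho(U)$, and then picking $\rho(c_r)\in L(c_r)\setminus\{\rho(w_{i^*})\}$ so that each $\{u_{i^*},w_{i^*},c_r\}$ is non-monochromatic while arranging $\rho(c_1)\neq\rho(c_2)$ with the remaining freedom in the two lists. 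Verifying that this case distinction is exhaustive, and that the extra maximal stars touching $S$ are all defused, is where the proof's bookkeeping is concentrated.
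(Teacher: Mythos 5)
Parts (i) and (ii) of your argument are correct, and here you actually do more than the paper, whose proof begins ``we only prove (iii)'' and leaves (i) and (ii) to the reader; your reduction of (ii) to the monochromatic maximal star $\{c_1\}\{c_2\}$ is a clean way to close that case. One caveat: your proof of (ii) silently assumes there are only two colors (``with only two colors available there is some $u_i$ with $\rho(u_i)=a$''). That is in fact the only regime in which (ii) holds --- with three colors one can set $\rho(C)=\{1,2\}$, $\rho(U)=\{3\}$, $\rho(W)=\{1\}$ and violate nothing --- and it is the only regime the paper uses, but the restriction to $k=2$ should be said explicitly.

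The genuine gap is in the delicate case of (iii). After you relax the constraint at $i^*$ and set $\rho(u_{i^*})=\rho(w_{i^*})$, you need simultaneously $\rho(c_1)\neq\rho(w_{i^*})$ and $\rho(c_2)\neq\rho(w_{i^*})$ (to save the stars $\{u_{i^*}\}\{w_{i^*},c_r\}$) \emph{and} $\rho(c_1)\neq\rho(c_2)$ (since $\{c_1\}\{c_2\}$ is a maximal star). When $L(c_1)\setminus\{\rho(w_{i^*})\}$ and $L(c_2)\setminus\{\rho(w_{i^*})\}$ are the same singleton there is no ``remaining freedom in the two lists'': both $c_r$ are forced to the same color. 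Concretely, take $W=\{w_1,w_2\}$ with $\rho(w_1)=1$, $\rho(w_2)=2$, $L(u_1)=\{1,3\}$, $L(u_2)=\{2,3\}$, $L(c_1)=L(c_2)=\{1,2\}$: your fallback fails for $i^*=1$ and for $i^*=2$, and the only viable extension is the one you discarded, namely $\rho(U)=\{3\}$ with $\rho(C)=\{1,2\}$ --- so ``keep $|\rho(U)|\geq 2$ at all costs'' is itself the wrong invariant, and the case analysis must instead be organized around the pair $(L(c_1),L(c_2))$ jointly with the $L(u_i)$. The paper's proof attacks this by coloring the switcher pair \emph{first}, demanding $\rho(c_1)\neq\rho(w_1)$, $\rho(c_2)\neq\rho(w_2)$, $\rho(c_1)\neq\rho(c_2)$, and only then choosing the $u_i$; this makes the hard configurations visible at the level of $L(c_1),L(c_2)$, though its own three-way case split is also not exhaustive as written. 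Finally, note that your move $\rho(u_{i^*})=\rho(w_{i^*})$ gives up on the conclusion as literally stated (``no monochromatic maximal star \emph{contains} a vertex in $S$''): every maximal star centered at $w_{i^*}$ contains $u_{i^*}$, and its remaining leaves are colored by the given $\rho$, so such a star may come out monochromatic. What your argument (and the paper's) really delivers, and what Theorem~\ref{thm:stchose-2} actually uses, is the weaker conclusion that no monochromatic maximal star has its \emph{center} in $S$, the stars centered at the $w_i$ being handled there by their pendant leaves via Observation~\ref{obs:leafed vertex}.
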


\begin{proof}
 We only prove (\ref{lem:list switcher properties:extension}).  Let $w_1, \ldots, w_h$, and $u_1, \ldots, u_h$ be as in Definition~\ref{def:list switcher}, and $\{x, y\}$ be the $2$-switcher connecting $\{u_1, \ldots, u_h\}$.  Suppose, without loss of generality, that $\rho(w_1) \neq \rho(w_2)$, and observe that either $\rho(w_1) \not\in L(x)$ or $\rho(w_2) \not\in L(y)$ or $L(x) = L(y) = \{\rho(w_1),\rho(w_2)\}$.  In this setting, extend $\rho$ to include $x$ and $y$ in such a way that $\rho(x) \neq \rho(y)$, $\rho(x) \neq \rho(w_1)$, and $\rho(y) \neq \rho(w_2)$.  Following, extend $\rho$ into an $L$-coloring of $G$ such that $\rho(u_1) \neq \rho(y)$, $\rho(u_2) \neq \rho(x)$, and $\rho(u_i) \neq \rho(w_i)$ for \range{i}{3}{h}.  It is not hard to see that no monochromatic maximal star contains a vertex in $S$.
\end{proof}

Finally, the proof of statement (\ref{lem:cluster properties:extension}) of Lemma~\ref{lem:cluster properties} implies the following lemma.

\begin{lemma}\label{lem:cluster choosability}
  Let $G$ be a graph, $L$ be a $2$-list assignment of $G$, and $K$ be an $\ell$-cluster connecting $\langle s, X, -X\rangle$ for $\{s\} \cup X \cup -X \subseteq V(G)$.  Then, any $L$-coloring $\rho$ of $G \setminus K$ can be extended into an $L$-coloring of $G$ in such a way that $\rho(s) \not\in \rho(K)$, and no monochromatic maximal star of $G$ has its center in $K$.
\end{lemma}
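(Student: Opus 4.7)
The plan is to follow the construction used in the proof of statement~(\ref{lem:cluster properties:extension}) of Lemma~\ref{lem:cluster properties}, observing that the only place in that argument where we actively forbid a color is at the cluster vertices themselves, and that the $2$-list hypothesis always leaves room for such a choice. Write $K = \{v_1, \ldots, v_\ell\}$ as in Definition~\ref{def:cluster} and recall the key structural facts: the vertices of $K$ are pairwise non-adjacent (they lie at the positions of the form $v_i$ in the hole $x_1, -x_1, v_1, \ldots, x_\ell, -x_\ell, v_\ell$), each $v_i$ is leafed, and the non-leaf neighbors of $v_i$ in $G$ are exactly $s$, $-x_i$, and $x_{i+1}$ (indices modulo $\ell$). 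In particular, the cluster vertices may be colored independently of one another.

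First I would extend $\rho$ to $K$ by picking, for each \range{i}{1}{\ell}, a color $\rho(v_i) \in L(v_i) \setminus \{\rho(s)\}$. Since $|L(v_i)| = 2$, this set is non-empty, so a valid choice always exists; this immediately guarantees $\rho(s) \notin \rho(K)$. Next, for each $v_i$ and each leaf $l$ adjacent to $v_i$, I would apply Observation~\ref{obs:leafed vertex} to pick $\rho(l) \in L(l)$ in such a way that no maximal star of $G$ centered at $v_i$ is monochromatic. Because distinct cluster vertices have disjoint private leaves and the $v_i$'s are pairwise non-adjacent, these extensions are independent and do not interfere with one another.

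Finally I would check the conclusion: the only maximal stars of $G$ with center in $K$ are those centered at some $v_i$, and by the second step each such star fails to be monochromatic, while $\rho(s)\notin \rho(K)$ is built into the first step. The main (and essentially only) subtlety is that the forced constraint $\rho(v_i)\neq \rho(s)$ must always be satisfiable; this is exactly what the $2$-list hypothesis delivers, and apart from this the verification is the same as for statement~(\ref{lem:cluster properties:extension}) of Lemma~\ref{lem:cluster properties}. Hence there is no real obstacle here: the lemma is a list-coloring adaptation of the earlier extension argument, made possible by the fact that only one color needs to be avoided on each cluster vertex.
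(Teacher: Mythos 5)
Your proposal is correct and matches the paper's argument: the paper proves this lemma by pointing back to the proof of statement~(\ref{lem:cluster properties:extension}) of Lemma~\ref{lem:cluster properties}, which likewise extends $\rho$ so that $\rho(s) \not\in \rho(K)$ (possible since each $2$-list leaves at least one color distinct from $\rho(s)$) and then colors the leaves via Observation~\ref{obs:leafed vertex}. You have simply made explicit the one point the paper leaves implicit, namely that the $2$-list hypothesis guarantees the choice $\rho(v_i) \in L(v_i) \setminus \{\rho(s)\}$ is always available.
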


\begin{figure}
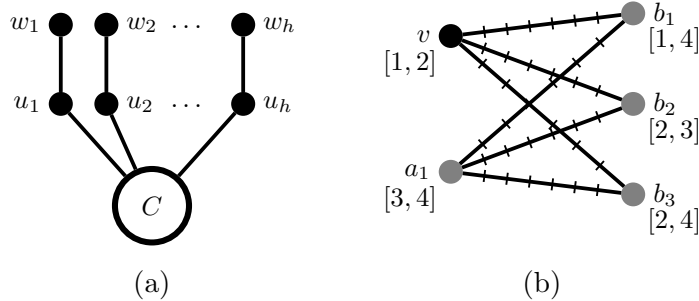

 \centering
 \begin{tabular}{c@{\hspace{1cm}}c}
  \includegraphics{\imagenes list-switcher} & \includegraphics{\imagenes 2-forcer} \\
  (a) & (b) 
 \end{tabular}
 \caption{(a) A list switcher connecting $\{u_1, \ldots, u_h\}$. (b) A $2$-forcer connecting $v$ with a $2$-list assignment $L$; note that $1$ is the unique $L$-admissible color for $v$.}\label{fig:switcher-forcer}
\end{figure}

Besides the $2$-keepers, list switchers, and clusters, we use forth kind of connection that can be used to force the color of a given vertex when an appropriate list assignment is chosen.  This connection is called the \emph{forcer} and, contrary to the other connections, it connects only one vertex.

\begin{defn}[$k$-forcer]\label{def:k-forcer connection}
  Let $G$ be a graph and $v \in V(G)$.  Say that $F \subset V(G)$ is a \emph{$k$-forcer connecting $v$} ($k \geq 2$) when $F$ can be partitioned into sets of leafed vertices $A$ and $B$, and cliques $C(a,b)$ for $a \in A \cup \{v\}$ and $b \in B$ in such a way that $|A| = k-1$, $|B|=k^{k}-1$, $C(a,b)$ is a $k$-switcher connecting $\{a,b\}$, and there are no more edges incident to vertices in $A \cup B$.
\end{defn}

Let $L$ be a $k$-list assignment of $G$ and $F$ be a $k$-forcer connecting $v \in V(G)$.  We say that $c \in L(v)$ is \emph{$L$-admissible for $v$} when there is an $L$-coloring $\rho$ of $G$ in which $\rho(v) = c$ and no monochromatic maximal star has its center in $F$.  Clearly, if $c$ is $L$-admissible for $v$, then any $L$-coloring $\rho$ of $G \setminus F$ in which $\rho(v) = c$ can be extended into an $L$-coloring of $G$ in such a way that no monochromatic maximal star has its center in $F$.  The main properties of forcers are summed up in the following lemma (see Figure~\ref{fig:switcher-forcer}~(b)).

\begin{lemma}\label{lem:forcer properties}
  Let $G$ be a graph and $F$ be a $k$-forcer connecting $v \in V(G)$ ($k \geq 2$).  Then, 
  \begin{enumerate}[(i)]
    \item no induced $K_{k+2}$ or $C_4$ of $G$ contains a vertex in $F$,\label{lem:forcer properties:forbidden}
    \item for every $k$-list assignment $L$ of $G$ there is an $L$-admissible color for $v$, and\label{lem:forcer properties:extension}
    \item every $k$-list assignment $L$ of $G \setminus F$ can be extended into a $k$-list assignment of $G$ in which $v$ has a unique $L$-admissible color.\label{lem:forcer properties:color}
  \end{enumerate}
\end{lemma}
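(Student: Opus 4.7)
The strategy divides naturally into three parts matching the three claims.

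For (i), the plan is to analyse the neighbourhood of each type of vertex in $F$. A vertex $c \in C(a,b)$ has $N(c) = (C(a,b) \setminus \{c\}) \cup \{a,b\}$, where $C(a,b)$ is a $k$-clique and $ab \notin E(G)$; any clique through $c$ lies in $C(a,b) \cup \{a\}$ or $C(a,b) \cup \{b\}$, both of size $k+1$, which precludes $K_{k+2}$. For an induced $C_4$ containing $c$, a common neighbour of a non-adjacent pair in $N[c]$ is forced into $C(a,b)$ and hence adjacent to $c$, a contradiction. For $a \in A$ (analogously for $b \in B$ and for $v$), $N(a)$ consists of a leaf together with the pairwise non-adjacent switcher cliques $C(a,b)$, so cliques through $a$ lie in some $C(a,b) \cup \{a\}$ of size $k+1$, and a common neighbour of vertices taken from distinct switcher cliques reduces to $\{a\}$, ruling out $C_4$. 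Leaves have degree $1$ and belong to no $K_4$ or $C_4$.

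For (ii), the plan is a counting argument on tuples. Fix a $k$-list assignment $L$. I consider the $k^k$ tuples $(c, r_1, \ldots, r_{k-1})$ with $c \in L(v)$ and $r_i \in L(a_i)$; each determines a set $T = \{c, r_1, \ldots, r_{k-1}\}$, which I call \emph{bad} when $T = L(b)$ for some $b \in B$ and \emph{good} otherwise. Since $L(b) \subseteq T$ requires $|T| = k$ and $L(b) = T$, at most $|B| = k^k - 1$ distinct $T$'s are bad. If $L(v), L(a_1), \ldots, L(a_{k-1})$ are pairwise disjoint, distinct tuples yield distinct $k$-element $T$'s, so the $k^k$ resulting sets cannot all be bad. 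Otherwise two of these lists share a colour, and repeating it in a tuple yields $|T| < k$, which is automatically good. From any good tuple I set $\rho(v) = c$, $\rho(a_i) = r_i$, and pick $\rho(b) \in L(b) \setminus T \neq \emptyset$ for each $b \in B$, ensuring $\rho(a) \neq \rho(b)$ for every $a \in A \cup \{v\}$ and every $b \in B$. Lemma~\ref{lem:switcher choosability} then extends the colouring over each switcher $C(a,b)$ without creating monochromatic maximal stars centred in $C(a,b)$, while Observation~\ref{obs:leafed vertex} handles the leaf colours so that no maximal star centred in $A \cup B$ becomes monochromatic. This certifies $c$ as $L$-admissible.

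For (iii), given $L$ restricted to $G \setminus F$, the plan is to extend $L$ to $F$ so as to force exactly one admissible colour in $L(v) = \{c_1, \ldots, c_k\}$. Choose $L(a_1), \ldots, L(a_{k-1})$ as pairwise disjoint $k$-sets of fresh colours, all disjoint from $L(v)$. Then every transversal $T = \{c_j\} \cup \{r_1, \ldots, r_{k-1}\}$ has $|T| = k$, and the transversals containing $c_j$ form a set $\mathcal{T}_j$ of size $k^{k-1}$. For each $j \in \{2, \ldots, k\}$ and each $T \in \mathcal{T}_j$, assign a distinct $b \in B$ the list $L(b) = T$; this uses exactly $(k-1)k^{k-1}$ vertices of $B$. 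Distribute the remaining $k^{k-1} - 1$ vertices of $B$ among lists in $\mathcal{T}_1$, leaving at least one transversal $T^{*} \in \mathcal{T}_1$ uncovered. Assign switcher-clique and leaf lists to arbitrary $k$-subsets (for instance $\{1, \ldots, k\}$) so that the switcher and leaf extensions of (ii) remain available. Under this extension, every tuple with $c = c_j$ ($j \geq 2$) is bad, blocking admissibility of $c_j$; applying the construction of (ii) with the good tuple $T^{*}$ instead yields admissibility of $c_1$; thus $c_1$ is the unique $L$-admissible colour.

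The principal technical subtlety lies in the numerology: $|B| = k^k - 1$ is tuned so that in (ii) the $k^k$ distinct transversals cannot all be bad, while in (iii) the $(k-1)k^{k-1}$ transversals outside $\mathcal{T}_1$ can all be blocked by vertices of $B$ and the remaining $k^{k-1} - 1$ vertices of $B$ suffice to fill $\mathcal{T}_1$ short of exhausting it, leaving precisely the single admissible colour.
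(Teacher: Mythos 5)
Parts (i) and (ii) of your argument are correct: the neighbourhood analysis for (i) is sound, and your transversal-counting argument for (ii) is a self-contained proof of the fact (which the paper instead cites from Marx) that the complete bipartite graph with parts $A \cup \{v\}$ and $B$, i.e.\ $K_{k,k^k-1}$, is vertex $k$-choosable with the induced lists; the subsequent extension over switchers and leaves is handled correctly.

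The gap is in (iii), at the sentence where you assign the switcher-clique lists ``to arbitrary $k$-subsets (for instance $\{1,\ldots,k\}$)''. A bad transversal $T = L(b)$ blocks the colour $c_j$ only if \emph{every} $L$-colouring realizing $T$ on $A^* = A \cup \{v\}$ creates a monochromatic maximal star centred in $F$. The only candidate stars are $\{c\}\{a,b\}$ for $c \in C(a,b)$: stars centred at $a$ or $b$ contain a leaf, which can always be coloured differently from its neighbour, and the stars $\{c\}\{c'\}$ inside a switcher are already handled by the block condition. Such a star $\{c\}\{a,b\}$ is monochromatic only if some vertex of $C(a,b)$ actually receives the colour $\rho(a) = \rho(b)$, and with $L(C(a,b)) = \{1,\ldots,k\}$ disjoint from your fresh colours this never happens: a colouring with $\rho(v) = c_j$, $j \geq 2$, can give every $C(a,b)$ all $k$ colours of its own list and every leaf a colour different from its neighbour, producing no monochromatic maximal star centred in $F$, so $c_j$ remains admissible and uniqueness fails. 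The fix, which is what the paper does, is to set $L(C(a,b)) = L(b)$: then $C(a,b)$ is a block whose $k$ vertices realize all of $L(b)$ by Observation~\ref{obs:block coloring}, so whenever $\rho(a) = \rho(b)$ the star $\{c\}\{a,b\}$ is monochromatic for the vertex $c \in C(a,b)$ with $\rho(c) = \rho(b)$; hence any colouring with no monochromatic maximal star centred in $F$ satisfies $\rho(a) \neq \rho(b)$ for all $a \in A^*$ and $b \in B$, the realized transversal must be the unique uncovered one, and $\rho(v) = c_1$ is forced. You should also spell out this forcing step explicitly rather than asserting that bad tuples ``block'' admissibility.
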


\begin{proof}
 Let $A$, $B$, and $C(a,b)$ be as in Definition~\ref{def:k-forcer connection}, and define $A^* = A \cup \{v\}$.  Statement (\ref{lem:forcer properties:forbidden}) follows from the fact that no pair of vertices in $A^* \cup B$ have a common neighbor.
 
 (\ref{lem:forcer properties:extension}) Let $H$ be the complete bipartite graph with bipartition $\{h_a \mid a \in A^*\}\{h_b \mid b \in B\}$, and $M$ be a $k$-list assignment of $H$ where $M(h_a) = L(a)$ for every $a \in A^*\cup B$.  In~\cite{MarxTCS2011} it is proven that $H$ admits a vertex $M$-coloring $\gamma$.  Define $\rho$ as any $L$-coloring of $G$ in which $\rho(a) = \gamma(h_a)$ for $a \in A^* \cup B$, where $k$-switchers and leaves are colored according to Lemma~\ref{lem:switcher choosability} and Observation~\ref{obs:leafed vertex}, respectively.  The coloring of the $k$-switchers is possible because $\gamma$ is a vertex $M$-coloring.  By Observation~\ref{obs:leafed vertex}, no vertex in $A \cup B$ is the center of a maximal star, while by Lemma~\ref{lem:switcher choosability}, no vertex in $C(a,b)$ is the center of a maximal monochromatic star for $a \in A^*$ and $b \in B$.  That is, $\rho(v)$ is $L$-admissible for $v$.
 
 (\ref{lem:forcer properties:color})  Extend $L$ into a $k$-list assignment of $G$ such that 1.\ $L(a) \cap L(a') = \emptyset$ for every pair of vertices $a, a' \in A^*$, 2.\ $\mathcal{L}(B) = \{L(b) \mid b \in B\}$ is a family of different subsets included in $L(A^*)$ such that $|L(b) \cap L(a)| = 1$ for every $a \in A^*$ and $b \in B$, and 3.\ $L(C(a,b)) = L(b)$ for every $a \in A^*$, and $b \in B$.  Define $H$ and $M$ as in statement (\ref{lem:forcer properties:extension}).  By statement~(\ref{lem:forcer properties:extension}), there is an $L$-coloring $\rho$ of $G$ that contains no monochromatic maximal star with center in $F$.  Since $C(a,b)$ is a block of $G$, it follows that $\rho(C(a,b)) = L(b)$ by Observation~\ref{obs:block coloring}.  Then, since no maximal star with center in $C(a,b)$ is monochromatic, it follows that $\rho(a) \neq \rho(b)$ for every $a \in A^*$, $b \in B$.  Thus, if $\gamma$ is the coloring such that $\gamma(h_a) = \rho(a)$ for every $a \in A^*\cup B$, then $\gamma$ is a vertex $M$-coloring of $H$.  Consequently, as proven in~\cite{MarxTCS2011}, $\gamma(h_v) = \rho(v)$ is the unique color of $L(v)$ that belongs to the subset of $L(A^*) \not\in \mathcal{L}(B)$.
 \end{proof}

\subsection{Hardness of the star choosability problem}

A problem $P$ is \ptp when the problem of authenticating a negative certificate of $P$ is \stp~\cite{Papadimitriou1994}.  For the star $k$-choosability problem, a $k$-list assignment of $G$ can be taken as the negative certificate.  Using arguments similar to those in Section~\ref{sec:general case:2-stp} for star $k$-colorings, it is not hard to see that it is a \stp problem to authenticate whether a graph $G$ admits no $L$-colorings for a given $k$-list assignment $L$.  Therefore, \stchose{k} is \ptp.  In this section we establish the hardness of \stchose{k}.  For $k=2$ we reduce the complement of an instance of \qsat{3} into an instance of \stchose{2}.  Then, we proceed by induction showing how to reduce an instance of \stchose{k} into an instance of \stchose{(k+1)} for every $k \geq 2$.  

The proof for the case $k=2$ is, in some sense, an extension of Theorem~\ref{thm:stcol-2}.  The goal is to force the true literals of $z$ variables to have the same color as $s$, so that a monochromatic maximal star centered at $s$ appears when the formula is false.  

\begin{theorem}\label{thm:stchose-2}
 \stchose{2} is \ptp-hard, and it remains \ptp-hard even when its input is restricted to $\{C_4, K_4\}$-free graphs.
\end{theorem}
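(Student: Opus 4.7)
The plan is to reduce from the complement of \qsat{3}, which is $\ptp$-complete, to \stchose{2}. The reduction mimics Theorem~\ref{thm:stcol-2} with two key changes: every long $2$-switcher is replaced by a list switcher (Definition~\ref{def:list switcher}), since long $2$-switchers generate subgraphs that are not star $2$-choosable; and a third family of variables governed by $2$-forcers is added to model the outermost $\forall$. Given a $3$-DNF formula $\phi'(\VEC x_1,\VEC x_2,\VEC x_3)$ with $\ell$ clauses $\VP P_1,\ldots,\VP P_\ell$, the output is a $\{C_4,K_4\}$-free graph $G$ that is star $2$-choosable iff $(\forall\VEC x_1)(\exists\VEC x_2)(\forall\VEC x_3)\phi'$ is true. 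The quantifier correspondence between $\forall L\,\exists\rho\,\forall\text{(max.\ star)}$ and the three variable layers is: $L$ encodes $\VEC x_1$ via forcers, $\rho$ encodes $\VEC x_2$, and the candidate monochromatic maximal star encodes $\VEC x_3$ via clusters.

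The graph $G$ is built out of three variable gadgets. For every $\VP x_{1,i}$ we create a vertex $z_i$, literal sets $Z_i,-Z_i$, list switchers connecting $\{z_i,-Z_i[h]\}$ and $\{Z_i[h],-Z_i[h]\}$, a cluster $\langle s,Z_i,-Z_i\rangle$, and a $2$-forcer $F_i$ attached to $z_i$. For every $\VP x_{2,j}$ we create the same gadget \emph{without} the forcer (vertices $y_j,Y_j,-Y_j$). For every $\VP x_{3,k}$ we create only the cluster $\langle s,W_k,-W_k\rangle$ together with list switchers connecting each vertex of $W_k\cup -W_k$ to $c_2$. On top of these we add clause vertices $p_h$ adjacent to $s$ and to the ``wrong'' literal-set vertices of $\VP P_h$ (exactly as in Theorem~\ref{thm:stcol-2}), a $2$-keeper between $s$ and an auxiliary vertex $t$, color vertices $c_1,c_2$ tied by list switchers to the backbone, and a leaf adjacent to every connection vertex other than $s$. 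The $\{C_4,K_4\}$-freeness of $G$ follows from statement~(i) of Lemmas~\ref{lem:keeper properties}, \ref{lem:cluster properties}, \ref{lem:list switcher properties} and \ref{lem:forcer properties}, plus the same ad hoc check on the subgraph induced by the connection vertices used in Theorem~\ref{thm:stcol-2}.

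For the forward direction, fix a $2$-list assignment $L$. Lemma~\ref{lem:forcer properties}(ii) supplies an $L$-admissible color $c_i^*\in L(z_i)$ for each $i$; let $\nu(\VP x_{1,i})=1$ iff $c_i^*$ matches the color the list switchers give to $s$. The hypothesis yields a valuation $\nu'$ of $\VEC x_2$ such that $\forall\VEC x_3\,\phi'(\nu,\nu',\cdot)$ holds. Choose $\rho(y_j)$ to encode $\nu'$, color $s,t,c_1,c_2$ consistently, and extend through keepers, list switchers, clusters and forcers via Lemmas~\ref{lem:keeper choosability}, \ref{lem:list switcher properties}(iii), \ref{lem:cluster choosability} and the $L$-admissibility of each $c_i^*$. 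That $\rho$ is a star $L$-coloring is verified as in Theorem~\ref{thm:stcol-2}: any would-be monochromatic maximal star $\{s\}S$ determines, via the cluster lemma, a valuation $\nu''$ of $\VEC x_3$; the hypothesis supplies a clause $\VP P_h$ satisfied by $(\nu,\nu',\nu'')$, and its adjacencies force $p_h\in S$, a vertex coloured so as to break monochromaticity.

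For the converse, fix a valuation $\nu$ of $\VEC x_1$. Start from any $2$-list assignment on $G\setminus\bigcup_i F_i$ whose restriction to each $z_i$ encodes $\nu$; Lemma~\ref{lem:forcer properties}(iii) provides an extension to a $2$-list assignment $L$ of $G$ in which $\nu(\VP x_{1,i})$ is the unique $L$-admissible color of $z_i$. Any star $L$-coloring $\rho$, which exists by hypothesis, then has $\rho(z_i)$ coding $\nu$; set $\nu'(\VP x_{2,j})=1$ iff $\rho(y_j)=\rho(s)$. To verify $\forall\VEC x_3\,\phi'$, fix $\nu''$, assemble the independent set $S=V_Z\cup V_Y\cup V_W\cup\{t\}$ selected by $(\nu,\nu',\nu'')$, and repeat the argument of Theorem~\ref{thm:stcol-2}: under $\rho$, $\{s\}S$ is monochromatic, hence not maximal, hence extendable only by some $p_h$, whose adjacencies witness that $\VP P_h$ is satisfied by $(\nu,\nu',\nu'')$. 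The main obstacle is the joint design of the forcer lists needed by Lemma~\ref{lem:forcer properties}(iii) and the list switchers surrounding $z_i$: unlike long switchers, list switchers are intrinsically $2$-colouring gadgets and impose rigid pairings between the $2$-element lists at $z_i$, $Z_i[h]$, $-Z_i[h]$, so one must check that the list extension prescribed by the forcer lemma is compatible with the switcher/cluster mechanics without introducing a $C_4$ or $K_4$, and it is precisely this rigidity that prevents the present proof from generalising to $k\ge 3$ without the separate induction step of the following subsection.
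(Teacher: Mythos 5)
Your overall plan coincides with the paper's: reduce from the complement of \qsat{3}, let the list assignment encode the outer universal block via forcers, the coloring encode the existential block, and the candidate monochromatic star centered at $s$ encode the inner universal block via clusters, with list switchers replacing long $2$-switchers. The gaps are in the gadget-level design of the two universal layers, and the first one is fatal as written.

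For the outer universal variables you attach a \emph{single} $2$-forcer to a hub vertex $z_i$ and propagate its color to the literal sets through the list switchers $\{z_i,-Z_i[h]\}$ and $\{Z_i[h],-Z_i[h]\}$. In the forward direction $L$ is adversarial, and a chain of two ``different color'' constraints over $2$-element lists drawn from an unbounded palette does not propagate any equality: take the forcer lists so that $2$ is the unique $L$-admissible color for $z_i$ (so your $\nu(\VP x_{1,i})=0$), and set $L(-Z_i[h])=\{2,3\}$ and $L(Z_i[h])=\{1,3\}$ for every $h$. Extending through the switchers forces $\rho(-Z_i[h])=3$ and then $\rho(Z_i[h])=1=\rho(s)$ for every $h$, so $Z_i$ is monochromatic of the star color and the maximal star $\{s\}S$ may satisfy $S\cap(Z_i\cup -Z_i)=Z_i$, encoding $\VP x_{1,i}=1$ while the hypothesis was invoked with $\VP x_{1,i}=0$; no clause vertex is then guaranteed to enter $S$. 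This is precisely why the paper places a $2$-forcer on \emph{every} vertex of $Z_f$ and defines $\nu(\VP z_f)=1$ iff the admissible colors of \emph{all} of $Z_f$ equal $\rho(t)$ (an asymmetric definition that makes both cases of the verification go through); there is no hub vertex $z_f$ at all. A second, smaller gap is your treatment of the inner universal variables by list switchers to $c_2$, copied from the coloring proof: in the converse direction you must force every vertex of $W_k\cup -W_k$ to color $\rho(s)$ in \emph{every} star $L$-coloring of the designed $L$, and a list switcher only yields ``different from $\rho(c_2)$'', which is useless unless $\rho(c_2)$ is itself pinned by a forcer you have not included. The paper sidesteps this by putting a $2$-forcer on each vertex of $L_Y$ directly. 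Both layers should be repaired by moving the forcers onto the literal vertices themselves.
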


\begin{proof}
  The hardness of \stchose{2} is obtained by reducing the complement of \qsat{3}.  That is, given a $3$-DNF formula $\phi(\VEC z, \VEC x, \VEC y)$ with $\ell$ clauses $\VP P_1, \ldots, \VP P_\ell$, and $n+m+o$ variables $\VEC x= \VP x_1, \ldots, \VP x_n$, $\VEC y = \VP y_1, \ldots, \VP y_m$, $\VEC z = \VP z_1, \ldots, \VP z_o$, we build a graph $G$ that is $2$-list-choosable if and only if $(\forall\VEC z)(\exists \VEC x)(\forall \VEC y)\phi(\VEC z, \VEC x, \VEC y)$ is true.  For the sake of simplicity, in this proof we use $i$, $j$, $h$, and $f$ to refer to values in $\{1, \ldots, n\}$, $\{1, \ldots, m\}$, $\{1, \ldots, \ell\}$, and $\{1, \ldots, o\}$, respectively.  
  
  Graph $G$ is similar to the graph in Theorem~\ref{thm:stcol-2}.  Its vertex set is again divided into connection, inner, and leaf vertices. In turn, \emph{connection} vertices are divided into a set $P = \{p_1, \ldots, p_\ell\}$, a set $X = \{x_1, \ldots, x_n\}$, sets $X_i$, $-X_i$, $Y_j$, $-Y_j$, $Z_f$, and $-Z_f$ with $\ell$ vertices each, and two vertices $s,t$.  Let $L_X = \bigcup_i (X_i \cup (-X_i))$, $L_Y = \bigcup_j (Y_j \cup (-Y_j))$, and $L_Z = \bigcup_f (Z_f \cup (-Z_f))$.
  
  \emph{Inner} vertices are those vertices included in $2$-keepers, list switcher, clusters and $2$-forcers.  The following rules define inner vertices and the edges between connection vertices.  \textbf{Edges:} $s$ is adjacent to all the vertices in $P$, and if $\VP x_i$ (resp.\ $\overline{\VP x_i}$, $\VP y_j$, $\overline{\VP y_j}$, $\VP z_f$, $\overline{\VP z_f}$) is a literal of $\VP P_h$, then $p_h$ is adjacent to $-X_i[h]$ (resp.\ $X_i[h]$, $-Y_j[h]$, $Y_j[h]$, $Z_f[h]$, $-Z_f[h]$).  \textbf{Keepers:} $s$ and $t$ are connected by a $2$-keeper.  \textbf{List switchers:}  there are list switchers connecting \{$x_i$, $-X_i[h]$\} and \{$X_i[h]$, $-X_i[h]$\} (for every $i, h$), $\{Z_f[h], -Z_f[h]\}$ (for every $f, h$), and $\{p_h, t\}$ (for every $h$).  \textbf{Clusters:}  there are $\ell$-clusters connecting $\langle s, X_i, -X_i\rangle$, $\langle s, Y_j, -Y_j\rangle$, and $\langle s, Z_f, -Z_f\rangle$. \textbf{Forcers:} there are $2$-forcers connecting each vertex of $Z_f$ (for every $f$) and each vertex of $L_Y$.

  Finally, every connection vertex other than $s$ is leafed.  This ends up the construction of $G$ (see Figure~\ref{fig:choosing}), which can be easily computed from $\phi(\VEC z, \VEC x, \VEC y)$ in polynomial time.   Arguments similar to those in Theorem~\ref{thm:stcol-2} are enough to conclude that $G$ is $\{C_4, K_4\}$-free.  

\begin{figure}
  \centering
  \includegraphics{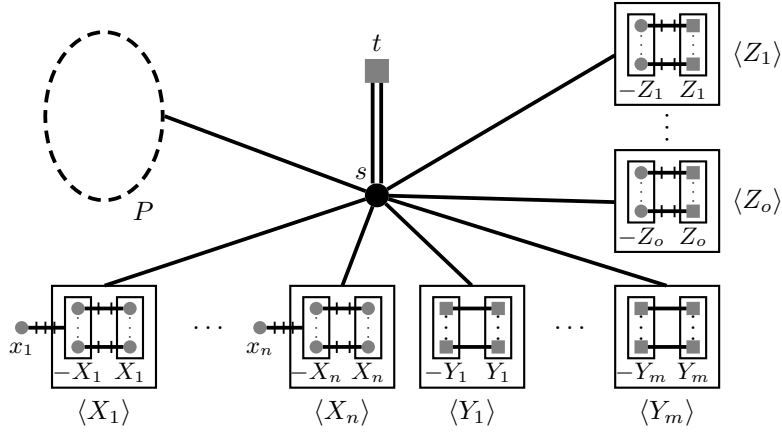}
  \caption{Schema of the graph obtained from $\phi$ in Theorem~\ref{thm:stchose-2}; for the sake of simplicity, we omit the edges from $P$ to $L_X \cup L_Y \cup L_Z$.  Square vertices represent vertices connected to a $2$-forcer.}\label{fig:choosing}
\end{figure}

  We now show that $(\forall\VEC z)(\exists\VEC x)(\forall\VEC y)\phi(\VEC z, \VEC x, \VEC y)$ is true if and only if $G$ is $2$-choosable.  We first show that if $(\forall\VEC z)(\exists\VEC x)(\forall\VEC y)\phi(\VEC z, \VEC x, \VEC y)$ is true, then $G$ admits a star $L$-coloring $\rho$ for any $2$-list assignment $L$.  The $L$-coloring $\rho$ is obtained by executing the following algorithm.
  \begin{description}
    \item[Step 1:] For every $w$ connected to a forcer, let $\rho(w)$ be $L$-admissible for $w$.  Such a color always exists by statement~(\ref{lem:forcer properties:extension}) of Lemma~\ref{lem:forcer properties}.  Suppose, w.l.o.g., that $\rho(t) = 1$ and let $\nu(\VEC z)$ be a valuation of $\VEC z$ such that $\nu(\VP z_f) = 1$ if and only if $\rho(Z_f) = \{1\}$.  
    
    \item[Step 2:] By hypothesis, $\nu$ can be extended to include $\VEC x$ so that $(\forall \VP y)\nu(\phi(\VEC z, \VEC x, \VEC y))$ is true.  If $L(x_i) \neq L(-X_i[h])$ or $L(X_i[h]) \neq L(-X_i[h])$ for some $h$, then:
    \begin{description}
      \item [Step 2.1:] Let $\rho(X_i[h]) \neq \rho(-X_i[h])$ in such a way that $\rho(X_i[h]) = 1$ if and only if $1 \in L(X_i[h])$ and $\nu(\VP x_i) = 1$, while $\rho(-X_i[h]) = 1$ if and only if $1 \in L(-X_i[h])$ and $\nu(\VP x_i) = 0$.
      \item [Step 2.2:] Let $\rho(x_i) \neq \rho(-X_i[h])$.
      \item [Step 2.3:] Let $\rho(-X_i[k]) \neq \rho(x_i)$ and $\rho(X_i[k]) \neq \rho(-X_i[k])$ for every $k \neq h$.
    \end{description}
    If $L(x_i) = L(-X_i) = L(X_i)$, then:
    \begin{description}
      \item [Step 2.4:] Let $\rho(x_i) = \rho(X_i) = 1$ if and only if $1 \in L(x_i)$ and $\nu(\VP x_i) = 1$, and $\rho(-X_i) \neq \rho(x_i)$.
    \end{description}
    Note that, whichever case gets executed, $\rho(x_i) \not\in \rho(-X_i)$ and $\rho(-X_i[h]) \neq \rho(X_i[h])$ for every $h$.
    
    \item[Step 3:] Let $\rho(-Z_f[h]) \neq \rho(Z_f[h])$ and $\rho(p_h) \neq 1$.  
    
    \item[Step 4:] Let $\rho$ for $s$, leaves, and inner vertices be as in Observation~\ref{obs:leafed vertex}, and Lemmas~\ref{lem:keeper choosability}, \ref{lem:list switcher properties}, \ref{lem:cluster choosability}~and~\ref{lem:forcer properties}.  Observe that this is always possible.  In particular, observe that every pair of vertices connected by a list switcher have different colors, while every vertex connected to a forcer has an $L$-admissible color.
  \end{description}

  We claim that $\rho$ is a star $L$-coloring of $G$.  Let $\{w\}S$ be any maximal star of $G$.  By Observation~\ref{obs:leafed vertex} and Lemmas~\ref{lem:keeper choosability}, \ref{lem:list switcher properties},  \ref{lem:cluster choosability}~and~\ref{lem:forcer properties}, $\{w\}S$ is not monochromatic when $w \neq s$.  Suppose, for the rest of the proof, that $w = s$ and $\{s\}(S \setminus P)$ is monochromatic.  By Lemma~\ref{lem:keeper choosability}, this implies that $t \in S$, thus $\rho(s) = \rho(t) = 1$.  Also, by Lemma~\ref{lem:cluster choosability}, $S$ intersects no $k$-cluster, thus $S \subseteq \{t\} \cup L_X \cup L_Y \cup L_Z \cup P$.  Moreover, by statement~(\ref{lem:cluster properties:stars}) of Lemma~\ref{lem:cluster properties}, $S \cap (X_i \cup -X_i)$ equals either $X_i$ or $-X_i$, $S \cap (Y_j \cup -Y_j)$ equals either $Y_j$ or $-Y_j$, and $S \cap (Z_f \cup -Z_f)$ is either $Z_f$ or $-Z_f$, for every $i$, $j$, and $f$.  Extend $\nu$ to $\VEC y$ so that $\nu(\VP y_j) = 1$ if and only if $Y_j \subset S$.  By hypothesis, $\nu(\phi(\VEC z, \VEC x, \VEC y)) = 1$, thus there is some clause $\VP P_h$ whose literals are all true according to $\nu$.  If $p_h$ has some neighbor in $-Y_j$, then $\nu(\VP y_j) = 1$, thus $Y_j \subset S$ and $-Y_j \cap S = \emptyset$.  If $p_h$ has some neighbor $-z_h \in -Z_f$, then $\nu(\VP z_f) = 1$ which means, by the way $\nu$ is defined for $\VEC z$ in Step~1, that $\rho(Z_f) = \{1\}$.  Consequently, by Step~3, $\rho(-z_h) \neq 1$, i.e., $-z_h \not\in S$.  Similarly, if $p_h$ has some neighbor in $Z_f$, then $\nu(\VP z_f) = 0$ which means that $\rho(Z_f) \neq \{1\}$.  Thus, there must exist at least one vertex $z_f \in Z_f$ with $\rho(z_f) \neq 1$.  Then, since $\rho(S) = 1$, it follows that $Z_f \not\subset S$.   Finally, if $p_h$ has a neighbor in $-X_i$, then $\nu(\VP x_i) = 1$, thus $\rho(-X_i[h]) \neq 1$ for some $h$ by either Step~2.1 or Step~2.4.  Hence, $\rho(-X_i) \neq \{1\}$, thus $-X_i \not\subset S$.  Analogously, $p_h$ has no neighbors in $X_i \cap S$.  Summing up, since $P \cup \{t\}$ is an independent set, it follows that $p_h$ has no neighbors in $S$, thus $p_h \in S$ and $\{s\}S$ is not monochromatic by Step~3.

  For the converse, suppose $G$ is star $2$-choosable, and consider any valuation $\nu$ of $\VEC z$.  Define $L$ to be a $2$-list assignment of $G$ so that $\nu(\VP z_f)$ is the unique color admissible for all the vertices in $Z_f$, $1$ is the unique color admissible for all the vertices in $L_Y \cup \{t\}$, and $L(w) = \{0,1\}$ for every vertex not connected to a $2$-forcer.  By statement~(\ref{lem:forcer properties:color}) of Lemma~\ref{lem:forcer properties}, such list assignment $L$ always exists.  Let $\rho$ be a star $L$-coloring of $G$ and extend $\nu$ to include $\VEC x + \VEC y$ in its domain so that $\nu(\VP x_i) = \rho(v_i)$.  Note that $\nu(\VP y_j)$ can take any value from $\{0,1\}$, so it is enough to prove that $\nu(\phi(\VEC x, \VEC y, \VEC z)) = 1$.  Define $V_X = \bigcup_i((X_i \mid \nu(\VP x_i) = 1) \cup (-X_i \mid \nu(\VP x_i) = 0))$, $V_Y = \bigcup_j((Y_j \mid \nu(\VP y_j) = 1) \cup (-Y_j \mid \nu(\VP y_j) = 0))$, and $V_Z = \bigcup_f((Z_f \mid \nu(\VP z_f) = 1) \cup (-Z_f \mid \nu(\VP z_f) = 0))$, and let $S = \{s\}(\{t\} \cup V_X \cup V_Y \cup V_Z)$.  As in Theorem~\ref{thm:stcol-2}, it can be observed that (i) $S$ is a monochromatic star and (ii) every vertex in $N(s) \setminus P$ is either adjacent or equal to a vertex in $S$.  Thus, since $\rho$ is a star $L$-coloring of $G$, there must be some vertex $p_h$ adjacent to no vertex in $V_X \cup V_Y \cup V_Z$.  Moreover, such vertex $p_h$ corresponds to some clause $\VP P_h$ whose literals are all true by the way $\nu$ is defined.
\end{proof}

The proof for $k > 2$ is by induction, i.e., we reduce \stchose{k} into \stchose{(k+1)} for every $k \geq 2$.  Roughly speaking, the idea of the reduction is to insert a vertex $z$ that forbids every vertex of the reduced graph to have the same color as $z$.

\begin{theorem} \label{thm:stchose-k}
  \stchose{k} is \ptp-complete for every $k \geq 2$, and it remains \ptp-complete when the input is restricted to $\{C_4, K_{k+2}\}$-free graphs.
\end{theorem}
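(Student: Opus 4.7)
The plan is to prove the theorem by induction on $k$, using Theorem~\ref{thm:stchose-2} for the base case ($k=2$). For the inductive step, I will exhibit a polynomial-time reduction from \stchose{k} on $\{C_4,K_{k+2}\}$-free graphs to \stchose{(k+1)} on $\{C_4,K_{k+3}\}$-free graphs. Given $G$ as input to \stchose{k}, I build $G'$ by adding a new vertex $z$ together with a leaf $\ell_z$ and a $(k+1)$-forcer $F_z$ attached to $z$, and, for every $v\in V(G)$, a $(k+1)$-switcher $C_v$ connecting $\{v,z\}$, so that $C_v$ is a clique of size $k+1$ whose vertices are adjacent to $v$ and $z$ and to nothing else outside $C_v$; no direct edge $vz$ is added. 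For fixed $k$ all the gadgets have constant size, so the reduction is polynomial, and membership of \stchose{(k+1)} in \ptp has already been noted. The graph $G'$ is $\{C_4,K_{k+3}\}$-free: its maximal cliques lie inside $G$ (size $\leq k+1$), inside $F_z\cup\{z\}$ (size $\leq k+2$ by Lemma~\ref{lem:forcer properties}(i)), or have the form $C_v\cup\{v\}$ or $C_v\cup\{z\}$ (size $k+2$), since distinct switchers and the forcer are pairwise non-adjacent outside $z$. An induced $C_4$ through $z$ would need two non-adjacent neighbors of $z$ in different switchers or in the forcer, whose only common neighbor is $z$; a $C_4$ through $v\in V(G)$ using a $C_v$-vertex stalls because $C_v$-vertices have no neighbors outside $\{v,z\}\cup C_v$; the leaf $\ell_z$ has degree one; and forcer-based $C_4$'s are excluded by Lemma~\ref{lem:forcer properties}(i).

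For the direction ($\Rightarrow$), given an arbitrary $(k+1)$-list assignment $L'$ of $G'$, I apply Lemma~\ref{lem:forcer properties}(ii) to pick an $L'$-admissible color $c$ for $z$, set $\rho(z)=c$, and for each $v\in V(G)$ choose any $k$-subset $L(v)$ of $L'(v)\setminus\{c\}$. Star $k$-choosability of $G$ yields a star $L$-coloring $\rho_G$, which I adopt as $\rho$ on $V(G)$; Lemma~\ref{lem:switcher choosability} then colors each switcher $C_v$ (permissible because $\rho(v)\neq c=\rho(z)$), Lemma~\ref{lem:forcer properties}(ii) colors $F_z$, and Observation~\ref{obs:leafed vertex} colors $\ell_z$ with $\rho(\ell_z)\neq c$. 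Checking that $\rho$ is a star $L'$-coloring reduces to inspecting centers: forcer centers are killed by Lemma~\ref{lem:forcer properties}; maximal stars at $z$ all contain $\ell_z$ and are killed by the choice of $\rho(\ell_z)$; the star $\{c\}\{v,z\}$ at $c\in C_v$ is not monochromatic because $\rho(v)\neq\rho(z)$; and a star $\{v\}(\{c_v\}\cup S_G)$ at $v\in V(G)$, where $S_G$ is a maximal independent set of $N_G(v)$, is not monochromatic because $\{v\}S_G$ is a non-monochromatic maximal star of $G$ under $\rho_G$.

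For the direction ($\Leftarrow$), given a $k$-list assignment $L$ of $G$, I design a $(k+1)$-list assignment $L'$ of $G'$ tailored to the gadget: choose a fresh color $c_0$, set $L'(v)=L(v)\cup\{c_0\}$ and $L'(c)=L(v)\cup\{c_0\}$ for every $c\in C_v$, use Lemma~\ref{lem:forcer properties}(iii) to extend $L'$ over $F_z\cup\{z\}$ so that $c_0$ is the unique $L'$-admissible color for $z$, and let $L'(\ell_z)$ be arbitrary. By hypothesis $G'$ admits a star $L'$-coloring $\rho$ with $\rho(z)=c_0$. Each switcher $C_v$ is a block of $k+1$ pairwise true twins sharing the list $L(v)\cup\{c_0\}$ of size $k+1$, so Observation~\ref{obs:block coloring} makes $c\mapsto\rho(c)$ a bijection $C_v\to L(v)\cup\{c_0\}$; the unique $c^*\in C_v$ with $\rho(c^*)=c_0$ turns $\{c^*\}\{v,z\}$ into a maximal star that forces $\rho(v)\neq c_0$, so $\rho(v)\in L(v)$. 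Moreover, for every maximal star $\{v\}S_G$ of $G$ the unique $c_v\in C_v$ with $\rho(c_v)=\rho(v)$ promotes it to the maximal star $\{v\}(\{c_v\}\cup S_G)$ of $G'$, which is monochromatic exactly when $\{v\}S_G$ is; since the former is not, $\rho|_{V(G)}$ is a star $L$-coloring of $G$. The delicate point, and the main obstacle, is precisely this transport of maximal-star constraints: the switcher $C_v$ would otherwise let $G'$-stars be saved by a free choice of color on the new neighbor, decoupling $G'$-consistency from $G$-consistency; the list design $L'(c)=L(v)\cup\{c_0\}$ combined with Observation~\ref{obs:block coloring} guarantees a color-mirror $c_v\in C_v$ for every possible value of $\rho(v)$, which is what makes the maximal-star argument close.
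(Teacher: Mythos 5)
Your proof is correct and follows the paper's own argument essentially step for step: the same induction on $k$ with Theorem~\ref{thm:stchose-2} as base case, the same gadget (a new vertex $z$ carrying a $(k+1)$-forcer and joined to every vertex of $G$ by a $(k+1)$-switcher), and the same list manipulations in both directions, invoking Lemma~\ref{lem:forcer properties}, Lemma~\ref{lem:switcher choosability} and Observation~\ref{obs:block coloring} exactly as the paper does. The only deviation is the leaf $\ell_z$ you attach to $z$, a harmless addition that makes the disposal of maximal stars centered at $z$ explicit where the paper treats that case only implicitly.
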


\begin{proof}
  The proof is by induction on $k$.  The base case $k = 2$ corresponds to Theorem~\ref{thm:stchose-2}.  For the inductive step, we show how to transform a $\{C_4, K_{k+2}\}$-free graph $G_k$ into a $\{C_4, K_{k+3}\}$-free graph $G_{k+1}$ so that $G_k$ is star $k$-choosable if and only if $G_{k+1}$ is star $(k+1)$-choosable.  
  
  The vertices of $G_{k+1}$ are divided into connection and inner vertices.  Connection vertices comprise a set $W$ inducing $G_{k}$ and a vertex $z$.  Inner vertices are included in $(k+1)$-forcers or $(k+1)$-switchers connecting connection vertices.  There is a $(k+1)$-forcer connecting $z$, and a $(k+1)$-switcher connecting $\{z,w\}$ for every $w \in W$.  Let $C(w)$ be the $(k+1)$-switcher connecting $\{w,z\}$, i.e., $C(w)\cup\{w\}$ and $C(w) \cup \{z\}$ are cliques of $G_{k+1}$.  By statement (\ref{lem:forcer properties:forbidden}) of Lemmas \ref{lem:switcher properties}~and~\ref{lem:forcer properties}, $G_{k+1}$ is $\{K_{k+3}, C_4\}$-free.  

  Suppose $G_k$ is star $k$-choosable.  Let $L_{k+1}$ be a $(k+1)$-list assignment of $G_{k+1}$, and $c(z) \in L(z)$ be $L$-admissible for $z$.  Recall that $c(z)$ always exists by statement~(\ref{lem:forcer properties:extension}) of Lemma~\ref{lem:forcer properties}.  Define $L_k$ as a $k$-list assignment of $G_{k+1}[W]$ such that $L_k(w) \subseteq L_{k+1}(w) \setminus \{c(z)\}$ for $w \in W$.  By hypothesis, there is a star $L_k$-coloring $\rho$ of $G_{k+1}[W]$.  Define $\sigma$ to be the $L_{k+1}$-coloring of $G_{k+1}$ such that $\sigma(w)=\rho(w)$ for $w \in W$ and $\sigma(z) = c(z)$.  Inner vertices are colored according to Lemma~\ref{lem:switcher choosability} and statement~(\ref{lem:forcer properties:extension}) of Lemma~\ref{lem:forcer properties}.  Clearly, if $\{w\}S$ is a maximal star of $G_{k+1}$ and $w$ is a connection vertex, then either $w = z$ or $\{w\}S$ includes a maximal star of $G_{k}[W]$.  Whichever the case, $\{w\}S$ is not monochromatic, i.e., $\sigma$ is a star coloring of $G_{k+1}$.

  For the converse, let $L_k$ be an $k$-list assignment of $G_{k+1}[W]$ and take a color $c\not\in L(V(G_{k+1}))$.  Define $L_{k+1}$ as any $(k+1)$-list assignment of $G_{k+1}$ such that $c$ is the unique $L_{k+1}$-admissible color for $z$, and $L_{k+1}(w) = L_{k+1}(C(w)) = L_{k}(w) \cup \{c\}$ for every $w \in W$.  Such list assignment always exists by statement~(\ref{lem:forcer properties:color}) of Lemma~\ref{lem:forcer properties}.  Let $\sigma$ be a star $L_{k+1}$-coloring of $G_{k+1}$.  By construction, $\sigma(z) = \{c\}$, and by Lemma~\ref{lem:switcher properties}, $c \not\in \sigma(W)$.  Hence, the restriction $\rho$ of $\sigma$ to $W$ is an $L_k$-coloring of $G_{k+1}[W]$.  Moreover, if $\{w\}S$ is a maximal star of $G_{k+1}[W]$, then $\{w\}(S \cup \{x\})$ is a maximal star of $G_{k+1}$, for every $x \in C(w)$.  Since $C(w)$ is a block of $G_{k+1}$ and $L(C(w)) = L(w)$, it follows that $\rho(w) = \sigma(w) \in \sigma(C(w))$.  Hence, $\{w\}S$ is not monochromatic.
\end{proof}

\section{Forbidding graphs of order 3} 
\label{sec:small forbiddens}

The previous sections dealt with time complexity of the star and biclique coloring and choosability problems.  The remaining of the article is devoted to these problems in other restricted classes of graphs.  As discussed in Section~\ref{sec:chordal}, we are interested in classes of graphs that are related to chordal graphs or can be defined by forbidding small induced subgraphs.  In this section, we study the classes of $H$-free graphs, for every graph $H$ on three vertices.

There are four graphs with exactly three vertices, namely $K_3$, $P_3$, $\overline{P_3}$, and $\overline{K_3}$.  The following theorem shows that $K_3$-free graphs are star $2$-choosable.

\begin{theorem}\label{thm:stcol triangle-free}
  Every $K_3$-free graph is star $2$-choosable.  Furthermore, for any $2$-list assignment, a star $L$-coloring can be obtained in linear time.
\end{theorem}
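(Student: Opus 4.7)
The plan is to reduce the problem to properly list-$2$-coloring a spanning forest of $G$, which is always possible in linear time because forests are $2$-choosable.

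The first step is a structural description of the maximal stars of $G$. Since $G$ is $K_3$-free, the open neighborhood $N_G(v)$ of every vertex is an independent set, so $N_G[v]$ induces a star centered at $v$ whenever $d_G(v) \geq 1$. For $v$ with $d_G(v) \geq 2$, I would argue that $N_G[v]$ is maximal as a star: no further leaf can be appended to $v$, and $N_G[v]$ cannot be strictly contained in a star centered at some $w \neq v$, because such a $w$ would be adjacent to $v$ and to every vertex of $N_G(v)$, producing a triangle with any $u \in N_G(v)$. Consequently, the only maximal stars of $G$ are (a) the closed neighborhoods $N_G[v]$ with $d_G(v) \geq 2$, and (b) the edge-components $\{u,v\}$ with $d_G(u) = d_G(v) = 1$.

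The second step is to build a spanning forest $F$ of $G$ in $O(n+m)$ time (by BFS or DFS) and produce a proper $L$-coloring $\rho$ of $F$, also in linear time. This is standard: root each tree of $F$ at an arbitrary vertex, give the root any color from its list, and then process the remaining vertices in BFS order, assigning each non-root vertex $v$ a color in $L(v)$ distinct from the color of its parent; since $|L(v)| = 2$, such a choice always exists.

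The third step is to check that $\rho$ is a star $L$-coloring of $G$. For every vertex $v$ with $d_G(v) \geq 2$, $v$ lies in a non-trivial tree of $F$, so it has a neighbor $w$ in $F$ with $\rho(w) \neq \rho(v)$; hence $N_G[v]$ is not monochromatic. For every edge-component $\{u,v\}$ the edge $uv$ belongs to $F$, so $\rho(u) \neq \rho(v)$. Isolated vertices contribute no stars and can be colored arbitrarily from their lists. By the structural description above these are all the maximal stars of $G$, so $\rho$ is a star $L$-coloring, and the whole procedure runs in $O(n+m)$ time. The only nontrivial ingredient is step one; once the maximal stars are pinned down to closed neighborhoods (plus trivial edge-components), the coloring itself is essentially list-coloring a forest.
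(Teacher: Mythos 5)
Your proposal is correct and follows essentially the same route as the paper's proof: build a spanning tree (forest) by BFS, list-color it properly so that each vertex differs from its parent, and observe that $K_3$-freeness forces every maximal star to be a closed neighborhood (or an isolated edge), which therefore cannot be monochromatic. The only difference is that you handle disconnected graphs and isolated-edge components explicitly, while the paper works with a single spanning tree under its standing assumption that inputs are connected.
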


\begin{proof}
  Let $L$ be a $2$-list assignment of a $K_3$-free graph $G$, $T$ be a rooted tree subgraph of $G$ with $V(T) = V(G)$, $r$ be the root of $T$, and $p(v)$ be the parent of $v$ in $T$ for each $v \in V(G) \setminus \{r\}$.  Define $\rho$ to be an $L$-coloring of $G$ where $\rho(r) \in L(r)$ and $\rho(v) \in L(v) \setminus \{\rho(p(v))\}$ for every $v \in V(G) \setminus \{r\}$.  Since $G$ is $K_3$-free, $\{v\}S$ is a maximal star of $G$ for $v \in V(G)$ only if $S = N(v)$, hence $\{v\}S$ is not monochromatic.  Observe that a BFS traversal of $G$ is enough to compute $\rho$, thus $\rho$ is computed in linear time from $G$.
\end{proof}

As a corollary, we obtain that $\{C_4, K_3\}$-free graphs are biclique $2$-choosable also.  However, this corollary can be easily strengthened so as to include those $K_3$-free graphs that are \emph{biclique-dominated}.  A graph $G$ is \Definition{biclique-dominated} when every maximal biclique is either a star or has a false dominated vertex.  Some interesting classes of graphs are $K_3$-free and biclique-dominated, including hereditary biclique-Helly graphs~\cite{EguiaSoulignacDMTCS2012}.

\begin{theorem}
  Every $K_3$-free graph that is biclique-dominated is biclique $2$-choosable.  Furthermore, for any $2$-list assignment, a biclique $L$-coloring can be computed in polynomial time.
\end{theorem}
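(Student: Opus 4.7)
The plan is to mirror the construction of Theorem~\ref{thm:stcol triangle-free}: fix a spanning tree of $G$ rooted at an arbitrary vertex $r$, choose $\rho(r)\in L(r)$ arbitrarily, and, in BFS order, set $\rho(v)\in L(v)\setminus\{\rho(p(v))\}$ for each non-root vertex $v$. This yields a star $L$-coloring in $O(n+m)$ time, so once $\rho$ is verified to be a biclique $L$-coloring the polynomial-time claim follows. The verification proceeds by fixing an arbitrary maximal biclique $B$ and using the biclique-dominated hypothesis to split into two regimes.

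If $B$ is a star with centre $c$, then $K_3$-freeness forces $N(c)$ to be an independent set, so $\{c\}\cup N(c)$ is itself a star (hence a biclique) containing $B$. The maximality of $B$ among bicliques therefore forces $B=\{c\}\cup N(c)$, i.e., $B$ is a maximal star of $G$, and Theorem~\ref{thm:stcol triangle-free} yields that $B$ is not monochromatic. If instead $B=S\cup T$ is not a star, then $|S|,|T|\geq 2$ and some $v\in V(B)$ is false dominated by a vertex $w\in V(G)$. Combining $T\subseteq N(v)\subseteq N(w)$ with the $K_3$-freeness of $G$ and the maximality of $B$ shows first that $w$ must lie in $V(B)$ (any outside vertex adjacent to all of $T$ would extend $B$) and second that $v$ and $w$ are on the same side of $B$, say $v,w\in S$, because $w\in T$ would make $w$ adjacent to itself through the biclique relation.

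To conclude that $B$ is non-monochromatic in this second regime, the plan is to look at the maximal star $\{v\}\cup N(v)$, which by Theorem~\ref{thm:stcol triangle-free} is not monochromatic; hence there exists $u\in N(v)$ with $\rho(u)\neq\rho(v)$. Because $S$ is independent we have $N(v)\cap V(B)=T$, so if $u\in T$ then $B$ already contains two colors, as desired. The remaining possibility is that $u\in N(v)\setminus T$ lies outside $V(B)$; this is the main obstacle of the proof. In that subcase, $N(v)\subseteq N(w)$ forces $u\sim w$, $K_3$-freeness forces $u\not\sim t$ for every $t\in T$, and the maximality of $B$ forces some $s^*\in S\setminus\{v,w\}$ with $u\not\sim s^*$, otherwise $u$ could be inserted into the $T$-side of $B$. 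I would close the argument by combining this analysis with the symmetric one applied to the maximal star centred at a vertex $t\in T$ (or, alternatively, by iterating the observation with a carefully chosen false-dominated vertex of $V(B)$), producing a vertex of $V(B)$ whose color differs from $\rho(v)$ and thereby contradicting the presumed monochromaticity of $B$. The hard step is precisely this last one, since it requires using both the $K_3$-free and the biclique-dominated hypotheses in tandem; once it is handled, the whole proof sits naturally on top of Theorem~\ref{thm:stcol triangle-free} and the same linear-time algorithm.
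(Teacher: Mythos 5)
There is a genuine gap, and it is exactly the step you flag as ``the hard step'': it cannot be closed, because the plain tree-based star $L$-coloring of Theorem~\ref{thm:stcol triangle-free} is in general \emph{not} a biclique $L$-coloring of a $K_3$-free biclique-dominated graph. Concretely, take $K_{3,2}$ with parts $S=\{v,w,u\}$ and $T=\{t_1,t_2\}$, add vertices $z\sim v,w$, $y\sim u$, $q_1\sim t_1$, $q_2\sim t_2$, and a root $r\sim z,y,q_1,q_2$. This graph is $K_3$-free, and its only non-star maximal bicliques are $\{v,w,u\}\{t_1,t_2\}$ and $\{v,w\}\{z,t_1,t_2\}$, both containing the false twins $v,w$; so it is biclique-dominated. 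With all lists equal to $\{1,2\}$ and the BFS tree rooted at $r$, every vertex of $S\cup T$ lies at distance $2$ from $r$ and has its tree parent among $\{z,y,q_1,q_2\}$, so all five vertices receive color $1$ while their parents receive color $2$. Every closed neighborhood meets two BFS levels, so this is a perfectly valid star $L$-coloring, yet the maximal biclique $\{v,w,u\}\{t_1,t_2\}$ is monochromatic. Your analysis of this case is not wrong as far as it goes (the dominating vertex $w$ does lie in $B$ on the same side as $v$), but the witness of non-monochromaticity of the star $N[v]$ can lie entirely outside $B$ --- here it is $z$ --- and the symmetric argument at $t_1$ fails the same way ($q_1\notin B$). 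Your first case (maximal bicliques that are stars) is handled correctly.

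The paper avoids this by adding a second phase that you are missing: after computing the star $L$-coloring, it traverses every vertex $w$ and, for each $v$ false dominated by $w$, recolors $v$ with a color of $L(v)\setminus\{\rho(w)\}$. Transitivity of false domination guarantees that, at the end, each false-dominated vertex differs in color from the vertex that last recolored it, and that vertex lies in every maximal biclique containing $v$; meanwhile vertices of star-type maximal bicliques are never touched, so the phase-one guarantee still applies to them. If you want to salvage your write-up, you need to add this recoloring pass (or an equivalent mechanism); the verification-only route on top of the unmodified tree coloring cannot work.
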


\begin{proof}
  Let $L$ be a $2$-list assignment of a $K_3$-free graph $G$ that is biclique-dominated.  The algorithm for biclique $L$-coloring $G$ has two steps.  First, apply Theorem~\ref{thm:stcol triangle-free} on $G$ so as to obtain a star $L$-coloring $\rho$ of $G$.  Second, traverse each vertex $w$ and, for each $v$ that is false dominated by $w$, change $\rho(v)$ with any color in $L(v) \setminus \rho(w)$.  (It is not important if $\rho(v)$ or $\rho(w)$ are later changed when other vertices are examined.) The coloring thus generated is a biclique $L$-coloring.  Indeed, if a maximal biclique contains a false dominated vertex $v$, then it also contains the vertex $w$ such that $\rho(v)$ was last changed in the second step while traversing $w$.  Since false domination is a transitive relation, it follows that $\rho(v) \neq \rho(w)$ when the second step is completed.  On the other hand, if $S$ is a maximal biclique with no false dominated vertices, then $S$ is a star.  Since the colors of the vertices of $S$ are not affected by the second step, we obtain that $S$ is not monochromatic.  It is not hard to see that the algorithm requires polynomial time.
\end{proof}

Coloring a connected $P_3$-free graph is trivial because the unique connected $P_3$-free graph $G$ with $n$ vertices is $K_n$.  Thus, $ch_B(G) = \chi_B(G) = \chi_S(G) = ch_S(G) = n$.

\begin{theorem}
 If $G$ is a connected $P_3$-free graph with $n$ vertices, then $ch_B(G) = \chi_B(G) = \chi_S(G) = ch_S(G) = n$.
\end{theorem}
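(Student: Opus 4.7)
The plan is to observe that a connected $P_3$-free graph must be $K_n$, and then analyze the maximal stars and bicliques of $K_n$, which will turn out to be exactly the edges. This reduces all four parameters to the classical vertex chromatic and choice numbers of $K_n$.

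First I would prove the structural claim: if $G$ is connected and $P_3$-free on $n$ vertices, then $G = K_n$. Pick any two vertices $u,v$; since $G$ is connected, take a shortest $u$--$v$ path. If $u$ and $v$ are non-adjacent, then the first three vertices of a shortest path form an induced $P_3$, contradicting the hypothesis. Hence every pair is adjacent and $G = K_n$.

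Next I would identify $\MS(K_n)$ and $\MB(K_n)$. A star or biclique is an induced $K_{a,b}$ with $a,b \geq 1$, and induced subgraphs of $K_n$ are themselves complete, so the only induced complete bipartite subgraph is $K_{1,1}$; that is, every biclique (and every star) of $K_n$ is an edge. Any edge $\{v,w\}$ is maximal because adding any third vertex $x$ produces $K_3$, which is not bipartite. Therefore $\MS(K_n) = \MB(K_n) = E(K_n)$, so star and biclique colorings of $K_n$ coincide with ordinary proper vertex colorings, and similarly for the list versions.

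It remains to recall $\chi(K_n) = ch(K_n) = n$. The lower bound $\chi(K_n) = n$ is standard (the $n$ pairwise adjacent vertices require distinct colors), and this gives $\chi_S(K_n) = \chi_B(K_n) \geq n$ via Observation~\ref{obs:block coloring} applied to any two twins (every pair of vertices in $K_n$ is a block). For the upper bound on choosability, given any $n$-list assignment $L$, order the vertices arbitrarily as $v_1,\ldots,v_n$; when coloring $v_i$, at most $i-1 < n = |L(v_i)|$ colors are forbidden, so a valid choice remains. Thus $ch_S(K_n) = ch_B(K_n) \leq n$. Combined with $\chi_\ast(K_n) \leq ch_\ast(K_n)$ for $\ast \in \{S,B\}$, we get $\chi_B(G) = \chi_S(G) = ch_B(G) = ch_S(G) = n$. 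There is no real obstacle here; the proof is essentially a direct unpacking of the definitions once the structural observation $G = K_n$ is in hand.
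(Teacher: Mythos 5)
Your proposal is correct and follows exactly the route the paper takes (the paper compresses it to the single remark that the only connected $P_3$-free graph on $n$ vertices is $K_n$, whence all maximal stars and bicliques are edges and the four parameters reduce to $\chi(K_n)=ch(K_n)=n$). Your write-up simply spells out the details the paper leaves implicit, and each step checks out.
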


The case of $\overline{P_3}$-free graphs, examined in the next theorem, is not much harder.

\begin{theorem}
 If $G$ is a $\overline{P_3}$-free graph with $k$ universal vertices, then $ch_B(G) = \chi_B(G) = \chi_S(G) = ch_S(G) = \max\{2,k\}$.
\end{theorem}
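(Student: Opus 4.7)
The plan is to exploit the structural characterization of $\overline{P_3}$-free graphs. Since $P_3$-free graphs are exactly disjoint unions of cliques, $\overline{P_3}$-free graphs are exactly complete multipartite graphs; let $V_1, \ldots, V_t$ denote the parts of $G$. A vertex is universal if and only if it belongs to a singleton part, so $k$ equals the number of singleton parts. Given $v \in V_i$, the maximal independent subsets of $N(v) = V(G) \setminus V_i$ are precisely the parts $V_j$ with $j \neq i$. Consequently, the maximal bicliques of $G$ are the bipartitions $V_iV_j$ with $i \neq j$, while the maximal stars of $G$ are the sets $\{v\}V_j$ with $v \in V_i$, $j \neq i$ and $|V_j| \geq 2$, together with the edges $\{u\}\{u'\}$ joining two universal vertices $u, u'$.

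For the lower bound, all universal vertices are pairwise twins (they share the closed neighborhood $V(G)$), so by Observation~\ref{obs:block coloring} they must receive pairwise distinct colors in every star or biclique $L$-coloring; hence $\chi_S(G), \chi_B(G) \geq k$. Provided $G$ has at least one edge, there exist at least one maximal star and at least one maximal biclique, which forces every such coloring to use at least two colors; combining these observations yields $ch_B \geq \chi_B \geq \max\{2,k\}$ and $ch_S \geq \chi_S \geq \max\{2,k\}$.

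For the upper bound, given any $c$-list assignment $L$ of $G$ with $c = \max\{2, k\}$, I would construct a common star and biclique $L$-coloring $\rho$ in two independent steps. First, assign pairwise distinct colors from the lists to the $k$ universal vertices; this is feasible by a greedy argument because each such list has size $c \geq k$. Second, for each non-singleton part $V_j$, pick two of its vertices and assign them distinct colors from their size-at-least-$2$ lists (always possible) and color the remaining vertices of $V_j$ arbitrarily from their lists, so that $|\rho(V_j)| \geq 2$. Every maximal biclique $V_iV_j$ is then non-monochromatic, because if one side is non-singleton it already carries two colors, and if both sides are singletons they contain universal vertices with distinct colors; likewise, every maximal star $\{v\}V_j$ with $|V_j| \geq 2$ is non-monochromatic thanks to $|\rho(V_j)| \geq 2$, while the $K_{1,1}$-stars joining two universal vertices are non-monochromatic by construction. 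The main obstacle is the careful identification of the maximal stars: naively, every edge $\{v,u\}$ with $u$ universal might be mistaken for a maximal $K_{1,1}$ with $u$ and $v$ both as centers, but when $v$ is non-universal this star is properly contained in $\{u\}V_i$ (taking $u$ as center and $v \in V_i$), so the only maximal $K_{1,1}$-stars are those joining two universal vertices; recognizing this is exactly what reduces the star constraints to those already handled by the two coloring steps.
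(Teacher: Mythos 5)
Your proof is correct and follows essentially the same route as the paper's: the non-singleton parts of the complete multipartite structure are exactly the components of $\overline{G}$ minus the universal vertices (the paper's sets $B_i$ of false twins), the lower bound comes from the block of universal vertices via Observation~\ref{obs:block coloring}, and the upper bound is the same coloring (distinct colors on universal vertices, at least two colors on each non-singleton part). Your explicit enumeration of the maximal stars and bicliques just spells out the paper's one-line observation that every maximal star or biclique either consists of two universal vertices or contains some $B_i$ entirely.
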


\begin{proof}
 Let $K$ be the set of universal vertices of $G$.  Clearly, $K$ is a block of $G$, thus $ch_B(G) \geq k$ and $ch_S(G) \geq k$ by Observation~\ref{obs:block coloring}.  For the other bound, let $L$ be a $k$-list assignment of $G$, and $B_1, \ldots, B_n$ be the sets of vertices that induce components of $\overline{G} \setminus K$.  Define $\rho$ as an $L$-coloring of $G$ such that $|\rho(K)| = k$ and $|\rho(B_i)| = 2$ for \range{i}{1}{n}.  Note that $B_i$ is a set of false twin vertices (\range{i}{1}{j}) because $B_i$ is a clique of $\overline{G}$.  Thus, every maximal star or biclique $S$ is formed by two vertices of $K$ or it contains a set $B_i$ for some \range{i}{1}{n}.  Whichever the case, $S$ is not monochromatic, thus $\rho$ is a star and biclique $L$-coloring.
\end{proof}

The remaining class is the class of $\overline{K_3}$-free graphs.  By definition, if $G$ is $\overline{K_3}$-free, then every maximal star and every maximal biclique of $G$ has $O(1)$ vertices.  Thus, it takes polynomial time to determine if an $L$-coloring of $G$ is a star or biclique coloring, for any $k$-list assignment $L$.  Hence, when restricted to $\overline{K_3}$-free graphs, the star and biclique $k$-coloring problems belong to \NP, while the star and biclique $k$-choosability problems belong to \ptwop.  The next theorem shows that, when $k \geq 3$, the choosability problems are \ptwop-complete even when the input is further restricted to co-bipartite graphs.

\begin{theorem}\label{thm:co-bipartite choosability}
 \stchose{k} and \bcchose{k} are \ptwop-complete for every $k \geq 3$ when the input is restricted to co-bipartite graphs.
\end{theorem}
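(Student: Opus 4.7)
The plan is to address the two parts of the theorem separately: membership in \ptwop\ and \ptwop-hardness for $k\ge 3$.

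For membership, since co-bipartite graphs are $\overline{K_3}$-free (every independent set fits in a single clique and therefore has at most two vertices), every maximal biclique has at most four vertices, and every maximal star, being a biclique contained in the closed neighbourhood of some vertex, also has at most four vertices. There are only $O(|V(G)|^{4})$ subsets of that size, so given a $k$-list assignment $L$ and a candidate $L$-coloring $\rho$ one can verify in polynomial time whether $\rho$ leaves any maximal star (respectively, biclique) of $G$ monochromatic. The choosability question $\forall L\,\exists \rho$ is therefore in \ptwop.

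For \ptwop-hardness the plan is to reduce from the classical vertex $k$-choosability problem, which is \ptwop-complete for $k\ge 3$ by Gutner and Tarsi~\cite{GutnerTarsiDM2009}. Given an instance $H$ with $V(H)=\{v_1,\ldots,v_n\}$, one constructs a co-bipartite graph $G$ whose two cliques $V_1, V_2$ each contain a copy of $V(H)$ together with a small set of \emph{anchor} vertices playing the role of the $k$-forcers of Lemma~\ref{lem:forcer properties}. The cross edges are placed so that, for every edge $v_iv_j \in E(H)$, the four vertices $\{v_i, v_j, v_i', v_j'\}$ induce a $K_{2,2}$ in $G$ and, thanks to a universal anchor, this $C_4$ is also a maximal star. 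A $k$-list assignment $L_H$ of $H$ is then translated into a $k$-list assignment $L_G$ of $G$ in which each $v_i$ and its copy $v_i'$ share the list $L_H(v_i)$ while the anchors receive lists disjoint from all $L_H(v_i)$'s. With this design, a star (biclique) $L_G$-coloring of $G$ restricts to a proper $L_H$-coloring of $H$, and every proper $L_H$-coloring of $H$ extends to a star (biclique) $L_G$-coloring of $G$ by copying colours onto the $v_i'$'s and choosing anchor colours freely. The biclique and star cases are handled by the same construction because every biclique of $G$ that can be monochromatic is also a star.

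The main obstacle is controlling \emph{spurious} maximal stars or bicliques of $G$ that could impose constraints beyond the edges of $H$. The gadgetry developed in Section~\ref{sec:choosability}---keepers, long switchers, clusters, and leafed vertices---is essentially unavailable here, because each of those gadgets requires an independent set of size greater than two, which cannot be embedded in a co-bipartite graph. The anchor vertices, combined with the disjoint-list design, must therefore do all of the forcing work on their own, analogously to the $k$-forcers of Lemma~\ref{lem:forcer properties}. Consequently, the bulk of the proof is a case analysis over the maximal stars and bicliques of $G$, showing that each falls into one of the following three cases: (i) it coincides with one of the $C_4$'s corresponding to an edge of $H$; (ii) it contains an anchor and is automatically non-monochromatic by list disjointness; or (iii) it can be coloured non-monochromatically under the canonical lift from any proper $L_H$-coloring of $H$. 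Once this verification is completed, the reduction is polynomial-time and correct, yielding \ptwop-hardness for both \stchose{k} and \bcchose{k} on co-bipartite graphs for every $k\ge 3$.
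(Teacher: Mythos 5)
Your membership argument is fine and matches the paper's (co-bipartite graphs are $\overline{K_3}$-free, so all maximal stars and bicliques have $O(1)$ vertices and an $L$-coloring can be verified in polynomial time). The hardness part, however, has a genuine gap at its core. Your reduction needs the implication ``$\rho$ is a star (biclique) $L_G$-coloring of $G$ $\Rightarrow$ the restriction to the $v_i$'s is a proper $L_H$-coloring of $H$,'' but the only constraint your gadget imposes on an edge $v_iv_j \in E(H)$ is that the maximal biclique $\{v_i,v_j,v_i',v_j'\}$ (or a maximal star inside it, such as $\{v_i\}\{v_j,v_j'\}$) is not monochromatic. That only forbids \emph{all} of the involved vertices from sharing a color; a coloring with $\rho(v_i)=\rho(v_j)$ but $\rho(v_i')\neq\rho(v_i)$ satisfies every such constraint and still fails to be a proper coloring of $H$. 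To close this you would need to force $\rho(v_i)=\rho(v_i')$, and you correctly observe that keepers are unavailable in co-bipartite graphs --- but your proposed substitute cannot work either: an anchor whose list is disjoint from $L_H(v_i)$ can never be colored with $\rho(v_i)$, so it can neither act as a keeper nor help create a monochromatic witness when $\rho(v_i)=\rho(v_j)$. (Also, the phrase ``this $C_4$ is also a maximal star'' is incoherent: a set inducing $K_{2,2}$ is never a star. And you never address maximal stars of the form $\{u\}\{v\}$ arising from twins, which is why the paper insists the source graph have no false twins.)

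The paper resolves exactly this difficulty with a different per-edge gadget. It starts from vertex $k$-choosability of a connected \emph{bipartite} graph $G$ with no false twins, builds a bipartite graph $H$ by attaching to each edge $vw$ four stars $\{a_i(vw)\}A_i(vw)$ with $|A_i(vw)|=k-1$ and the edges $va_1,va_3,wa_2,wa_4,a_1a_2,a_3a_4$, and takes $\overline{H}$. Under the list assignment $L(a)=M(v)$ on the whole gadget, assuming $\sigma(v)=\sigma(w)$ forces, via the maximal stars $\{a\}\{v,w\}$ and Observation~\ref{obs:block coloring} applied to the blocks $A_i(vw)$, that $\sigma(A_i(vw))=L(v)\setminus\{\sigma(v)\}$, hence $\sigma(a_i(vw))=\sigma(v)$ for all $i$, producing a monochromatic maximal biclique $\{a_1,a_2,a_3,a_4\}$ that contains a maximal star --- a contradiction that works for both \stchose{k} and \bcchose{k}. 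Some cascading mechanism of this kind, which converts equality of colors on an edge into a forced monochromatic maximal star, is the missing idea in your proposal; without it the reduction does not go through.
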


\begin{proof}
 The proof is obtained by reducing the problem of determining if a connected bipartite graph with no false twins is vertex $k$-choosable, which is known to be \ptwop-complete~\cite{GutnerTarsiDM2009}.  Let $G$ be a connected bipartite graph with no false twins, $XY$ be a bipartition of $G$, and $k \in \mathbb{N}$.  Define $H$ to be the bipartite graph obtained from $G$ by inserting, for every $vw \in E(G)$, the stars $\{a_i(vw)\}A_i(vw)$ (\range{i}{1}{4}) with $|A_i(vw)| = k-1$ and the edges $va_1(vw)$, $va_3(vw)$, $wa_2(vw)$, $wa_4(vw)$, $a_1(vw)a_2(vw)$, and $a_3(vw)a_4(vw)$ (see Figure~\ref{fig:co-bip choose}).  We claim that $G$ is vertex $k$-choosable if and only if $\overline{H}$ is star (resp.\ biclique) $k$-choosable.  
 
\begin{figure}
  \centering\includegraphics{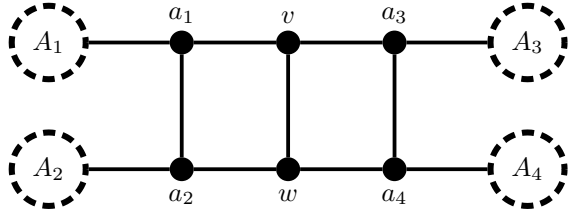}\caption{Transformation applied to $vw$ in Theorem~\ref{thm:co-bipartite choosability}; each independent set has $k-1$ vertices.}\label{fig:co-bip choose}
\end{figure}

 Suppose first that $G$ is vertex $k$-choosable, and let $L$ be a $k$-list assignment of $H$ and $M$ be the restriction of $L$ to $V(G)$.  By hypothesis, $G$ admits a vertex $M$-coloring $\rho$.  Define $\sigma$ to be any vertex $L$-coloring of $H$ so that $\sigma(v) = \rho(v)$ for $v \in V(G)$, and $|\sigma(A_i(vw)\cup\{a_i(vw)\})| = k$ for every $vw \in E(G)$ and every \range{i}{1}{4}.  It is not hard to see that such a coloring always exists.  Clearly, every maximal star (resp.\ biclique) $S$ of $\overline{H}$ is formed by two twins of $\overline{H}$ or it contains two vertices that are adjacent in $H$.  In the latter case $S$ is not $\sigma$-monochromatic because $\sigma$ is a vertex coloring of $H$, while in the former case $S$ is not $\sigma$-monochromatic because both of its vertices must belong to $A_i(vw)$, as $G$ has no false twins, for some $vw \in E(G)$ and some \range{i}{1}{4}.  
  
 For the converse, suppose $\overline{H}$ is star (resp.\ biclique) $k$-choosable, and let $M$ be a $k$-list assignment of $G$.  Define $\sigma$ to be a star (resp.\ biclique) $L$-coloring of $\overline{H}$, for the $k$-list assignment $L$ of $H$ where $L(a) = L(v) = M(v)$ for every $vw \in E(G)$ with $v \in X$, and every $a \in A_i(vw) \cup \{a_i(vw)\}$ with \range{i}{1}{4}.  Suppose, to obtain a contradiction, that $\sigma(v) = \sigma(w)$ for some $vw \in E(G)$ with $v \in X$ and $w \in Y$.  Then, for every $a \in A_i(vw)$ (\range{i}{1}{4}), we obtain that $\sigma(a) \neq \sigma(v)$ because $\{a\}\{v, w\}$ is a maximal star (resp.\ biclique) of $\overline{H}$.  Hence, since $A_i(vw)$ is a block of $\overline{H}$, we obtain by Observation~\ref{obs:block coloring} that $\sigma(A_i(vw)) = L(v) \setminus \{\sigma(v)\}$ for every $\range{i}{1}{4}$.  Consequently, since $\{b\}\{a, a_1(vw)\}$ is a maximal star (resp.\ biclique) for every $b \in A_2(vw)$ and every $a \in A_1(vw)$, it follows that $\sigma(a_1(vw)) = \sigma(v)$.  Analogously, $\sigma(a_i(vw)) = \sigma(v)$ for every \range{i}{1}{4}.  But then, $\{a_i(vw) \mid 1 \leq i \leq 4\}$ is a monochromatic maximal biclique that contains a maximal star, a contradiction.  Therefore, $\sigma(v) \neq \sigma(w)$ for every edge $vw$ of $V(G)$, which implies that the restriction of $\sigma$ to $V(G)$ is a vertex $M$-coloring of $G$.
\end{proof}

Let $G$ be a $K_3$-free graph with no false twins, and define $H$ as the $K_3$-free graph that is obtained from $G$ as in Theorem~\ref{thm:co-bipartite choosability}.  By fixing the list assignment that maps each vertex to $\{1, \ldots, k\}$ in the proof of Theorem~\ref{thm:co-bipartite choosability}, it can be observed that $G$ admits a vertex $k$-coloring if and only if $\overline{H}$ admits a star (resp.\ biclique) $k$-coloring, for every $k \geq 3$.  The problem of determining if a connected $K_3$-free graph with no false twins admits a vertex $k$-coloring is known to be \NP-complete~\cite{Lovasz1973,MaffrayPreissmannDM1996}.  Hence, the star and biclique $k$-coloring problems are $\NP$-complete when restricted to $\overline{K_3}$-free graphs, for every $k \geq 3$.

\begin{theorem}\label{thm:co-k3 coloring}
 \stcol{k} and \bccol{k} are \NP-complete for every $k \geq 3$ when the input is restricted to $\overline{K_3}$-free graphs.
\end{theorem}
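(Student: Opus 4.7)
The plan is to recycle the construction of Theorem~\ref{thm:co-bipartite choosability} with a constant list assignment. Membership in \NP is already implicit in the paragraph preceding Theorem~\ref{thm:co-bipartite choosability}: in a $\overline{K_3}$-free graph $G'$, each side of every maximal star $\{v\}S$ or biclique $ST$ is independent, hence of size at most two. So $\MS(G') \cup \MB(G')$ has polynomially many members of constant size, and verifying a proposed $k$-coloring takes polynomial time.

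For \NP-hardness I would reduce from vertex $k$-coloring of connected $K_3$-free graphs without false twins, which is \NP-complete for every $k \geq 3$ by~\cite{Lovasz1973,MaffrayPreissmannDM1996}. Given such an instance $G$, I would construct $H$ exactly as in Theorem~\ref{thm:co-bipartite choosability}; since that $H$ is bipartite by construction, $\overline{H}$ is $\overline{K_3}$-free and fits the required input restriction. The construction is clearly polynomial, so the only issue is the equivalence, which I would establish by specializing the reduction of Theorem~\ref{thm:co-bipartite choosability} to the constant $k$-list assignment $L(v) = \{1,\ldots,k\}$ on every vertex of $H$.

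With this choice of $L$, both directions of the equivalence go through verbatim. In the forward direction, a vertex $k$-coloring $\rho$ of $G$ extends to an $L$-coloring $\sigma$ of $\overline{H}$ exactly as in Theorem~\ref{thm:co-bipartite choosability}, by assigning all $k$ colors to each block $A_i(vw) \cup \{a_i(vw)\}$; the only property used there was $|L| = k$, which still holds. In the converse direction, if $\sigma$ is a star (equivalently, biclique) $k$-coloring of $\overline{H}$, then for any edge $vw$ of $G$ with $\sigma(v) = \sigma(w)$, Observation~\ref{obs:block coloring} and maximality of $\{a\}\{v,w\}$ force $\sigma(A_i(vw)) = \{1,\ldots,k\} \setminus \{\sigma(v)\}$, which in turn forces $\sigma(a_i(vw)) = \sigma(v)$ for every $i$, producing a monochromatic maximal biclique (containing a monochromatic maximal star) on $\{a_1(vw),\ldots,a_4(vw)\}$; contradiction. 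Hence the restriction of $\sigma$ to $V(G)$ is a vertex $k$-coloring of $G$, finishing the reduction. The only point requiring some care -- and it is essentially the same check one makes in Theorem~\ref{thm:co-bipartite choosability} -- is that this argument exploits the absence of false twins in $G$ exactly as before, so nothing extra needs to be added.
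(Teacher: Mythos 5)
Your proposal is correct and follows exactly the paper's own argument: the paper likewise obtains Theorem~\ref{thm:co-k3 coloring} by applying the construction of Theorem~\ref{thm:co-bipartite choosability} to a connected $K_3$-free graph with no false twins and fixing the list assignment $L(v)=\{1,\ldots,k\}$. One small correction: since the source graph $G$ is only $K_3$-free (it cannot be bipartite, or vertex $k$-coloring would be trivial for $k\geq 3$), $H$ is not bipartite in general; what you actually need --- and what holds --- is that $H$ is $K_3$-free, so that $\overline{H}$ is $\overline{K_3}$-free.
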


\section{Graphs with restricted diamonds}
\label{sec:diamond-free}

The graph $G$ defined in Theorem~\ref{thm:stcol-2} contains a large number of induced diamonds.  For instance, to force different colors on a pair of vertices $v$ and $w$, a $k$-switcher $C$ connecting $\{v, w\}$ is used.  Such switcher contains $O(k^2)$ diamonds, one for each edge of $C$.  An interesting question is, then, whether induced diamonds can be excluded from Theorem~\ref{thm:stcol-2}.  The answer is no, as we prove in this section that the star coloring problem is \NP-complete for diamond-free graphs.  By taking a deeper look at $G$, it can be noted that every diamond of $G \setminus \{s\}$ has a pair of twin vertices.  In order to prove that the star coloring problem is \NP-complete for diamond-free graphs, we show that the problem is \NP even for the larger class of graphs in which every diamond has two twin vertices.  This class corresponds to the class of \{$W_4$, dart, gem\}-free graphs (cf.\ below), and it is worth to note that its graphs may admit an exponential number of maximal stars.  We also study the biclique coloring problem on this class, for which we prove that the problem is \NP when there are no induced $K_{i,i}$ for $i \in O(1)$.  At the end of the section, we study the star and biclique choosability problems, which turn to be \ptwop-hard for \{$C_4$, dart, gem\}-free graph.

Let $G$ be a graph.  Say that $v \in V(G)$ is \Definition{block separable} if every pair of adjacent vertices $w,z\in N(v)$ not dominating $v$ are twins in $G[N(v)]$.  The following lemma shows that \{$W_4$, dart, gem\}-free graphs are precisely those graphs in which every induced diamond has twin vertices, and they also correspond to those graphs is which every vertex is block separable.  This last condition is crucial in the \NP coloring algorithms.

\begin{theorem}\label{thm:w4dartgemequivalence}
The following statements are equivalent for a graph $G$.
  \begin{enumerate}[(i)]
    \item $G$ is \{$W_4$, dart, gem\}-free.
    \item Every induced diamond of $G$ contains a pair of twin vertices.
    \item Every $v \in V(G)$ is block separable. 
  \end{enumerate}
\end{theorem}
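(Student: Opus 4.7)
The plan is to establish the cycle $(i)\Rightarrow(ii)\Rightarrow(iii)\Rightarrow(i)$. A preliminary observation used in both of the first two implications is that, in any induced diamond on $\{a,b,c,d\}$ with central edge $ab$ and non-edge $cd$, the only pair of vertices with equal closed neighborhoods inside the diamond is $\{a,b\}$; since twin vertices of $G$ must themselves be adjacent (as $v\in N[v]=N[w]$ forces $v\sim w$), this is also the only pair of the diamond that can be twins in the ambient graph.

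For $(i)\Rightarrow(ii)$ I will argue the contrapositive. Given an induced diamond $\{a,b,c,d\}$ for which $\{a,b\}$ is not a twin pair in $G$, pick a witness $y\notin\{a,b,c,d\}$ with, say, $y\sim a$ and $y\not\sim b$. The key observation is that $a$ is then universal in $G[\{a,b,c,d,y\}]$, so identifying this $5$-vertex subgraph reduces to identifying the $4$-vertex subgraph $G[\{b,c,d,y\}]$. A short case analysis on the adjacency of $y$ with $\{c,d\}$ shows that $G[\{b,c,d,y\}]$ must be $C_4$, $P_4$, or $P_3\cup K_1$, making the $5$-vertex subgraph $W_4$, gem, or dart, respectively---each forbidden by (i).

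For $(ii)\Rightarrow(iii)$, suppose $v\in V(G)$ and adjacent $w,z\in N(v)$, neither dominating $v$, fail to be twins in $G[N(v)]$; then some $u\in N(v)$ is adjacent in $G$ to exactly one of $w,z$, say $u\sim w$ and $u\not\sim z$. The set $\{v,w,u,z\}$ then induces a diamond of $G$ with central edge $vw$ and non-edge $uz$, and the preliminary observation tells us that $\{v,w\}$ is its only candidate twin pair; hypothesis (ii) makes it a twin pair, giving $N[v]=N[w]$ and so $w$ dominates $v$, contradicting our choice. For $(iii)\Rightarrow(i)$, each of $W_4$, gem, and dart has a unique universal vertex $v$ whose induced neighborhood $G[N(v)]$ is, respectively, $C_4$, $P_4$, or $P_3\cup K_1$. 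In each case any edge $wz$ of $G[N(v)]$ serves as a witness: neither endpoint dominates $v$ (each is missing some vertex of $N(v)$ from its $G$-neighborhood) and their open neighborhoods within $G[N(v)]$ differ, so $w,z$ are not twins in $G[N(v)]$, and block separability fails at $v$.

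The main obstacle will be the case analysis in $(i)\Rightarrow(ii)$; fortunately, the observation that $a$ becomes universal once $y$ is added collapses what could be a fiddly $5$-vertex enumeration into recognition of one of three very small $4$-vertex graphs, each immediately identifiable with an element of the forbidden list. The remaining implications are largely bookkeeping once the preliminary observation about diamonds is in hand.
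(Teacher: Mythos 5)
Your proof is correct and follows essentially the same route as the paper's: the same cyclic chain of implications, with the diamond-plus-distinguishing-vertex construction for (i)$\Rightarrow$(ii), the induced diamond $\{v,w,u,z\}$ obtained from a neighbor $u$ separating $w$ from $z$ for (ii)$\Rightarrow$(iii), and the degree-$4$ vertex of each of $W_4$, dart, gem for (iii)$\Rightarrow$(i). One small wording quibble: in (iii)$\Rightarrow$(i), ``their open neighborhoods differ'' is vacuously true for any pair of adjacent vertices and so does not by itself rule out their being twins (which concerns \emph{closed} neighborhoods); what you need --- and what does hold for every edge of $C_4$, $P_4$, and $P_3\cup K_1$ --- is a third vertex of $N(v)$ adjacent to exactly one endpoint of the edge.
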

\begin{proof} 
 (i) $\Longrightarrow$ (ii) If $D \subseteq V(G)$ induces a diamond with universal vertices $v,w$ and there exists $x \in N(v) \setminus N(w)$, then $D \cup \{x\}$ induces a $W_4$, a dart, or a gem in $G$ depending on the remaining adjacencies between $x$ and the vertices of $D$.

 (ii) $\Longrightarrow$ (iii) Suppose $v \in V(G)$ is not block separable, thus $N[v]$ contains two adjacent vertices $w$ and $z$ not dominating $v$ that are not twins in $G[N[v]]$; say $d(z) \geq d(w)$. Then, $v$ and $z$ are the universal vertices of a diamond containing $w$ and a vertex in $N(z) \setminus N(w)$, i.e., $G$ contains an induced diamond with no twin vertices.

 (iii) $\Longrightarrow$ (i) The $W_4$, dart, and gem graphs have a vertex of degree $4$ that is not block separable.
\end{proof}

Note that if $v$ is block separable, then $N[v]$ can be partitioned into sets $B_0, \ldots, B_\ell$ where $v \in B_0$ and each $B_i$ is a block of $G[N[v]]$.  Moreover, no vertex in $B_i$ is adjacent to a vertex in $B_j$, for $1 \leq i < j \leq \ell$.  We refer to $B_0, \ldots, B_\ell$ as the \Definition{block separation} of $v$.  By definition, $\{v\}S$ is a maximal star of $G$ with $|S|> 1$ if and only if $\ell > 1$, $|S \cap B_0| = 0$ and $|S \cap B_i| = 1$ for \range{i}{1}{\ell}.  By Theorem~\ref{thm:w4dartgemequivalence}, every vertex of a \{$W_4$, dart, gem\}-free graph admits a block separation, hence the next result follows.

\begin{lemma}
 Let $G$ be a \{$W_4$, dart, gem\}-free graph with a coloring $\rho$.  Then, $\rho$ is a star coloring of $G$ if and only if
 \begin{itemize}
  \item $|\rho(B)| = |B|$ for every block $B$ of $G$, and
  \item for every $v \in V(G)$ with block separation $B_0, \ldots, B_\ell$, there exists $B_i$ such that $\rho(v) \not\in \rho(B_i)$.
 \end{itemize}
\end{lemma}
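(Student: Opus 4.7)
My plan is to bootstrap the lemma off the explicit characterization of maximal stars that the paper states immediately before it: in a $\{W_4,\text{dart},\text{gem}\}$-free graph, every maximal star $\{v\}S$ with $|S|>1$ is obtained, for some $v$ with block separation $B_0,\ldots,B_\ell$ having $\ell>1$, by picking exactly one vertex from each $B_i$ with $1\le i\le\ell$. The only maximal stars not covered by this description are those with $|S|=1$, and I will dispatch them first by observing that $\{v\}\{w\}$ is a maximal star exactly when $v$ and $w$ are twins in $G$: maximality simultaneously forbids extending the star on either side, which is equivalent to $N(v)\setminus\{w\}=N(w)\setminus\{v\}$.

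For the forward implication, suppose $\rho$ is a star coloring. The first condition is direct from Observation~\ref{obs:block coloring} (or, equivalently, from the remark above together with the maximal star $\{v\}\{w\}$ available whenever $v,w$ are twins). For the second condition, fix $v$ with block separation $B_0,\ldots,B_\ell$ and suppose for contradiction that $\rho(v)\in\rho(B_i)$ for every $i\in\{1,\ldots,\ell\}$; in particular $\ell\ge 1$, and because $\ell=1$ would give $|S|=1$ in the characterization, we actually have $\ell\ge 2$. Choose $u_i\in B_i$ with $\rho(u_i)=\rho(v)$ for each $i\ge 1$. The characterization then yields that $\{v\}\{u_1,\ldots,u_\ell\}$ is a maximal star, and it is $\rho$-monochromatic, contradicting that $\rho$ is a star coloring.

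For the converse, assume both conditions hold and let $\{v\}S$ be an arbitrary maximal star of $G$. If $|S|=1$, write $S=\{w\}$; by the opening remark $v,w$ lie in a common block $B$ of $G$, and condition~1 gives $\rho(v)\ne\rho(w)$. If $|S|>1$, the characterization yields $\ell>1$ together with $|S\cap B_i|=1$ for each $1\le i\le\ell$; condition~2 provides some $B_j$ with $\rho(v)\notin\rho(B_j)$, and the unique vertex of $S\cap B_j$ witnesses that $\{v\}S$ is not monochromatic.

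There is essentially no obstacle: the work has already been done in establishing the characterization of maximal stars in $\{W_4,\text{dart},\text{gem}\}$-free graphs and in Observation~\ref{obs:block coloring}. The only care needed is to note explicitly that length-one maximal stars force twin pairs, so they are captured by condition~1 rather than condition~2, and that condition~2 is vacuous when $\ell<2$ (a case which, by the characterization, produces no maximal stars with $|S|>1$ centered at $v$).
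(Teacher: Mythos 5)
Your proof is correct and takes essentially the same approach as the paper: the paper offers no explicit proof, asserting only that the lemma ``follows'' from the block-coloring observation and the preceding characterization of maximal stars via block separations, and your argument is precisely that deduction written out, splitting maximal stars into twin pairs (covered by the first condition) and transversals of $B_1,\ldots,B_\ell$ (covered by the second). Your explicit treatment of the size-one maximal stars and of the degenerate case $\ell<2$ fills in details the paper leaves implicit, and your reading of the second condition as trivially satisfied in that degenerate case is the intended one.
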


It is well known that the blocks of a graph $G$ can be computed in $O(n+m)$ time.  Hence, it takes $O(d(v)^2)$ time obtain the block separation of a block separable vertex $v$, and, consequently, the star $k$-coloring and the star $k$-choosability problems are in \NP and \ptwop for \{$W_4$, dart, gem\}-free graphs, respectively.

\begin{theorem}\label{thm:np w4-dart-gem}
 \stcol{k} is \NP when the input is restricted to \{$W_4$, dart, gem\}-free graphs.
\end{theorem}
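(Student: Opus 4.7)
The plan is to exhibit a polynomial-time verifier for the decision problem and therefore conclude membership in \NP. A natural certificate for a positive instance is simply a $k$-coloring $\rho\colon V(G)\to\{1,\ldots,k\}$, which has size $O(n\log k)$. All the work sits in showing that one can check in polynomial time that $\rho$ is indeed a star coloring, even though a \{$W_4$, dart, gem\}-free graph may contain exponentially many maximal stars.

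To avoid enumerating maximal stars, I would lean on the characterization stated in the lemma immediately preceding the theorem together with Theorem~\ref{thm:w4dartgemequivalence}. By the latter, every vertex of $G$ is block separable, so for each $v\in V(G)$ the block separation $B_0,\ldots,B_\ell$ of $N[v]$ exists and, once the blocks of $G[N[v]]$ are known, can be read off in $O(d(v)^2)$ time by grouping vertices of $N(v)$ into maximal sets of twins in $G[N[v]]$. Computing the blocks of $G$ globally also takes $O(n+m)$ time. Thus in total, across all vertices, the preprocessing fits in $O(nm)$ time.

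Given the blocks and block separations, the verifier performs the two checks from the lemma: (1) for every block $B$ of $G$, confirm $|\rho(B)|=|B|$; and (2) for every $v$ with block separation $B_0,\ldots,B_\ell$, confirm that there exists some index $i\in\{1,\ldots,\ell\}$ with $\rho(v)\notin\rho(B_i)$. Each check is clearly polynomial in the size of the relevant neighborhood, so the whole verification runs in polynomial time. By the lemma, $\rho$ passes both tests if and only if $\rho$ is a star $k$-coloring of $G$, so the verifier is sound and complete.

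The main obstacle is conceptual rather than technical: the worry is that exponentially many maximal stars might force the verifier to examine an exponential number of objects. The block-separability characterization sidesteps this by collapsing the constraints imposed by all maximal stars centered at $v$ into the single condition that $\rho(v)$ be missing from some block of the separation, which is what the two bullets of the preceding lemma formalize. With this in hand, membership in \NP is immediate.
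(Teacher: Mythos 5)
Your proposal is correct and follows essentially the same route as the paper: the paper also takes the $k$-coloring itself as the certificate, invokes Theorem~\ref{thm:w4dartgemequivalence} to guarantee that every vertex is block separable, computes each block separation in $O(d(v)^2)$ time, and verifies the two conditions of the preceding lemma in polynomial time. No meaningful differences to report.
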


\begin{theorem}\label{thm:ptp w4-dart-gem}
 \stchose{k} is \ptwop when the input is restricted to \{$W_4$, dart, gem\}-free graphs.
\end{theorem}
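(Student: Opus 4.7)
The substance of the theorem is a strict drop in complexity: on arbitrary graphs, star $k$-choosability is $\Pi_3^p$-complete by Theorem~\ref{thm:stchose-k}, so the assertion is that forbidding $W_4$, dart, and gem moves the problem one level down in the polynomial hierarchy. My plan is to exhibit, under this restriction, a polynomial-time procedure that certifies whether a given coloring is a star coloring, and then wrap the standard $\forall\exists$ quantifier template around it.

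\textbf{Polynomial-time verification of star colorings.} Let $G$ be $\{W_4,\text{dart},\text{gem}\}$-free. By Theorem~\ref{thm:w4dartgemequivalence}, every vertex $v$ of $G$ is block separable, so $N[v]$ has a canonical block separation $B_0,\ldots,B_\ell$ with $v\in B_0$. The lemma immediately preceding the theorem then characterizes star colorings $\rho$ of $G$ by two local conditions: (a) every block of $G$ is colored with pairwise distinct colors; (b) for every $v\in V(G)$, some block $B_i$ with $i\geq 1$ in the block separation of $v$ satisfies $\rho(v)\notin \rho(B_i)$. Both conditions can be checked in polynomial time: the blocks of $G$ are computable in $O(n+m)$ time (as recalled in the text preceding Theorem~\ref{thm:np w4-dart-gem}), and the block separation of each $v$ is obtainable in $O(d(v)^2)$ time from $G[N[v]]$, after which conditions~(a) and (b) reduce to trivial scans. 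Hence the predicate ``$\rho$ is a star $L$-coloring of $G$'' is decidable in polynomial time, a state of affairs that fails for general graphs precisely because there the set of maximal stars need not be locally recoverable from the block structure.

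\textbf{Packaging the quantifier structure.} Equipped with a polynomial-time predicate $V(G,L,\rho)$ that decides whether $\rho$ is a star $L$-coloring of $G$, the star $k$-choosability question reads: for every $k$-list assignment $L$ of $G$ there exists an $L$-coloring $\rho$ satisfying $V(G,L,\rho)$. Both $L$ and $\rho$ are polynomially bounded (we may assume the colors lie in a polynomial-size universe), so this is exactly the textbook $\Pi_2^p$-template—a $\forall\exists$ quantification with a polynomial-time inner predicate. Therefore \stchose{k} belongs to \ptwop on the stated class.

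\textbf{Main obstacle.} The quantifier packaging is entirely routine; the substance is the polynomial-time check, which in turn rests on the fact that Theorem~\ref{thm:w4dartgemequivalence} supplies a block separation at every vertex and the characterization lemma immediately preceding the theorem converts the global property ``no maximal monochromatic star'' into a union of local block-level properties. The only delicate point is confirming that condition~(b) really captures maximality of the offending star—this is precisely the equivalence recorded in that lemma, so once it is invoked there is essentially nothing left to verify.
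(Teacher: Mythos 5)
Your proposal is correct and matches the paper's argument: the paper likewise derives membership in \ptwop directly from Theorem~\ref{thm:w4dartgemequivalence} and the block-separation characterization lemma, noting that blocks are computable in $O(n+m)$ time and each block separation in $O(d(v)^2)$ time, so that verifying a star $L$-coloring is polynomial and the $\forall L\,\exists\rho$ quantifier template yields \ptwop. Your explicit remarks about bounding the color universe and about condition~(b) forcing $i\geq 1$ (since $v\in B_0$) are harmless refinements of the same proof.
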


We now consider the biclique coloring problem.  The algorithm for determining if a coloring $\rho$ is a biclique coloring of $G$ is divided in two steps.  First, it checks that no monochromatic maximal star is a maximal biclique.  Then, it checks that $G$ contains no monochromatic maximal biclique $K_{i,j}$ with $2 \leq i \leq j$. 

For the first step, suppose $\rho$ is a coloring of $G$ where $\rho(B) = |B|$ for every block $B$ of $G$.  Let $v$ be a vertex with a block separation $B_0, \ldots, B_\ell$.  As discussed above, $\{v\}S$ is a maximal star if and only if $\ell > 1$, $|S \cap B_0| = 0$, and $|S \cap B_i| = 1$ for every \range{i}{1}{\ell}.  If $\{v\}S$ is not a maximal biclique, then there exists $w \in V(G) \setminus N[v]$  adjacent to all the vertices in $S$.  Observe that $w$ has at most one neighbor in $B_i$ with color $c$, for each color $c$.  Otherwise, taking into account that twin vertices have different colors, $v,w,y,z$ would induce a diamond with no twin vertices, for $y,z \in N(v) \cap N(w)$.  Therefore, at most one monochromatic maximal star with center $v$ is included in a biclique containing $w$, for each $w \in V(G) \setminus N(v)$.  Thus, to check if there is a monochromatic maximal biclique containing $v$ we first check whether $\prod_{i=1}^\ell|\{z \in B_i \mid \rho(z) = \rho(v)\}| < n$.  If negative, then $\rho$ is not a biclique coloring of $G$.  Otherwise, all the monochromatic maximal stars with center in $v$ are generated in polynomial time, and for each such star $\{v\}S$ it is tested if there exists $w \in V(G) \setminus N[v]$ adjacent to all the vertices in $S$.

\begin{lemma}\label{lem:w4-dart-gem stars}
 If a \{$W_4$, dart, gem\}-free graph $G$ and a coloring $\rho$ are given as input, then it takes polynomial time to determine if there exists a monochromatic maximal biclique $\{v\}S$ with $v \in V(G)$.
\end{lemma}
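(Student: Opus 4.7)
The plan is to realize the informal algorithm sketched in the paragraph preceding the lemma as a polynomial-time procedure and verify its correctness. I would first handle monochromatic maximal bicliques $\{v\}S$ with $|S|=1$: by a direct case analysis, $\{v,u\}$ is a maximal biclique exactly when $v,u$ are twins (otherwise one can add a vertex adjacent to only one of them), so a linear scan over the blocks of $G$ detects such a monochromatic $K_{1,1}$ (it exists iff some block contains two same-colored vertices). Henceforth I may assume $\rho(B)=|B|$ for every block $B$, and concentrate on monochromatic maximal stars with at least two leaves.

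For each vertex $v$ I would compute its block separation $B_0,\ldots,B_\ell$ in $O(d(v)^2)$ time via Theorem~\ref{thm:w4dartgemequivalence}. As recalled in the paper, the maximal stars centered at $v$ with $|S|\ge 2$ correspond bijectively to choices of one vertex from each $B_i$, $1\le i\le \ell$ (with $\ell\ge 2$). Hence the number of monochromatic such stars is $N_v := \prod_{i=1}^{\ell}|\{z\in B_i : \rho(z)=\rho(v)\}|$, computable in $O(d(v))$ time. The algorithm branches on whether $N_v\ge n$ or $N_v<n$.

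If $N_v\ge n$ it returns ``yes''. The key observation, already present in the paper, is that every $w\in V(G)\setminus N[v]$ is a common neighbor of the leaves of at most one monochromatic maximal star centered at $v$: otherwise $w$ would have two $\rho(v)$-colored neighbors $y,z$ inside some $B_i$, and $\{v,w,y,z\}$ would induce a diamond whose non-adjacent pairs $\{v,w\}$ and $\{y,z\}$ are both non-twin (the first because $w\notin N[v]$, the second because two twins would share color $\rho(v)$, contradicting $\rho(B)=|B|$), violating Theorem~\ref{thm:w4dartgemequivalence}. Since $|V(G)\setminus N[v]|<n\le N_v$, at least one of the $N_v$ monochromatic maximal stars at $v$ is not extendable to a larger biclique and hence is a maximal biclique.

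If instead $N_v<n$, the algorithm explicitly enumerates the at most $N_v$ monochromatic maximal stars $\{v\}S$ at $v$ in polynomial time, and for each one checks in $O(n^2)$ time whether some $w\in V(G)\setminus N[v]$ satisfies $S\subseteq N(w)$; star-maximality already forbids extensions of $\{v\}S$ on the leaf side, so $\{v\}S$ is a maximal biclique if and only if no such $w$ exists. Iterating over every $v\in V(G)$ and answering ``yes'' as soon as any vertex succeeds yields the desired polynomial-time procedure. The only non-routine ingredient is the pigeonhole argument in the case $N_v\ge n$, which rests crucially on Theorem~\ref{thm:w4dartgemequivalence} to ensure that no external vertex can simultaneously kill too many monochromatic maximal stars at $v$.
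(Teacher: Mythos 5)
Your proposal is correct and follows essentially the same route as the paper: the same reduction to rainbow blocks, the same count $\prod_{i=1}^{\ell}|\{z\in B_i : \rho(z)=\rho(v)\}|$ compared against $n$, the same diamond-plus-Theorem~\ref{thm:w4dartgemequivalence} argument showing each $w\notin N[v]$ extends at most one monochromatic maximal star centered at $v$, and explicit enumeration in the sparse case. One small slip: in the induced diamond on $\{v,w,y,z\}$ the pair $\{y,z\}$ is its \emph{adjacent} (universal) pair, not a non-adjacent one, but it is indeed the only candidate twin pair and your color-based exclusion of it is exactly the right argument, so nothing breaks.
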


For the second step, suppose $S$ is an independent set with at least two vertices, and let $I = \bigcap_{v \in S}N(v)$.  Note that if $w,z \in I$ are adjacent, then they are twins in $G$ because $w, z$ are the universal vertices of any induced diamond formed by taking a pair of vertices in $S$.  Hence, $I$ can be partitioned into a collection $B_1, \ldots, B_\ell$ of blocks of $G$ where no vertex in $B_i$ is adjacent to a vertex in $B_j$, for $1 \leq i < j \leq \ell$.  Thus, $ST$ is a maximal biclique of $G$ if and only if no vertex of $V(G) \setminus S$ is complete to $I$ and $|T \cap B_i| = 1$ for every \range{i}{1}{\ell}.  That is, $G$ has a monochromatic maximal biclique $ST$ if and only if $|\rho(S)| = 1$, each block of $I$ has a vertex of color $\rho(S)$, and $\bigcap_{w\in I} N(w) = S$.

\begin{lemma}\label{lem:w4-dart-gem bicliques}
 Let $G$ be a \{$W_4$, dart, gem\}-free graph.  If an independent set $S$ and a coloring $\rho$ of $G$ are given as input, then it takes polynomial time to determine if $G$ has a monochromatic maximal biclique $ST$ with $T \subseteq V(G)$.
\end{lemma}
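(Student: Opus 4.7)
The plan is to turn the structural discussion immediately preceding the lemma statement into an explicit polynomial-time algorithm. Given the input $(G, S, \rho)$, I would first dispose of the degenerate case $|S| \leq 1$ by invoking Lemma~\ref{lem:w4-dart-gem stars}, since a biclique with a side of size one is a star. Assume henceforth that $|S| \geq 2$, and compute $I = \bigcap_{v \in S} N(v)$ by scanning the neighborhoods of the vertices of $S$; this takes polynomial time.

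The central structural step is to show that $I$ partitions into twin classes $B_1, \ldots, B_\ell$ of $G$, with no edges between distinct classes. To see this, pick any two adjacent vertices $w, z \in I$ and any two vertices $u, v \in S$ (which exist and are non-adjacent because $|S| \geq 2$ and $S$ is independent). Then $\{u, v, w, z\}$ induces a diamond with $w, z$ as its universal vertices; by Theorem~\ref{thm:w4dartgemequivalence} this diamond contains a twin pair, and since $u, v$ are non-adjacent the twins must be $w$ and $z$. Consequently, adjacency inside $I$ coincides with the twin relation, so the block decomposition is computable in polynomial time by comparing closed neighborhoods of vertices in $I$.

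With the blocks in hand, I verify three polynomial-time conditions: (a) $|\rho(S)| = 1$; (b) every $B_i$ contains some vertex of color $\rho(S)$; and (c) $\bigcap_{w \in I} N(w) = S$. By the characterization stated just above the lemma, these three conditions hold simultaneously iff $G$ admits a monochromatic maximal biclique $ST$ with bipartition extending $S$. Indeed, selecting one color-$\rho(S)$ vertex per $B_i$ yields an independent set $T \subseteq I$, biclique adjacency to $S$ is immediate from $T \subseteq I$, monochromaticity follows from (a) and (b), maximality of $T$ on its side is forced by the block structure (any other vertex of $I$ is a twin of some chosen representative, hence adjacent to it), and maximality of $S$ follows from (c).

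The main obstacle, which is more conceptual than computational, is justifying why the global condition (c) captures $S$-side maximality even though $T$ is typically a strict subset of $I$. The crux is that, because each $B_i$ is a set of twins, any external vertex $x$ adjacent to the single chosen representative of $B_i$ is automatically adjacent to all of $B_i$; iterating over $i$ forces $x \in \bigcap_{w \in I} N(w)$, so condition (c) rules out any such $x \notin S$. This is precisely where the \{$W_4$, dart, gem\}-free hypothesis, via Theorem~\ref{thm:w4dartgemequivalence}, earns its keep: without the twin partition of $I$, the maximality test could not be collapsed into a single global intersection, and the problem would not have an obvious polynomial-time algorithm.
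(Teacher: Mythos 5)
Your proposal is correct and takes essentially the same route as the paper: the paper's argument is exactly the paragraph preceding the lemma, which computes $I=\bigcap_{v\in S}N(v)$, uses the diamond-has-twins property from Theorem~\ref{thm:w4dartgemequivalence} to partition $I$ into pairwise non-adjacent blocks $B_1,\ldots,B_\ell$, and reduces the question to the same three checks (a)--(c). You merely make explicit two points the paper leaves implicit, namely the degenerate case $|S|\leq 1$ (handled via Lemma~\ref{lem:w4-dart-gem stars}) and the twin-propagation argument for why $\bigcap_{w\in I}N(w)=S$ certifies maximality on the $S$ side.
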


If $G$ is $K_{i,i}$-free for some constant $i$, then every biclique $ST$ of $G$ with $|S| \leq |T|$ has $|S| < i$.  Thus, to determine if $\rho$ is a biclique coloring of $G$, it is enough traverse every independent set $S$ of $G$ with $O(i)$ vertices and to check that there exists no $T \subset V(G)$ such that $ST$ is a monochromatic maximal biclique.  By Lemmas \ref{lem:w4-dart-gem stars}~and~\ref{lem:w4-dart-gem bicliques}, it takes polynomial time to determine if there exists $T$ such that $ST$ is a monochromatic maximal star.  Since there are $n^{O(1)}$ independent sets with at most $i$ vertices, the algorithm requires polynomial time.  We thus conclude that \bccol{k} and \bcchose{k} are respectively \NP and \ptwop when the input is restricted to \{$K_{i,i}$, $W_4$, dart, gem\}-free graphs.

\begin{theorem}\label{thm:np w4-dart-gem-kii}
 \bccol{k} is \NP when the input is restricted to \{$K_{i,i}$, $W_4$, dart, gem\}-free graphs, for $i\in O(1)$.
\end{theorem}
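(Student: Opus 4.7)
The plan is to show \NP membership by taking a biclique $k$-coloring $\rho$ of $G$ as a positive certificate and verifying it in polynomial time. Given such $\rho$, what must be checked is that no maximal biclique of $G$ is $\rho$-monochromatic. I would split this check into two subcases according to whether the maximal biclique in question is a star or has both sides of size at least two.

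First I would check that no monochromatic maximal star of $G$ is a maximal biclique; by Lemma~\ref{lem:w4-dart-gem stars}, iterating over all $v \in V(G)$ suffices and each call runs in polynomial time, so the total cost of this phase is polynomial. Second, I would handle maximal bicliques $ST$ with $2 \le |S| \le |T|$. Here the $K_{i,i}$-free assumption is the key structural ingredient: it forces $|S| < i$, and since $i \in O(1)$ the number of candidate independent sets $S$ with $|S| \le i-1$ is $n^{O(1)}$. For each such $S$, Lemma~\ref{lem:w4-dart-gem bicliques} decides in polynomial time whether there exists $T \subseteq V(G)$ making $ST$ a monochromatic maximal biclique. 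Enumerating all $S$ and invoking the lemma therefore runs in polynomial time as well.

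Combining the two phases, a proposed certificate $\rho$ can be validated in polynomial time, which places \bccol{k} in \NP when the input is restricted to $\{K_{i,i}, W_4, \mathrm{dart}, \mathrm{gem}\}$-free graphs. The only mildly subtle point — and the place where one must be careful — is that maximal stars of $G$ need not be maximal bicliques, so the first phase really does require Lemma~\ref{lem:w4-dart-gem stars} to decide maximality among bicliques rather than simply among stars; the enumeration of independent sets in the second phase is then restricted to $|S| \ge 2$, so the two phases cover all maximal bicliques without overlap issues.
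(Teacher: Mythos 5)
Your proposal is correct and follows essentially the same route as the paper: certify with a $k$-coloring $\rho$, use Lemma~\ref{lem:w4-dart-gem stars} to rule out monochromatic maximal bicliques that are stars, and use $K_{i,i}$-freeness to bound the smaller side of any remaining maximal biclique by $i-1$ so that the $n^{O(1)}$ candidate independent sets can be enumerated and tested via Lemma~\ref{lem:w4-dart-gem bicliques}. The subtlety you flag about star-maximality versus biclique-maximality is exactly the reason the paper separates the check into these two phases.
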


\begin{theorem}
 \bcchose{k} is \ptwop when the input is restricted to \{$K_{i,i}$, $W_4$, dart, gem\}-free graphs, for $i\in O(1)$.
\end{theorem}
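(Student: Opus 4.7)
The plan is to place \bcchose{k} in \ptwop by exhibiting it in the canonical form $(\forall L)(\exists \rho)\,\phi(G,L,\rho)$, where $L$ ranges over $k$-list assignments of $G$, $\rho$ ranges over $L$-colorings of $G$, and $\phi$ asserts that no maximal biclique of $G$ is $\rho$-monochromatic. Two ingredients are required: polynomial-size encodings for $L$ and $\rho$, and polynomial-time checkability of $\phi$ on the given graph class.

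First I would argue that the universal quantifier may be restricted to polynomial-size objects. Since whether a graph admits an $L$-coloring with any color-renaming invariant property depends only on the intersection pattern of the lists, every $k$-list assignment is equivalent to one with $L(v)\subseteq\{1,\dots,kn\}$, and hence has an encoding of polynomial size. The same bound then applies to any $L$-coloring $\rho$, so both quantifiers range over polynomial-size witnesses, as required.

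The main step is verifying $\phi$ in polynomial time, and for this I would invoke the machinery already developed for Theorem~\ref{thm:np w4-dart-gem-kii}. That argument shows that, for a $\{K_{i,i}, W_4, \mathrm{dart}, \mathrm{gem}\}$-free graph $G$ with $i\in O(1)$, deciding whether a given coloring $\rho$ is a biclique coloring reduces to two sub-checks: using Lemma~\ref{lem:w4-dart-gem stars}, one tests, for each vertex $v$, whether some monochromatic maximal biclique of the form $\{v\}S$ exists; using Lemma~\ref{lem:w4-dart-gem bicliques}, one tests, for each independent set $S$ of size between $2$ and $i$, whether some monochromatic maximal biclique $ST$ exists. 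Because $i$ is a constant, the number of independent sets to enumerate is $n^{O(1)}$, and each individual check runs in polynomial time, so the two sub-routines together give a polynomial-time verifier for $\phi$.

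Combining these pieces, \bcchose{k} on $\{K_{i,i}, W_4, \mathrm{dart}, \mathrm{gem}\}$-free graphs is expressible as a $\forall\exists$ statement with polynomial-size quantified variables and a polynomial-time inner predicate, so it lies in \ptwop. The only genuinely new technical work is the polynomial-time verifier for $\phi$, which is essentially already implicit in the proof of the preceding theorem, so I do not expect any substantive obstacle; the remaining steps are routine bookkeeping about the size of list assignments and colorings.
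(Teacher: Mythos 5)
Your proposal is correct and follows essentially the same route as the paper: the theorem is obtained there as a direct consequence of Lemmas~\ref{lem:w4-dart-gem stars} and~\ref{lem:w4-dart-gem bicliques}, which give a polynomial-time verifier for whether a given coloring is a biclique coloring on this class, combined with the standard $\forall\exists$ formulation of choosability. Your extra remark on restricting the universal quantifier to list assignments over $\{1,\ldots,kn\}$ is the same routine normalization the paper uses implicitly when taking a $k$-list assignment as the negative certificate.
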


In the rest of this section, we discuss the completeness of the star and biclique coloring and choosability problems.  As in Sections \ref{sec:general case}~and~\ref{sec:choosability}, only one proof is used for each problem because $C_4$-free graphs are considered.  For the reductions, two restricted satisfiability problems are required, namely \naesat and \naesatt.  A valuation $\nu$ of a CNF formula $\phi$ is a \Definition{nae-valuation} when all the clauses of $\phi$ have a true and a false literal.  The formula $(\exists \VEC x)\phi(\VEC x)$ is \Definition{nae-true} when $\phi(\VEC x)$ admits a nae-valuation, while $(\forall \VEC x)\phi(\VEC x)$ is \Definition{nae-true} when every valuation of $\phi(\VEC x)$ is a \Definition{nae-valuation}.   \naesat is the \NP-complete problem (see~\cite{Garey1979}) in which a CNF formula $\phi$ is given, and the goal is to determine if $\phi$ admits a nae-valuation.  Analogously, \naesatt is the \ptwop-complete problem (see~\cite{EiterGottlob1995}) in which a CNF formula $\phi(\VEC x, \VEC y)$ is given, and the purpose is to determine if $(\forall \VEC x)(\exists \VEC y)\phi(\VEC x, \VEC y)$ is nae-true.  We begin discussing the completeness of the star coloring problem.  In order to avoid induced diamonds, we define a replacement of long switchers.

\begin{defn}[diamond $k$-switcher]\label{def:diamond k-switcher}
  Let $G$ be a graph and $U = \{u_1, \ldots, u_h\}$ be an independent set of $G$ with $h \geq 2$.  Say that $S \subset V(G)$ is a \Definition{diamond $k$-switcher connecting $U$} ($k \geq 2$) when $S$ can be partitioned into a vertex $w_1$, a set of leafed vertices $\{w_2, \ldots, w_h\}$, a family $Q_1, \ldots, Q_h$ of $k$-keepers, and a clique $C$ with $k$ vertices in such a way that $C \cup \{w_1\}$ is a clique, $\{w_1\}\{w_2, \ldots, w_h\}$ is a star, $Q_i$ connects $u_i, w_i$ for $\range{i}{1}{h}$, and there are no more edges adjacent to vertices in $S$.
\end{defn}

A diamond $k$-switcher is depicted in Figure~\ref{fig:diamond switcher}.  The main properties of diamond switchers are given in the next lemma.

\begin{figure}[htb]
 \centering
 \includegraphics{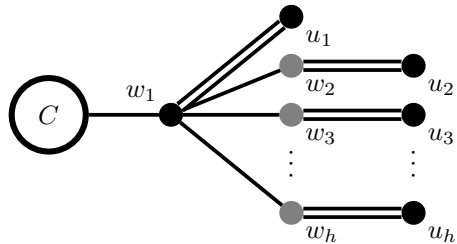}
 \caption{A diamond $k$-switcher connecting $\{u_1, \ldots, u_h\}$.}\label{fig:diamond switcher}
\end{figure}

\begin{lemma}\label{lem:diamond switcher properties}
 Let $G$ be a graph and $S$ be a diamond $k$-switcher connecting $U \subset V(G)$ ($k \geq 2$). Then,
\begin{enumerate}[(i)]
  \item no induced $C_4$, diamond, or $K_{k+2}$ of $G$ contains a vertex of $S$,\label{lem:diamond switcher properties:forbidden}
  \item $|\rho(U)| \geq 2$ for any star $k$-coloring $\rho$ of $G$, and \label{lem:diamond switcher properties:colors}
  \item Any $k$-coloring $\rho$ of $G \setminus S$ in which $|\rho(U)| \geq 2$ can be extended into a $k$-coloring of $G$ in such a way that no monochromatic maximal star has its center in $S$.\label{lem:diamond switcher properties:extension}
\end{enumerate}
\end{lemma}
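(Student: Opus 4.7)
The plan is to follow the template of Lemma~\ref{lem:long switcher properties}, leaning on Lemma~\ref{lem:keeper properties} wherever possible. For statement~(\ref{lem:diamond switcher properties:forbidden}), Lemma~\ref{lem:keeper properties}(\ref{lem:keeper properties:forbidden}) already rules out $C_4$'s and $K_{k+2}$'s that meet the interior of any $Q_i$, and the maximum cliques of $G$ meeting a non-keeper vertex of $S$ are $C \cup \{w_1\}$ and $\{u_i, w_i\} \cup D^{(i)}$, both of size $k+1$, so no $K_{k+2}$ reaches $S$. Small cycles through $w_1$, through $w_i$ ($i \geq 2$), or through $c \in C$ are eliminated by neighborhood inspection: the switcher definition confines these neighborhoods so narrowly (for instance every neighbor of $w_1$ lies in $C \cup \{w_2, \ldots, w_h\}$, and any two non-adjacent vertices in this set share only $w_1$ as a common neighbor) that no induced $C_4$ or diamond through them can close up. The delicate case is a diamond through a keeper-internal vertex $d \in D^{(i)}$: such a diamond must take the form $\{u_i, w_i, d, y\}$ for some $y \in N(u_i) \cap N(w_i) \setminus Q_i$, and here the switcher construction ensures that $w_i$'s only neighbors outside $Q_i \cup \{u_i\}$ are $w_1$ and the attached leaf, neither of which is a neighbor of $u_i$; so no such $y$ exists and the diamond collapses.

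For statement~(\ref{lem:diamond switcher properties:colors}), I argue by contradiction. If $|\rho(U)| = 1$ with common value $a$, then Lemma~\ref{lem:keeper properties}(\ref{lem:keeper properties:colors}) applied to each $Q_i$ gives $\rho(w_i) = a$ for every $i$. Since $C$ is a set of $k$ twin vertices, Observation~\ref{obs:block coloring} forces $|\rho(C)| = k$, so a unique $c^* \in C$ satisfies $\rho(c^*) = a$. The set $\{c^*, w_2, \ldots, w_h\}$ is independent (the $w_i$ are pairwise non-adjacent and $c^*$ has no neighbor outside $C \cup \{w_1\}$) and is maximal inside $N(w_1) = C \cup \{w_2, \ldots, w_h\}$ because every other vertex of $C$ is adjacent to $c^*$. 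Consequently $\{w_1\}\{c^*, w_2, \ldots, w_h\}$ is a monochromatic maximal star, contradicting that $\rho$ is a star $k$-coloring.

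For statement~(\ref{lem:diamond switcher properties:extension}), I extend $\rho$ in four steps: (1) set $\rho(w_i) := \rho(u_i)$ for every $i$; (2) apply Lemma~\ref{lem:keeper properties}(\ref{lem:keeper properties:extension}) to each $Q_i$, which is legitimate because step~(1) ensures $\rho(u_i) = \rho(w_i)$; (3) color $C$ bijectively with $\{1, \ldots, k\}$; (4) color each leaf attached to some $w_i$ ($i \geq 2$) via Observation~\ref{obs:leafed vertex}. The verification that no monochromatic maximal star is centered in $S$ splits by the center: keeper-internal vertices are controlled by step~(2); each $w_i$ with $i \geq 2$ by step~(4); each $c \in C$ by step~(3), since the maximal stars centered at $c$ have the form $\{c, c'\}$ with $c' \in C \setminus \{c\}$ and step~(3) assigns them distinct colors; and $w_1$ by step~(1) combined with the hypothesis $|\rho(U)| \geq 2$, because every maximal star centered at $w_1$ is of the form $\{w_1\}\{c, w_2, \ldots, w_h\}$ for some $c \in C$, which can be monochromatic only if every $\rho(w_j) = \rho(u_1)$, precluded by $U$ carrying two distinct colors.
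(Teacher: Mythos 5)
Your overall strategy coincides with the paper's: part (\ref{lem:diamond switcher properties:forbidden}) reduces to Lemma~\ref{lem:keeper properties}(\ref{lem:keeper properties:forbidden}) plus a local inspection, part (\ref{lem:diamond switcher properties:colors}) exhibits a monochromatic maximal star centered at $w_1$, and part (\ref{lem:diamond switcher properties:extension}) colors $C$ with all $k$ colors, copies $\rho(u_i)$ onto $w_i$, and invokes the keeper and leaf extension results. However, there is a concrete error that propagates through all three parts: you repeatedly assert that $N(w_1) = C \cup \{w_2, \ldots, w_h\}$. By Definition~\ref{def:diamond k-switcher} the keeper $Q_1$ connects $u_1$ with $w_1$, and by Definition~\ref{def:k-keeper connection} a $k$-keeper connecting $u_1, w_1$ makes $D \cup \{u_1, w_1\}$ a clique; hence $N(w_1)$ also contains $u_1$ and the $k-1$ vertices of the central clique $D$ of $Q_1$. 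Consequently, in part (\ref{lem:diamond switcher properties:colors}) the star $\{w_1\}\{c^*, w_2, \ldots, w_h\}$ is \emph{not} maximal --- $u_1$ is adjacent to none of $c^*, w_2, \ldots, w_h$ and can be added --- so a star $k$-coloring is not obliged to make it non-monochromatic, and your contradiction does not yet follow. The repair is immediate: the maximal star is $\{w_1\}\{c^*, u_1, w_2, \ldots, w_h\}$ (every vertex of $C \setminus \{c^*\}$ is adjacent to $c^*$, and every vertex of $D$ is adjacent to $u_1$), and it is still monochromatic under your assumption because $\rho(u_1) = a$. This is exactly the star the paper uses.

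The same omission makes your case analysis in part (\ref{lem:diamond switcher properties:extension}) incomplete: the maximal stars centered at $w_1$ are $\{w_1\}\{c, z, w_2, \ldots, w_h\}$ with $c \in C$ and $z \in D \cup \{u_1\}$, not just $\{w_1\}\{c, w_2, \ldots, w_h\}$. The verification still goes through --- when $z \in D$ the star contains a vertex of $Q_1$ and is non-monochromatic because the extension of Lemma~\ref{lem:keeper properties}(\ref{lem:keeper properties:extension}) guarantees no monochromatic maximal star meets $Q_1$, and when $z = u_1$ monochromaticity would force $\rho(u_j) = \rho(w_j) = \rho(w_1) = \rho(u_1)$ for all $j$, contradicting $|\rho(U)| \geq 2$ --- but as written the enumeration is wrong. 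The slip also appears in part (\ref{lem:diamond switcher properties:forbidden}) (``every neighbor of $w_1$ lies in $C \cup \{w_2,\ldots,w_h\}$''), though there the conclusion survives because $G[N(w_1)]$ is still a disjoint union of cliques once $D \cup \{u_1\}$ is accounted for. Everything else (the block argument for $C$, the diamond-through-$D$ analysis, the order of the extension steps) matches the paper's proof.
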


\begin{proof}
Let $U = \{u_1, \ldots, u_h\}$, $w_1, \ldots, w_h$, $Q_1$, \ldots, $Q_{h}$, and $C$ be as in Definition~\ref{def:diamond k-switcher}.  Statement (\ref{lem:diamond switcher properties:forbidden}) follows by statement~(\ref{lem:keeper properties:forbidden}) of Lemma~\ref{lem:keeper properties}, observing that $U$ is an independent set and that no induced diamond can contain a vertex in a $k$-keeper.  

(\ref{lem:keeper properties:colors}) Let $\rho$ be a star $k$-coloring of $G$.  Since $C$ is a block of size $k$, it contains a vertex $c$ with color $\rho(w_1)$ by Observation~\ref{obs:block coloring}.  Then, taking into account that $\{w_1\}\{c, u_1, w_2, \ldots, w_h\}$ is a maximal star and $\rho(u_i) = \rho(w_i)$ by statement~(\ref{lem:keeper properties:colors}) of Lemma~\ref{lem:keeper properties} for $\range{i}{1}{h}$, it follows that $\rho(u_i) \neq \rho(u_1)$ for some $\range{i}{2}{h}$.

(\ref{lem:keeper properties:extension}) To extend $\rho$, first set $\rho(C) = \{1, \ldots, k\}$ and $\rho(w_i) = \rho(u_i)$ ($\range{i}{1}{h}$), and then iteratively extend $\rho$ to color the leaves and the $k$-keepers according to Observation~\ref{obs:leafed vertex} and statement~(\ref{lem:keeper properties:extension}) of Lemma~\ref{lem:keeper properties}.
\end{proof}

We are now ready to prove the \NP-completeness of the star-coloring problem.

\begin{theorem}\label{thm:w4dartgemnpc}
\stcol{k} is $\NP$-complete when the input is restricted to \{$C_4$, diamond, $K_{k+2}$\}-free graphs for every $k \geq 2$.
\end{theorem}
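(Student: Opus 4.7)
My plan is to obtain \NP membership from Theorem~\ref{thm:np w4-dart-gem} and \NP-hardness by reducing from \naesat. Membership is immediate once I observe that each of $W_4$, dart, and gem contains an induced diamond; hence every $\{C_4,\text{diamond},K_{k+2}\}$-free graph is automatically $\{W_4,\text{dart},\text{gem}\}$-free and Theorem~\ref{thm:np w4-dart-gem} places the problem in \NP.

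For hardness, I intend to recycle the construction of Theorem~\ref{thm:stcol-2} with two modifications. First, every long $k$-switcher is replaced by a diamond $k$-switcher (Lemma~\ref{lem:diamond switcher properties}), which was introduced exactly to avoid induced diamonds. Second, the cluster gadgets of Definition~\ref{def:cluster} are dropped altogether: they would unavoidably generate induced diamonds through their center, and they are anyway unnecessary because \naesat has no universal quantifier, so the valuation need not be encoded through monochromatic stars. Concretely, from a $3$-CNF formula $\phi(\VEC x)$ with $\ell$ clauses $\VP P_1,\ldots,\VP P_\ell$ and $n$ variables $\VP x_1,\ldots,\VP x_n$, I build a graph $G$ containing a leafed variable vertex $v_i$ for each $\VP x_i$, a leafed literal vertex $q_{h,j}$ for each occurrence of a literal $l_{h,j}$ in $\VP P_h$, and $k$ leafed color vertices $c_1,\ldots,c_k$. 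I add a $k$-keeper between $v_i$ and $q_{h,j}$ whenever $l_{h,j}=\VP x_i$, a diamond $k$-switcher on $\{v_i,q_{h,j}\}$ whenever $l_{h,j}=\overline{\VP x_i}$, a diamond $k$-switcher on $\{q_{h,1},q_{h,2},q_{h,3}\}$ for each clause $\VP P_h$ (to enforce the NAE condition), and, as in Theorem~\ref{thm:stcol-2}, diamond $k$-switchers on $\{c_1,c_2\}$ and on every $\{c_q,w\}$ with $q>2$ and $w$ any other variable, literal, or color vertex. The last family of switchers forces $\rho(c_1),\ldots,\rho(c_k)$ to be pairwise distinct and pins every variable and literal vertex to a color in $\{\rho(c_1),\rho(c_2)\}$, reducing the task to a two-valued setting.

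Verifying that $G$ is $\{C_4,\text{diamond},K_{k+2}\}$-free will follow from statement~(\ref{lem:keeper properties:forbidden}) of Lemma~\ref{lem:keeper properties} and statement~(\ref{lem:diamond switcher properties:forbidden}) of Lemma~\ref{lem:diamond switcher properties}, which reduce the check to the subgraph $H$ induced by the variable, literal, and color vertices. In $H$ the only edges are the $v_iq_{h,j}$ pairs coming from positive-literal keepers; thus $H$ is a bipartite graph in which every $q_{h,j}$ has degree at most one together with a handful of isolated color vertices, which readily rules out induced $K_3$, $C_4$, diamond, and $K_{k+2}$.

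For correctness, given a nae-valuation $\nu$ of $\phi$ I set $\rho(c_q)=q$, $\rho(v_i)=2-\nu(\VP x_i)$, $\rho(q_{h,j})=\rho(v_i)$ when $l_{h,j}=\VP x_i$, and $\rho(q_{h,j})=3-\rho(v_i)$ when $l_{h,j}=\overline{\VP x_i}$, then extend to inner vertices and leaves via Observation~\ref{obs:leafed vertex}, statement~(\ref{lem:keeper properties:extension}) of Lemma~\ref{lem:keeper properties}, and statement~(\ref{lem:diamond switcher properties:extension}) of Lemma~\ref{lem:diamond switcher properties}. The nae-property of $\nu$ is exactly what is needed for the clause switcher on $\{q_{h,1},q_{h,2},q_{h,3}\}$ to admit the extension, since its three connected vertices do not all receive the same color. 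Conversely, from any star $k$-coloring $\rho$ of $G$ I recover a valuation by setting $\nu(\VP x_i)=1$ iff $\rho(v_i)=\rho(c_1)$; statement~(\ref{lem:diamond switcher properties:colors}) of Lemma~\ref{lem:diamond switcher properties} applied to each clause switcher ensures that every clause is nae-satisfied under $\nu$. The main obstacle in carrying out this plan is the bookkeeping required to rule out induced $C_4$'s and diamonds when a single variable $\VP x_i$ appears in many clauses and $v_i$ therefore acquires a large number of keeper-neighbors at once; here the careful separation between the connection skeleton and the gadget interiors will pay off.
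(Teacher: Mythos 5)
Your proposal is correct and follows the same overall strategy as the paper: membership via Theorem~\ref{thm:np w4-dart-gem} (since every diamond-free graph is \{$W_4$, dart, gem\}-free), and hardness by reducing \naesat using diamond $k$-switchers to encode variables, clauses, and colors. The difference is in the variable gadget. The paper keeps exactly two connection vertices $x_i,-x_i$ per variable, lets the clause switcher connect the relevant literal vertices directly, and---crucially---keeps the whole set of connection vertices independent, so that \emph{every} gadget is a diamond $k$-switcher and the $\{C_4,\text{diamond},K_{k+2}\}$-freeness of $G$ is a one-line consequence of statement~(\ref{lem:diamond switcher properties:forbidden}) of Lemma~\ref{lem:diamond switcher properties}. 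You instead create one leafed vertex per literal occurrence and tie it to its variable vertex by a bare $k$-keeper (positive occurrence) or a diamond $k$-switcher (negative occurrence); this works, and your clause and color switchers play the same role as the paper's, but it makes the skeleton $H$ non-edgeless (a keeper makes its two endpoints adjacent and puts them in a common $K_{k+1}$) and so buys you extra bookkeeping for no gain. One point needs patching in your write-up: statement~(\ref{lem:keeper properties:forbidden}) of Lemma~\ref{lem:keeper properties} excludes holes and $K_{k+2}$'s through keeper vertices but says nothing about \emph{diamonds}, and Lemma~\ref{lem:diamond switcher properties} only covers keepers sitting inside diamond switchers; for your bare keepers between $v_i$ and $q_{h,j}$ you must argue separately that no induced diamond meets them, which holds precisely because $v_i$ and $q_{h,j}$ have no common neighbour outside the keeper (otherwise such a neighbour $z$ together with $v_i$, $q_{h,j}$ and a vertex of $D$ would induce a diamond). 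That fact is true in your construction, but it is not delivered by the lemmas you cite, so it should be stated and checked explicitly; the paper sidesteps the issue entirely by never placing keepers between connection vertices.
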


\begin{proof}
 By Theorem~\ref{thm:np w4-dart-gem}, \stcol{k} is \NP for \{$C_4$, diamond, $K_{k+2}$\}-free graphs.  For the hardness part, we show a polynomial time reduction from \naesat.  That is, given a CNF formula $\phi$ with $\ell$ clauses $\VP P_1,\ldots, \VP P_\ell$ and $n$ variables $\VP x_1, \ldots, \VP x_n$, we define a \{$C_4$, diamond, $K_{k+2}$\}-free graph $G$ such that $\phi$ admits a nae-valuation if and only if $G$ admits a star $k$-coloring.  

 The vertices of $G$ are divided into \Definition{connection} and \Definition{inner} vertices.  For each \range{i}{1}{n} there are two connection vertices $x_i, -x_i$ \Definition{representing} the literals $\VP x_i$ and $\overline{\VP x_i}$, respectively.  Also, there are $k-2$ connection vertices $y_3, \ldots, y_k$.  Let $X = \{x_1, \ldots, x_n, -x_1, \ldots, -x_n\}$, $Y = \{y_3, \ldots, y_k\}$, and $P_h = \{x \in X \mid x \text{ represents a literal in } \VP P_h\}$ for \range{h}{1}{\ell}.  Inner vertices are the vertices included in diamond $k$-switchers connecting connection vertices.  For each $v \in X \cup Y$ and each $y \in Y$ there is a \Definition{color} diamond $k$-switcher connecting $\{v,y\}$.  Also, for each $\range{i}{1}{n}$ there is a \Definition{valuation} diamond $k$-switcher connecting $\{x_i, -x_i\}$.  Finally, there is a \Definition{clause} diamond $k$-switcher connecting $P_h$ for every \range{h}{1}{\ell}.  Observe that $X \cup Y$ is an independent set of $G$.  Thus, by statement~(\ref{lem:diamond switcher properties:forbidden}) of Lemma~\ref{lem:diamond switcher properties}, $G$ is \{$C_4$, diamond, $K_{k+2}$\}-free.  

 Suppose $\phi$ has a nae-valuation $\nu:\VEC x \to \{0,1\}$, and let $\rho$ be a $k$-coloring of the connection vertices such that $\rho(x_i) = 2-\nu(\VP x_i)$ and $\rho(y_j) = j$ for \range{i}{1}{n} and \range{j}{3}{k}.  Clearly, every color or valuation $k$-switcher connects a pair of vertices that have different colors.  Also, since $\nu$ is a nae-valuation, every set $P_h$ (\range{h}{1}{\ell}) has two vertices representing literals $\VP l_1$ and $\VP l_2$ of $\VP P_h$ with $\nu(l_1) \neq \nu(l_2)$.  Hence, $P_h$ is not monochromatic, thus every clause diamond $k$-switcher connects a non-monochromatic set of vertices.  Therefore, by statement~(\ref{lem:diamond switcher properties:extension}) of Lemma~\ref{lem:diamond switcher properties}, $\rho$ can be iteratively extended into a star $k$-coloring of $G$.

 For the converse, suppose $G$ admits a star $k$-coloring $\rho$.  By applying statement~(\ref{lem:diamond switcher properties:colors}) of Lemma~\ref{lem:diamond switcher properties} while considering the different kinds of diamond $k$-switchers, we observe the following facts.  First, by the color diamond $k$-switchers, $|\rho(Y)| = k-2$ and $\rho(X) \cap \rho(Y) = \emptyset$.  Then, we can assume that $\rho(X) \subseteq \{1,2\}$ and $\rho(Y) = \{3, \ldots, k\}$.  Hence, by the valuation diamond $k$-switcher connections, we obtain that $\rho(x_i) \neq \rho(-x_i)$ for every $\range{i}{1}{n}$.  Thus, the mapping $\nu:\VEC x \to \{0,1\}$ such that $\nu(\VP x_i) = 2-\rho(x_i)$ is a valuation.  Moreover, by the clause diamond $k$-switcher connections, $P_h$ is not monochromatic for \range{h}{1}{\ell}.  Consequently, $\nu$ is a nae-valuation of $\phi$.
\end{proof}

Observe that the graph $G$ defined in Theorem~\ref{thm:w4dartgemnpc} is not chordal.  However, as discussed in Section~\ref{sec:chordal}, every edge $xy$ such that $x,y$ are connected by a $k$-keeper (inside the diamond $k$-switchers) can be subdivided so as to eliminate all the induced holes of length at most $i$, for every $i \in O(1)$.

We now deal with the star choosability problem.  Recall that long switchers are not well suited for the star choosability problem because they contain keepers, and vertices connected by keepers need not have the same colors in every list coloring.  Keepers are also present inside diamond switchers, thus it is not a surprise that diamond $k$-switchers are not star $k$-choosable.   For this reason, as in Section~\ref{sec:choosability}, the proof is by induction, using list switchers for $k=2$.  Since list switchers contain induced diamonds, the \stp-hardness will be obtained for \{$W_4$, gem, dart\}-free graphs, and not for diamond-free graphs.  Unfortunately, we did not find a way to avoid these diamonds.  Moreover, some kind of forcers are required as well; our forcers have induced diamonds that we were not able to remove either.  The hardness proof for $k=2$ is, in some sense, a combination of the proofs of Theorems~\ref{thm:stchose-2}~and~\ref{thm:w4dartgemnpc}.  Roughly speaking, the idea is to force the colors of the universal variables of the input formula as in Theorem~\ref{thm:stchose-2}, while a nae-valuation is encoded with colors as in Theorem~\ref{thm:w4dartgemnpc}.

\begin{theorem}\label{thm:w4-dart-gem ptp complete}
  \stchose{2} is \ptwop-hard when its input is restricted to \{$C_4$, dart, gem, $K_4$\}-free graphs.
\end{theorem}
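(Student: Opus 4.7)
The plan is to reduce \naesatt (which is \ptwop-complete) to \stchose{2}, combining the $2$-forcer technique of Theorem~\ref{thm:stchose-2} with the colour-based nae-encoding of Theorem~\ref{thm:w4dartgemnpc}. Given a CNF instance $\phi(\VEC x,\VEC y)$ with clauses $\VP P_1,\ldots,\VP P_\ell$, I would construct a $\{C_4,\text{dart},\text{gem},K_4\}$-free graph $G$ such that $G$ is star $2$-choosable if and only if $(\forall\VEC x)(\exists\VEC y)\phi(\VEC x,\VEC y)$ is nae-true.

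The connection vertices of $G$ would be $x_i,-x_i$ for each universal variable $\VP x_i$; $y_j,-y_j$ for each existential variable $\VP y_j$; a clause vertex $p_h$ for each clause $\VP P_h$; and a small number of reference vertices used to anchor the colour encoding uniformly. The inner gadgets are: (a) a $2$-forcer on each $x_i$ and on each reference vertex, so that through Lemma~\ref{lem:forcer properties}(iii) the adversary's list assignment can pin down these colours; (b) list switchers connecting $\{x_i,-x_i\}$ and $\{y_j,-y_j\}$ to force opposite colours on each polarity pair; and (c) for each clause $\VP P_h$, a list switcher attached to $p_h$ and its literal vertices (in the style of the clause incidences in Theorems~\ref{thm:stcol-2} and~\ref{thm:stchose-2}), arranged so that a monochromatic maximal star centred at $p_h$ corresponds precisely to $\VP P_h$ being all-true or all-false in the induced valuation.

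For the forward direction, I would assume $\phi$ is nae-true and let $L$ be any $2$-list assignment. I would pick $L$-admissible colours for the forcer-controlled vertices using Lemma~\ref{lem:forcer properties}(ii), read off a canonical valuation $\nu(\VEC x)$ from these colours in the manner of Theorem~\ref{thm:stchose-2}, invoke the hypothesis to obtain $\nu(\VEC y)$ that nae-satisfies $\phi$, colour $y_j,-y_j$ in accordance with $\nu(\VEC y)$, and extend to the inner vertices via Lemma~\ref{lem:list switcher properties}(iii) and Observation~\ref{obs:leafed vertex}; nae-satisfaction of every clause prevents a monochromatic maximal star centred at any $p_h$. For the converse, I would reverse the argument: starting from $\nu(\VEC x)$, Lemma~\ref{lem:forcer properties}(iii) provides an $L$ whose unique admissible colour on each $x_i$ encodes $\nu(\VP x_i)$; any star $L$-colouring of $G$ then determines $\nu(\VEC y)$ via $\rho(y_j)$, and the clause gadgets force $\phi$ to be nae-satisfied, exactly as in Theorem~\ref{thm:stchose-2}.

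The main obstacle is structural: the cluster gadgets of Theorem~\ref{thm:stchose-2} induce $3\ell$-wheels and hence contain induced gems, so they cannot be reused verbatim; the substitution by list switchers and $2$-forcers must preserve the invariant that a monochromatic maximal star at each clause vertex corresponds exactly to a non-nae valuation. For the $\{C_4,\text{dart},\text{gem},K_4\}$-freeness of $G$ I would combine Lemmas~\ref{lem:forcer properties}(i) and~\ref{lem:list switcher properties}(i) with the block-separability characterisation of Theorem~\ref{thm:w4dartgemequivalence}: distinct gadgets share only their designated connection vertices, and within each gadget every adjacent pair of neighbours of a connection vertex $v$ is either a twin pair in $G[N(v)]$ or contains a vertex that dominates $v$, so every vertex of $G$ is block-separable and no induced $W_4$, dart, or gem can arise.
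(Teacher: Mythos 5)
Your overall strategy coincides with the paper's: reduce \naesatt, pin the universally quantified variables with $2$-forcers, use list switchers for the polarity pairs, and argue $\{C_4,\text{dart},\text{gem},K_4\}$-freeness via Lemmas \ref{lem:list switcher properties}~and~\ref{lem:forcer properties}.  But the clause gadget, which is the heart of the reduction, is left unspecified, and the version you sketch does not work.  A single clause vertex $p_h$ adjacent to the literal vertices of $P_h$ yields a monochromatic maximal star only when every literal vertex has color $\rho(p_h)$; since $\rho(p_h)$ is one fixed color, this can detect ``all literals true'' or ``all literals false'' but not their disjunction, and if $P_h$ is monochromatic in the color $p_h$ did not receive, no monochromatic star appears and the non-nae valuation goes undetected.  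You need \emph{two} detectors per clause, one anchored to the reference color representing true and one to the reference color representing false.  This is exactly what the paper does: it uses no clause vertices at all, but instead places list switchers connecting $P_h \cup \{t\}$ and $P_h \cup \{-t\}$, where $t$ and $-t$ are the forcer-controlled reference vertices; Lemma~\ref{lem:list switcher properties}(\ref{lem:list switcher properties:colors}) then forbids $P_h$ from being monochromatic in either reference color.

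The second gap is in your forward direction.  You say you would ``colour $y_j,-y_j$ in accordance with $\nu(\VEC y)$,'' but the adversary's lists $L(y_j)$, $L(-y_j)$ need not contain the colors $\rho(t)$, $\rho(-t)$ that encode truth values, so the semantic coloring may simply be unavailable.  The entire difficulty of the choosability reductions is to show that whenever the encoding fails, the offending gadgets are \emph{automatically} non-monochromatic: this is why the paper connects the existential pairs by the triples $\{t,x_i,-x_i\}$ and $\{-t,x_i,-x_i\}$ rather than by the bare pair (so that $\rho(x_i)=\rho(-x_i)$ is tolerated exactly when both colors avoid $\{\rho(t),\rho(-t)\}$), and why its proof explicitly verifies that $\rho(P_h)$ can be monochromatic only if $\rho(P_h)\not\subseteq\{\rho(t),\rho(-t)\}$, in which case both clause switchers are satisfied for free.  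Your proposal needs this case analysis spelled out; without it the claim that Lemma~\ref{lem:list switcher properties}(\ref{lem:list switcher properties:extension}) applies to every gadget is unsupported.  The structural remarks at the end are fine in spirit but cannot be checked until the clause gadget is fixed.
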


\begin{proof}
 The hardness of \stchose{2} is obtained by reducing \naesatt.  That is, given a CNF formula $\phi(\VEC z, \VEC x)$ with $\ell$ clauses $\VP P_1, \ldots, \VP P_\ell$, and $m+n$ variables $\VEC x= \VP x_1, \ldots, \VP x_n$, $\VEC z = \VP z_1, \ldots, \VP z_m$, we build a \{$C_4$, dart, gem, $K_4$\}-free graph $G$ that is star $2$-choosable if and only if $(\forall\VEC z)(\exists \VEC x)\phi(\VEC z, \VEC x)$ is nae-true.  For the sake of simplicity, in this proof we use $i$, $f$, and $h$ as indices that refer to values in $\{1, \ldots, n\}$, $\{1, \ldots, m\}$, and $\{1, \ldots, \ell\}$.
  
 Graph $G$ is an extension of the graph in Theorem~\ref{thm:w4dartgemnpc} for $k = 2$ (replacing diamond switcher with list switchers).  It has a \Definition{connection} vertex $x_i$ (resp.\ $-x_i$, $z_f$, $-z_f$) \Definition{representing} $\VP x_i$ (resp.\ $\overline{\VP x_i}$, $\VP z_f$, $\overline{\VP z_f}$) for each $i$ (and each $f$), and two \Definition{connection} vertices $t$, $-t$.  Let $L_X = \{x_i, -x_i \mid 1 \leq i \leq n\}$, $Z = \{z_1, \ldots, z_m\}$, $-Z = \{-z_1, \ldots, -z_m\}$, and $P_h = \{x \mid x \in X \cup Z \cup -Z \text{ represents a literal in } \VP P_h\}$.  Graph $G$ also has \Definition{inner vertices} which are the vertices in list switchers and $2$-forcers connecting connection vertices.  There are $2$-forcers connecting each vertex of $Z \cup \{t, -t\}$, and list switchers connecting: $\{t, x_i, -x_i\}$ and $\{-t, x_i, -x_i\}$ for each $i$; $\{z_f,-z_f\}$ for each $f$; and $P_h \cup \{t\}$ and $P_h \cup \{-t\}$ for each $h$.  
  
 Let $L$ be a $2$-list assignment of $G$, and suppose $(\forall\VEC z)(\exists\VEC x)\phi(\VEC z, \VEC x)$ is nae-true.  Define $\rho$ as an $L$-coloring of the connection vertices satisfying the following conditions.
 \begin{enumerate}[(i)]
  \item $\rho(v)$ is any color $L$-admissible for $v \in Z \cup \{t,-t\}$.  Such a color always exists by statement~(\ref{lem:forcer properties:extension}) of Lemma~\ref{lem:forcer properties}.  Suppose, w.l.o.g., that $\rho(t) = 1$ and $\rho(-t) \in \{0,1\}$, and define $\nu(\VEC z)$ as a valuation of $\VEC z$ such that $\nu(\VP z_f) = 1$ if and only if $\rho(z_f) = 1$.  

  \item By hypothesis, $\nu$ can be extended into a nae-valuation of $\phi(\VEC z, \VEC x)$.  If $\nu(\VP x_i) \in L(x_i)$, then $\rho(x_i) = \nu(\VP x_i)$.  Otherwise, $\rho(x_i) \in L(x_i) \setminus \{1 - \nu(\VP x_i)\}$.  Similarly, $\rho(-x_i) = 1-\nu(\VP x_i)$ if $1-\nu(\VP x_i) \in L(-x_i)$, while $\rho(-x_i) \in L(-x_i) \setminus \{\nu(\VP x_i)\}$ otherwise.
    
  \item $\rho(-z_f) \in L(z_f) \setminus \{\rho(z_f)\}$.  
 \end{enumerate}
 It is not hard to see that $\rho$ can always be obtained.  Observe that $\rho(-z_f) \neq \rho(z_f)$, $\rho(x_i) = \rho(-x_i)$ only if $\rho(x_i), \rho(-x_i) \not\in \{\rho(t), \rho(-t)\}$, and $\rho(P_h)$ is monochromatic only if $\rho(P_h) \not\subseteq \{\rho(t), \rho(-t)\}$.  Therefore, $\rho$ can be extended into an $L$-coloring of $G$ by statement~(\ref{lem:list switcher properties:extension}) of Lemma~\ref{lem:list switcher properties} and statement~(\ref{lem:forcer properties:extension}) of Lemma~\ref{lem:forcer properties}.
 
 For the converse, suppose $G$ is star $2$-choosable, and consider any valuation $\nu$ of $\VEC z$.  Define $L$ to be a $2$-list assignment of $G$ such that $1$ is the unique color admissible for $t$, $0$ is the unique color admissible for $-t$, $\nu(\VP z_f)$ is the unique color admissible for $z_f$, and $L(v) = \{0,1\}$ for $v \in V(G) \setminus (Z \cup \{t,-t\})$.  By statement~(\ref{lem:forcer properties:color}) of Lemma~\ref{lem:forcer properties}, such a list assignment always exists.  Let $\rho$ be a star $L$-coloring of $G$.  By repeatedly applying statement~(\ref{lem:list switcher properties:colors}) of Lemma~\ref{lem:list switcher properties}, it can be observed that none of $\{t,x_i,-x_i\}$, $\{-t, x_i, -x_{i}\}$, $\{t\} \cup P_h$, and $\{-t\}$ are monochromatic.  Therefore, $\nu$ is a nae-valuation of $\phi(\VEC z, \VEC x)$.
\end{proof}

Note that if $C$ is a $k$-switcher, then every induced diamond that contains a vertex $u \in C$ also contains a twin of $u$.  Hence, by Theorem~\ref{thm:w4dartgemequivalence}, if $F$ is a $k$-forcer connecting $v$, then no vertex in $F$ belongs to an induced dart or gem.  Consequently, if $G_k$ is a \{$C_4$, dart, gem, $K_{k+2}$\}-free graph, then the graph $G_{k+1}$ defined in the proof of Theorem~\ref{thm:stchose-k} is \{$C_4$, dart, gem, $K_{k+3}$\}-free.  That is, a verbatim copy of the proof of Theorem~\ref{thm:stchose-k} can be used to conclude the following.

\begin{theorem}
  \stchose{k} is \ptwop-complete when its input is restricted to \{$C_4$, dart, gem, $K_{k+2}$\}-free graphs for every $k \geq 2$.
\end{theorem}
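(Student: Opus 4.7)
The plan is to induct on $k$, reusing verbatim the construction from Theorem~\ref{thm:stchose-k} and merely verifying that it additionally preserves the absence of induced darts and gems. The base case $k=2$ is exactly Theorem~\ref{thm:w4-dart-gem ptp complete}. Membership in \ptwop follows immediately from Theorem~\ref{thm:ptp w4-dart-gem}, since every $\{C_4,\text{dart},\text{gem}\}$-free graph is also $\{W_4,\text{dart},\text{gem}\}$-free because $W_4$ contains an induced $C_4$.

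For the inductive step, I would take a $\{C_4,\text{dart},\text{gem},K_{k+2}\}$-free graph $G_k$, assume (by induction) that deciding star $k$-choosability on such graphs is \ptwop-hard, and build $G_{k+1}$ exactly as in the proof of Theorem~\ref{thm:stchose-k}: attach a new vertex $z$ connected through a $(k+1)$-forcer, together with $(k+1)$-switchers $C(w)$ connecting $\{z,w\}$ for every $w\in V(G_k)$. The equivalence ``$G_k$ is star $k$-choosable iff $G_{k+1}$ is star $(k+1)$-choosable'' is then inherited word-for-word from that proof, as are the $C_4$-free and $K_{k+3}$-free conclusions about $G_{k+1}$.

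The only genuine new task, already flagged in the paragraph preceding the statement, is to rule out induced darts and gems in $G_{k+1}$. By Theorem~\ref{thm:w4dartgemequivalence} this reduces to showing that every induced diamond of $G_{k+1}$ contains a pair of twin vertices. Diamonds lying entirely inside $G_k$ are fine by induction. For a diamond meeting some $(k+1)$-switcher $C(w)$, the vertices of $C(w)$ have neighbourhood $C(w)\setminus\{c\}\cup\{z,w\}$ with $z,w$ non-adjacent, so a short case split on how the four diamond vertices distribute between $C(w)$ and $\{z,w\}$ shows that the diamond must use two vertices of $C(w)$, which are then twins in it. The same observation handles vertices lying in the switchers making up the $(k+1)$-forcer attached to $z$, because leafed vertices of the forcer's sets $A$ and $B$ cannot sit in any diamond, and distinct switchers of the forcer share no edges, so any diamond through a forcer vertex is trapped inside a single $(k+1)$-switcher and the previous analysis applies.

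The step that requires the most care is this last case analysis: one must cleanly separate, for each possible role of a vertex (switcher member, forcer element of $A$ or $B$, interior of an internal switcher $C(a,b)$), the shape of its closed neighbourhood, and check that a diamond cannot escape its block. None of these cases is hard, but they are the reason the inductive step is not literally a copy of Theorem~\ref{thm:stchose-k}. Once they are verified, the proof is complete.
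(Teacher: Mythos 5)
Your proposal matches the paper's argument essentially verbatim: the paper also proves this by induction, taking Theorem~\ref{thm:w4-dart-gem ptp complete} as the base case, reusing the construction of Theorem~\ref{thm:stchose-k} for the inductive step, and observing (via Theorem~\ref{thm:w4dartgemequivalence}) that every induced diamond meeting a $k$-switcher contains a pair of twins, so no darts or gems are created. The proof is correct; the only difference is that you spell out the diamond case analysis slightly more explicitly than the paper does.
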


\section{Split graphs}
\label{sec:split}

In this section we consider the star coloring and star choosability problems restricted to split graphs.  The reason for studying split graphs is that they form an important subclass of chordal graphs, and also correspond to the class of \{$2K_2$, $C_4$, $C_5$\}-free graphs~\cite{Golumbic2004}.  A graph $G$ is \Definition{split} when its vertex set can be partitioned into an independent set $S(G)$ and a clique $Q(G)$.  There are $O(d(v))$ maximal stars centered at $v \in V(G)$, namely $\{v\}(N(v) \cap S(G))$ and $\{v\}((N(v) \cup \{w\})\setminus N(w))$ for $w \in Q(G)$.  Thus, the star coloring and star choosability problems on split graphs are \NP and \ptwop, respectively.  In this section we prove the completeness of both problems.

We begin observing that $G$ admits a star coloring with $\beta+1$ colors, where $\beta$ is the size of the maximum block.  Indeed, each block $B$ of $Q(G)$ is colored with colors $\{1, \ldots, |B|\}$, while each vertex of $S(G)$ is colored with color $\beta+1$.  We record this fact in the following observation.
\begin{observation}
 If $G$ is a split graph whose blocks have size at most $\beta$, then $G$ admits a star coloring using $\beta+1$ colors.  Furthermore, such a coloring can be obtained in linear time.
\end{observation}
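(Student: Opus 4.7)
The plan is to exhibit the coloring explicitly, verify directly that no maximal star is monochromatic by splitting on which side of the split partition the center of the star lies, and conclude with the running-time observation.

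First, I would fix a split partition $V(G) = S(G) \cup Q(G)$ and compute the partition of $G$ into blocks, which by a standard twin-detection procedure takes $O(n+m)$ time.  Note that if $v \in S(G)$ then $v$ has no twins (any other $u \in S(G)$ is non-adjacent to $v$ while $v \in N[v]$, so $N[u] \neq N[v]$), so every block of $G$ meeting $S(G)$ is a singleton.  Hence every block $B$ of size $\geq 2$ is contained in $Q(G)$ and has $|B| \leq \beta$.  The coloring $\rho$ is defined by: for each block $B \subseteq Q(G)$, assign the colors $\{1,\dots,|B|\}$ injectively to the vertices of $B$; and set $\rho(v) = \beta+1$ for every $v \in S(G)$.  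Thus $\rho$ uses $\beta+1$ colors and can be produced in linear time.

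Next I would verify that no maximal star $\{v\}S$ of $G$ is $\rho$-monochromatic by a case analysis on the center $v$.  If $v \in S(G)$, then $S \subseteq N(v) \subseteq Q(G)$; since $Q(G)$ is a clique and $S$ must be an independent set, $|S| = 1$ and so $S = \{w\}$ for some $w \in Q(G)$.  Then $\rho(v) = \beta+1$ while $\rho(w) \leq \beta$, so $\{v\}\{w\}$ is not monochromatic.  If $v \in Q(G)$, then $\rho(v) \leq \beta$, so monochromaticity would force $S \cap S(G) = \emptyset$, i.e., $S \subseteq Q(G) \setminus \{v\}$; again by the clique property, $S = \{w\}$ for a single $w \in Q(G)$.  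The key step is that the maximal star $\{v,w\}$ has two possible centers, so its maximality rules out extension at either one: for every $u \in N(v) \setminus \{w\}$ we must have $u \in N(w)$, and for every $u \in N(w) \setminus \{v\}$ we must have $u \in N(v)$.  Combined, this yields $N[v] = N[w]$, so $v$ and $w$ are twins of $G$ and therefore lie in the same block, which forces $\rho(v) \neq \rho(w)$, contradicting monochromaticity.

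The main (very mild) obstacle is the $K_{1,1}$-maximality argument inside $Q(G)$; the rest is bookkeeping.  Since the blocks and the coloring itself are produced in $O(n+m)$ time, the linear-time claim follows.
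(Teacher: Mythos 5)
Your coloring is exactly the one the paper uses (the paper merely states it, without the verification you supply), and your case analysis on the center of a maximal star --- in particular the observation that a maximal star $\{v\}\{w\}$ with both endpoints in $Q(G)$ forces $N[v]=N[w]$, hence $v,w$ lie in a common block and receive distinct colors --- is the right argument and is correct.

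The one flaw is the auxiliary claim that ``every block of $G$ meeting $S(G)$ is a singleton.'' Your parenthetical only rules out twins \emph{within} $S(G)$; it does not rule out a vertex $v\in S(G)$ being a twin of a vertex $u\in Q(G)$, and this can happen: take $G=K_2$ with $Q(G)=\{u\}$, $S(G)=\{v\}$ (or any complete split graph), where $N[u]=N[v]$ and the block $\{u,v\}$ meets both sides. In that situation the block containing $u$ is not a subset of $Q(G)$, so your first coloring rule, read literally, leaves $u$ uncolored. The repair is immediate and does not disturb anything else: color, for each block $B$ of $G$, the set $B\cap Q(G)$ injectively with colors $1,\ldots,|B\cap Q(G)|$ (so at most $\beta$ colors are used on $Q(G)$, since $|B|\le\beta$), and keep $\rho\equiv\beta+1$ on $S(G)$. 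Your two verification cases only need that vertices of $Q(G)$ get colors in $\{1,\ldots,\beta\}$ and that twins lying in $Q(G)$ get distinct colors, both of which the repaired rule provides, so the proof goes through.
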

Computing a star coloring of a split graph using $\beta+1$ colors is easy, but determining if $\beta$ colors suffice is an $\NP$-complete problem.  The proof of hardness is almost identical to the one in Section~\ref{sec:diamond-free} for \{$C_4$, diamond, $K_{k+2}$\}-free graphs.  That is, given a CNF formula $\phi$ we build a graph $G$ using \emph{split} switchers in such a way that $\phi$ admits a nae-valuation if and only if $G$ admits a star $k$-coloring.  As switchers, \Definition{split} switchers force a set of vertices to have at least two colors in a star $k$-coloring.  The difference is that split switchers do so in a split graph (see Figure~\ref{fig:split-switcher}).

\begin{figure}
 \centering\includegraphics{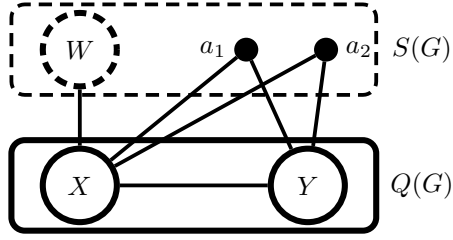}
 \caption{A split $k$-switcher connecting a set $W$.}\label{fig:split-switcher}
\end{figure}

\begin{defn}[split $k$-switcher]\label{def:nae connection}
  Let $G$ be a split graph and $W \subseteq S(G)$ with $|W| \geq 2$.  Say that $S \subset V(G) \setminus W$ is a \Definition{split $k$-switcher connecting $W$} when $S$ can be partitioned into two sets $X, Y \subseteq Q(G)$ and two vertices $a_1, a_2 \in S(G)$ in such a way that $|X| = |Y| = k$, $X \cup \{v\}$ is a clique for every $v \in W \cup \{a_1, a_2\}$, $Y \cup \{a_1\}$ and $Y \cup \{a_2\}$ are cliques, there are no more edges between vertices in $X \cup Y$ and vertices in $S(G)$, and there are no more edges incident to $a_1$ and $a_2$.
\end{defn}

The properties of split $k$-switchers are summarized in the following lemma.

\begin{lemma}\label{lem:split switcher properties}
 Let $G$ be a split graph and $S$ be a split $k$-switcher connecting $W \subset S(G)$. Then,
\begin{enumerate}[(i)]
  \item $|\rho(W)| > 1$ for every star $k$-coloring $\rho$ of $G$, and\label{lem:split switcher properties:colors}
  \item For every $k$-list assignment of $G$, any $L$-coloring $\rho$ of $G \setminus S$ in which $|\rho(W)| > 1$ can be extended into an $L$-coloring of $G$ in such a way that no monochromatic maximal star has its center in $S$.\label{lem:split switcher properties:extension}
\end{enumerate}
\end{lemma}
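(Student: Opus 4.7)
For (i), the plan is to apply Observation~\ref{obs:block coloring} after establishing that $X$ and $Y$ are blocks of $G$ of size $k$. By the switcher definition, every $v \in X$ shares the closed neighbourhood $N[v] = Q(G) \cup W \cup \{a_1,a_2\}$; a short argument---any twin of a vertex of $X$ must be adjacent to $a_1$, hence lies in $X \cup Y$, and must also be adjacent to every $w \in W$, hence not in $Y$---shows that $X$ is indeed a block, and the analogous claim holds for $Y$. Consequently any star $k$-coloring $\rho$ of $G$ satisfies $\rho(X) = \rho(Y) = \{1,\ldots,k\}$. Assuming for contradiction that $\rho(W) = \{c\}$, I would pick $x^* \in X$ and $y^* \in Y$ both of colour $c$ and exhibit the monochromatic maximal star $\{x^*\}(W \cup \{y^*\})$: the leaf set is independent because $Y$ has no neighbours in $W$, and the star is maximal because any further neighbour of $x^*$ lies in $Q(G) \cup \{a_1,a_2\}$ and is therefore adjacent to $y^*$.

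For (ii) the plan is a three-step extension of $\rho$: first assign $|X| = k$ distinct colours to $X$ (possible because each list has $k$ elements, trivially satisfying Hall's condition on $X$); next do the same for $Y$; and finally pick $\rho(a_1) \in L(a_1)$, $\rho(a_2) \in L(a_2)$ with $\rho(a_1) \neq \rho(a_2)$, which is feasible since each list has at least two elements.

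The main obstacle is verifying that every maximal star with a centre in $S$ is non-monochromatic under this extension. The crucial structural observation is that, for $x \in X$, a maximal independent subset $T$ of $N(x)$ contains at most one vertex of the clique $Q(G) \setminus \{x\}$; a case analysis on that optional vertex reveals that $T$ is either a singleton $\{x'\} \subseteq X$ (a block star, handled by distinctness in $X$), a set $W \cup \{y\}$ with $y \in Y$ (containing all of $W$), the set $W \cup \{a_1,a_2\}$, or $\{z\} \cup (W \setminus N(z)) \cup \{a_1,a_2\}$ with $z \in Q(G) \setminus (X \cup Y)$ (necessarily containing $\{a_1,a_2\}$ because such $z$ is non-adjacent to $a_1,a_2$ by the switcher definition). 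Thus every maximal star centred at $x$ is non-monochromatic by one of the invariants $|\rho(W)| \geq 2$ or $\rho(a_1) \neq \rho(a_2)$. A parallel analysis for $y \in Y$ reduces the possibilities to block stars and to stars of the form $\{y\}\{a_1,a_2\}$ possibly extended by a vertex of $Q(G) \setminus (X \cup Y)$, all of which contain $\{a_1,a_2\}$. Finally, no maximal star is centred solely at $a_1$ or $a_2$, since any candidate $\{a_i,v\}$ with $v \in X \cup Y$ extends with $v$ as the new centre---by an element of $W$ when $v \in X$, and by $a_{3-i}$ when $v \in Y$---so $a_1$ and $a_2$ appear only as leaves in the stars already considered.
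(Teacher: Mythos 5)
Your proposal is correct and follows essentially the same route as the paper: part (i) uses Observation~\ref{obs:block coloring} on the blocks $X$ and $Y$ and the maximal star $\{x^*\}(W\cup\{y^*\})$, and part (ii) uses a rainbow $L$-coloring of $X$ and $Y$ together with $\rho(a_1)\neq\rho(a_2)$, showing every maximal star centered in $S$ contains two twins, all of $W$, or both $a_1$ and $a_2$. If anything, your case analysis is slightly more explicit than the paper's (which folds the twin stars and the non-maximality of stars centered at $a_1,a_2$ into a single sentence), but the content is identical.
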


\begin{proof}
 Let $X$, $Y$, $a_1$, and $a_2$ be as in Definition~\ref{def:nae connection}.
 
 (\ref{lem:split switcher properties:colors}) Let $\rho(G)$ be a star $k$-coloring of $G$.  By definition, $X$ and $Y$ are blocks of $G$, thus $|\rho(X)| = |\rho(Y)| = k$ by Observation~\ref{obs:block coloring}.  Let $x_c$ and $y_c$ be the vertices with color $c$ in $X$ and $Y$, respectively.  Since the maximal star $\{x_c\}(W \cup \{y_c\})$ is not monochromatic, it follows that $\rho(W) \neq \{c\}$.  Therefore, $|\rho(W)| > 1$.
 
 (\ref{lem:split switcher properties:extension}) To extend $\rho$ to $S$, define $|\rho(X)| = |\rho(Y)| = k$, and $\rho(a_1) \neq \rho(a_2)$.  Let $\{v\}V$ be a maximal star with $v \in S$.  If either $v \in Y$ or $v \in X$ and $V \cap Y = \emptyset$, then $\{a_1, a_2\} \subseteq V$, thus $\{v\}V$ is not monochromatic.  Otherwise, if $v \in X$ and $V \cap Y \neq \emptyset$, then $W \subseteq V$, thus $\{v\}V$ is not monochromatic as well.
\end{proof}

By replacing diamond $k$-switchers with split $k$-switchers in the proof of Theorem~\ref{thm:w4dartgemnpc}, the $\NP$-completeness of the star $k$-coloring problem for split graphs is obtained.  For the sake of completeness, we sketch the proof, showing how to build the graph $G$ from the CNF formula.

\begin{theorem}\label{thm:splitnpc}
\stcol{k} restricted to split graphs is $\NP$-complete for every $k \geq 2$. 
\end{theorem}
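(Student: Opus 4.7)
The plan is to mimic the proof of Theorem~\ref{thm:w4dartgemnpc} almost line-by-line, replacing every diamond $k$-switcher with a split $k$-switcher. Given a CNF formula $\phi$ with $\ell$ clauses $\VP P_1,\ldots,\VP P_\ell$ on variables $\VP x_1,\ldots,\VP x_n$, I build a split graph $G$ as follows. The independent part $S(G)$ contains a pair of literal vertices $x_i,-x_i$ for each variable, $k-2$ color vertices $y_3,\ldots,y_k$, and the dummies $a_1,a_2$ contributed by each split $k$-switcher. The clique part $Q(G)$ is the union of all the $X$ and $Y$ blocks of the split $k$-switchers, so $Q(G)$ is one large clique that contains many blocks of size $k$. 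Letting $L = \{x_i,-x_i \mid 1 \leq i \leq n\}$ and $Y = \{y_3,\ldots,y_k\}$, the switchers come in the same three flavours as in Theorem~\ref{thm:w4dartgemnpc}: a color split $k$-switcher for every pair $\{v,y\}$ with $v \in L \cup Y$, $y \in Y$, and $v \neq y$; a valuation split $k$-switcher for every pair $\{x_i,-x_i\}$; and a clause split $k$-switcher for every $P_h := \{v \in L \mid v \text{ represents a literal of } \VP P_h\}$.

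By construction $G$ is split, and membership of \stcol{k} in \NP on split graphs was already noted at the beginning of the section. It then remains to show that $\phi$ is nae-satisfiable iff $G$ admits a star $k$-coloring. In the forward direction, starting from a nae-valuation $\nu$ of $\phi$, I set $\rho(x_i) = 2 - \nu(\VP x_i)$, $\rho(-x_i) = 1 + \nu(\VP x_i)$, and $\rho(y_j) = j$. Each color and valuation switcher then connects a two-colored set by construction, and each clause switcher does so because $\nu$ is a nae-valuation; statement~(\ref{lem:split switcher properties:extension}) of Lemma~\ref{lem:split switcher properties} applied iteratively extends $\rho$ to a star $k$-coloring of $G$. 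Conversely, given a star $k$-coloring $\rho$ of $G$, statement~(\ref{lem:split switcher properties:colors}) applied to the color switchers forces $\rho(y_3),\ldots,\rho(y_k)$ to be pairwise distinct and disjoint from $\rho(L)$, so we may assume $\rho(L) \subseteq \{1,2\}$; the valuation switchers then give $\rho(x_i) \neq \rho(-x_i)$, and the clause switchers yield a non-monochromatic $\rho(P_h)$. Setting $\nu(\VP x_i) = 2 - \rho(x_i)$ produces a nae-valuation of $\phi$.

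The point that requires care, and which I expect to be the main subtlety, is that Lemma~\ref{lem:split switcher properties} was stated for a single split $k$-switcher whereas here many switchers share the common clique $Q(G)$. For statement~(\ref{lem:split switcher properties:colors}) this survives because, for the maximal star $\{x_c\}(W\cup\{y_c\})$ used in its proof, every additional neighbour of $x_c$ in the enlarged graph lies in $Q(G)$ and is therefore adjacent to $y_c$, so that star remains maximal. A symmetric check ensures that the extension in statement~(\ref{lem:split switcher properties:extension}) still produces no monochromatic maximal star centered inside any switcher, since coloring each block of $Q(G)$ with all $k$ colors and each pair $\{a_1,a_2\}$ with two distinct colors yields exactly the conditions on which the single-switcher argument depends.
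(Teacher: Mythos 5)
Your proposal is correct and follows essentially the same route as the paper: membership via the $O(n+m)$ bound on maximal stars of split graphs, and hardness by rerunning the reduction of Theorem~\ref{thm:w4dartgemnpc} from \naesat with diamond $k$-switchers replaced by split $k$-switchers, invoking Lemma~\ref{lem:split switcher properties}. Your extra check that the switchers interact correctly through the shared clique $Q(G)$ is sound (and is in fact already covered by the lemma being stated for an arbitrary ambient split graph), so no gap remains.
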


\begin{proof}
 Split graphs have $O(n+m)$ maximal stars, hence \stcol{k} is $\NP$ for split graphs.  For the hardness part, let $\phi$ be a CNF formula with $m$ clauses $\VP P_1,\ldots, \VP P_m$ and $n$ variables $\VP x_1, \ldots, \VP x_n$.  Define $G$ as the split graphs with \Definition{connection} and \Definition{inner} vertices, as follows.  For each \range{i}{1}{n} there are two connection vertices $x_i, -x_i$ \Definition{representing} the literals $\VP x_i$ and $\overline{\VP x_i}$, respectively.  Also, there are $k-2$ connection vertices $y_3, \ldots, y_k$.  Let $X = \{x_1, \ldots, x_n, -x_1, \ldots, -x_n\}$, $Y = \{y_3, \ldots, y_k\}$, and $P_h = \{x \in X \mid x \text{ represents a literal in } \VP P_h\}$ for \range{h}{1}{\ell}.  Inner vertices form the split $k$-switchers connecting connection vertices.  For each $v \in X \cup Y$ and each $y \in Y$ there is a \Definition{color} split $k$-switcher connecting $\{v,y\}$.  Also, for each $\range{i}{1}{n}$ there is a \Definition{valuation} split $k$-switcher connecting $\{x_i, -x_i\}$.  Finally, there is a \Definition{clause} split $k$-switcher connecting $P_h$ for every \range{h}{1}{\ell}. Clearly, $G$ is a split graph, and, as in the proof of Theorem~\ref{thm:w4dartgemnpc}, $G$ admits star $k$-star coloring if and only if $\phi$ admits a nae-valuation.
\end{proof}

For the star choosability problem of split graphs, the idea is to adapt the proof of Theorem~\ref{thm:w4-dart-gem ptp complete}, providing a new kind of forcer.  This new forcer is called the \emph{split $k$-forcer} and it is just a $k$-forcer where its $k$-switchers form a clique.  For the sake of completeness, we include its definition.

\begin{defn}[split $k$-forcer]\label{def:k-split-forcer connection}
  Let $G$ be a split graph and $v \in S(G)$.  Say that $F \subseteq V(G)$ is a \emph{split $k$-forcer connecting $v$} ($k \geq 2$) when $F$ can be partitioned into sets $A, B \subseteq S(G)$ and $C(a,b) \subseteq Q(G)$ for $a \in A\cup\{v\}$ and $b \in B$ in such a way that $|A| = k-1$, $|B| = k^k-1$, $C(a, b) \cup \{a\}$ and $C(a,b) \cup \{b\}$ are cliques, there are no more edges between vertices in $F \cap Q(G)$ and vertices in $S(G)$, and there are no more edges incident to vertices in $A \cup B$.
\end{defn}

Let $L$ be a $k$-list assignment of a split graph $G$ and $F$ be a split $k$-forcer connecting $v \in V(G)$.  As in Section~\ref{sec:choosability}, we say that $c \in L(v)$ is \emph{$L$-admissible for $v$} when there is an $L$-coloring $\rho$ of $G$ such that $\rho(v) = c$ and no monochromatic maximal star has its center in $F$.  The following lemma resembles Lemma~\ref{lem:forcer properties}.

\begin{lemma}\label{lem:split forcer properties}
  Let $G$ be a split graph and $F$ be a split $k$-forcer connecting $v \in S(G)$.  Then, 
  \begin{enumerate}[(i)]
    \item for every $k$-list assignment $L$ of $G$ there is an $L$-admissible color for $v$, and\label{lem:split forcer properties:extension}
    \item every $k$-list assignment $L$ of $G \setminus F$ can be extended into a $k$-list assignment of $G$ in which $v$ has a unique $L$-admissible color.
  \end{enumerate}
\end{lemma}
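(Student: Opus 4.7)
The plan is to mimic the proof of Lemma~\ref{lem:forcer properties} almost verbatim, with the split structure substituted in. The key structural fact is that every clique $C(a,b) \subseteq Q(G)$ is a block of $G$: each $c \in C(a,b)$ has closed neighborhood $\{a,b\} \cup Q(G)$, so the vertices of $C(a,b)$ are pairwise twins. By Observation~\ref{obs:block coloring}, any star $L$-coloring $\rho$ of $G$ therefore satisfies $|\rho(C(a,b))| = k$, and hence $\rho(C(a,b)) = L(c)$ for any $c \in C(a,b)$ whenever the lists on $C(a,b)$ all coincide.

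For part~(i), I would set $A^* = A \cup \{v\}$, define the complete bipartite graph $H$ with bipartition $\{h_a : a \in A^*\}$ and $\{h_b : b \in B\}$, and set $M(h_u) = L(u)$. Marx's list-coloring theorem~\cite{MarxTCS2011} then yields a vertex $M$-coloring $\gamma$ of $H$. Build $\rho$ by taking $\rho(u) = \gamma(h_u)$ for $u \in A^* \cup B$, list-coloring each $K_k$-block $C(a,b)$ (using that $K_k$ is $k$-choosable), and extending $\rho$ to the remainder of $G$ using the lists. To verify that no monochromatic maximal star has its center in $F$, one checks the possible maximal stars centered at $u \in A \cup B$ and at $c \in C(a,b)$: those containing both $a$ and $b$ as leaves are killed by $\rho(a) \neq \rho(b)$, while the remaining stars involve at most one extra leaf from $Q(G) \setminus \{c\}$, whose color can be chosen to differ from $\rho(c)$ thanks to $|L(\cdot)|=k \geq 2$.

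For part~(ii), I would extend $L$ exactly as in statement~(\ref{lem:forcer properties:color}) of Lemma~\ref{lem:forcer properties}: make the lists $L(a)$ pairwise disjoint over $a \in A^*$; let $\mathcal{L}(B) = \{L(b) : b \in B\}$ be a family of $k^k - 1$ distinct $k$-subsets of $L(A^*)$, each meeting every $L(a)$ in exactly one color; and set $L(c) = L(b)$ for every $c \in C(a,b)$. The block equation $\rho(C(a,b)) = L(b)$ together with the original argument then forces $\rho(a) \neq \rho(b)$ for every $a \in A^*$ and $b \in B$. Hence $\gamma$ defined by $\gamma(h_u) = \rho(u)$ is a vertex $M$-coloring of $H$, and as shown in~\cite{MarxTCS2011}, $\rho(v)$ must equal the unique color of $L(v)$ that does not belong to any $L(b) \in \mathcal{L}(B)$.

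The main obstacle is part~(i): unlike the ordinary $k$-forcer, each $c \in C(a,b)$ is adjacent to every other vertex of $Q(G)$, generating additional maximal stars --- such as $\{c\}\{z\}$ with $z \in Q(G) \cap N(a) \cap N(b) \setminus \{c\}$, or $\{c\}\{a,b,z\}$ with $z \in Q(G) \setminus (N(a) \cup N(b) \cup \{c\})$ --- whose non-monochromaticity demands coordination between the colors of $C(a,b)$ and those of $Q(G) \setminus F$. This coordination is enabled by the $k$-choosability of $K_k$ applied block by block, together with the fact that every external neighbor has a $k$-element list; the bookkeeping needed to rule out all such new monochromatic configurations is the most delicate part of the proof.
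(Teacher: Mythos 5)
Your overall route is the same as the paper's, which literally disposes of this lemma by saying it ``can be proven with a verbatim copy of the proof of Lemma~\ref{lem:forcer properties}''; the Marx-style bipartite auxiliary graph and the list construction in part~(ii) are exactly what is intended. The trouble is in the ``bookkeeping'' you defer at the end of your proposal. The new maximal stars you list --- $\{c\}\{z\}$ with $z$ a twin of $c$, and $\{c\}\{a,b,z\}$ with $z\notin N(a)\cup N(b)$ --- are the harmless ones (killed by rainbowing each block and by $\rho(a)\neq\rho(b)$, respectively). The dangerous ones are the stars sharing exactly \emph{one} of the two endpoints: for $c\in C(a,b)$ and $z\in C(a',b)$ with $a'\neq a$, the pair $\{a,z\}$ is a maximal independent set of $G[N(c)]$ (every vertex of $Q(G)\setminus\{c,z\}$ is adjacent to $z$ because $Q(G)$ is a clique, $b$ is adjacent to $z$ because $z\in C(a',b)$, and $a$ is not adjacent to $z$), so $\{c\}\{a,z\}$ is a maximal star centered in $F$, monochromatic exactly when $\rho(c)=\rho(z)=\rho(a)$; symmetrically for $\{c\}\{b,z'\}$ with $z'\in C(a,b')$, $b'\neq b$. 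The ``coordination'' you invoke cannot always avoid these: if every list equals $\{1,\ldots,k\}$, each $C(a,b)$ is a block and must receive all $k$ colors, so the color $\rho(a)$ occurs in both $\rho(C(a,b))$ and $\rho(C(a',b))$ and a monochromatic $\{c\}\{a,z\}$ is unavoidable. Hence part~(i) does not go through as you (or the paper) argue it; reading Definition~\ref{def:k-split-forcer connection} literally, the natural repair is to replace the bare cliques $C(a,b)$ by split $k$-switchers connecting $\{a,b\}$ --- the closing set $Y$ and the vertices $a_1,a_2$ of Definition~\ref{def:nae connection} exist precisely to neutralize these one-endpoint stars.

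A smaller remark on part~(ii): the ``original argument'' deduces $\rho(a)\neq\rho(b)$ from the maximality of $\{c\}\{a,b\}$, but in the split setting that star is not maximal (it extends by any $z\in Q(G)\setminus(N(a)\cup N(b)\cup\{c\})$, of whose color we know nothing). The conclusion can still be recovered from the one-endpoint stars above: if $\rho(a)=\rho(b)$, pick $c\in C(a,b)$ and $z\in C(a',b)$ with $\rho(c)=\rho(z)=\rho(b)\in L(b)$, which exist because $\rho(C(a,b))=\rho(C(a',b))=L(b)$. So the forcing half of part~(ii) survives, though not by the verbatim step you describe; the existence half of ``unique $L$-admissible color'' runs into the same obstruction as part~(i), since distinct transversals $L(b)$ and the excluded one can agree on some $L(a)$.
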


\begin{proof}
  The lemma can be proven with a verbatim copy of the proof of Lemma~\ref{lem:forcer properties}.  In particular, observe that a split $k$-forcer can be obtained from a $k$-forcer by inserting the edges between $C(a,b)$ and $C(a', b')$, for every $a,a' \in A \cup \{v\}$ and $b,b' \in B$.  
\end{proof}

The hardness of the star choosability problem is also obtained by adapting the proof of Theorem~\ref{thm:w4-dart-gem ptp complete}.  We remark, however, that in this case no induction is required, because split $k$-switchers are $k$-choosable by statement~(\ref{lem:split switcher properties:extension}) of Lemma~\ref{lem:split switcher properties}.  The proof is sketched in the following theorem.

\begin{theorem}\label{thm:split stchose}
 \stchose{k} restricted to split graphs is \ptwop-complete for every $k \geq 2$.
\end{theorem}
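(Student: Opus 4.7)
The plan is to adapt the construction of Theorem~\ref{thm:w4-dart-gem ptp complete} to the split-graph setting, replacing list switchers by split $k$-switchers and $2$-forcers by split $k$-forcers, and including a color-bank $y_3,\ldots,y_k$ as in Theorem~\ref{thm:splitnpc} to handle arbitrary $k$. Membership in \ptwop is immediate: every split graph has $O(n+m)$ maximal stars (each centered at some $v$ and of the form $\{v\}(N(v)\cap S(G))$ or $\{v\}((N(v)\cup\{w\})\setminus N(w))$ with $w\in Q(G)$), so authenticating a negative certificate is an \NP task.

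For hardness I reduce from \naesatt. Given a CNF formula $\phi(\VEC z,\VEC x)$ with $\ell$ clauses $\VP P_1,\ldots,\VP P_\ell$, I build a split graph $G$ whose independent side $S(G)$ consists of connection vertices $x_i,-x_i$ representing the literals of each $\VP x_i$, $z_f,-z_f$ representing the literals of each $\VP z_f$, and color vertices $y_3,\ldots,y_k$; the clique side $Q(G)$ collects the inner vertices of (a) a color split $k$-switcher connecting $\{v,y_j\}$ for every literal vertex $v$ and every $y_j$, and $\{y_j,y_{j'}\}$ for $j\neq j'$; (b) a valuation split $k$-switcher connecting $\{x_i,-x_i\}$ and $\{z_f,-z_f\}$; (c) a clause split $k$-switcher connecting $P_h:=\{v\mid v\text{ represents a literal of }\VP P_h\}$; and (d) a split $k$-forcer on each $z_f$ and on each $y_j$. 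I will show that $G$ is star $k$-choosable iff $(\forall\VEC z)(\exists\VEC x)\phi(\VEC z,\VEC x)$ is nae-true; no induction on $k$ is needed, because split $k$-switchers are themselves star $k$-choosable by statement~(\ref{lem:split switcher properties:extension}) of Lemma~\ref{lem:split switcher properties}.

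For the forward direction, assume the formula is nae-true and let $L$ be an arbitrary $k$-list assignment of $G$. I invoke statement~(\ref{lem:split forcer properties:extension}) of Lemma~\ref{lem:split forcer properties} to pick admissible colors $c_j$ for each $y_j$ and $c_f$ for each $z_f$, arranging these choices jointly so that $c_3,\ldots,c_k$ are pairwise distinct and disjoint from every $c_f$ (which is precisely what the color split $k$-switchers require). Each literal vertex $v$ then has at least two available colors in $L(v)\setminus\{c_3,\ldots,c_k\}$; labeling them $\textsf{T}_v,\textsf{F}_v$ and reading a valuation of $\VEC z$ off of the $c_f$'s, the nae-truth hypothesis extends this valuation to a nae-valuation $\nu'$ of $\phi$, and I color the literal vertices accordingly, finishing with the inner vertices via Lemmas~\ref{lem:split switcher properties} and~\ref{lem:split forcer properties}. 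Conversely, given any valuation $\nu$ of $\VEC z$, statement~(ii) of Lemma~\ref{lem:split forcer properties} yields a $k$-list assignment $L$ in which $y_j$ has $j$ as its unique admissible color and $z_f$ has a unique admissible color in $\{1,2\}$ encoding $\nu(\VP z_f)$, with $L(v)=\{1,\ldots,k\}$ for every remaining connection vertex; any star $L$-coloring $\rho$ must then satisfy $\rho(y_j)=j$, so the color switchers force every literal vertex to receive a color in $\{1,2\}$, and the valuation and clause split $k$-switchers make the induced valuation of $\VEC x$ a nae-extension of $\nu$.

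The main obstacle is the joint admissibility step in the forward direction: statement~(\ref{lem:split forcer properties:extension}) of Lemma~\ref{lem:split forcer properties} only asserts the existence of some admissible color per forcer, whereas the color switchers demand that all the chosen colors be mutually distinct. Establishing that such a joint choice exists under an arbitrary $k$-list assignment will require rerunning the bipartite-list-coloring argument from statement~(\ref{lem:forcer properties:extension}) of Lemma~\ref{lem:forcer properties} on the combined forcer system, which is the one genuinely non-routine step in adapting the proof of Theorem~\ref{thm:w4-dart-gem ptp complete}.
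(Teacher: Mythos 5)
Your high-level plan (adapt Theorem~\ref{thm:w4-dart-gem ptp complete} with split $k$-switchers and split $k$-forcers, no induction on $k$, membership via the polynomial bound on maximal stars) matches the paper's, but your construction deviates from it in two ways that are fatal rather than cosmetic. First, you attach split $k$-forcers to the colour vertices $y_3,\ldots,y_k$ \emph{and} connect pairs of forced vertices by switchers (the pairs $\{y_j,y_{j'}\}$, and $\{z_f,y_j\}$ since $z_f$ is also forced). By statement~(ii) of Lemma~\ref{lem:split forcer properties} the adversary, who chooses $L$, can pin two such vertices to the \emph{same} unique admissible colour, so the switcher between them is necessarily violated and $G$ fails to be star $k$-choosable regardless of $\phi$. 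This is why the step you flag as ``genuinely non-routine'' --- arranging the admissible colours to be jointly distinct --- cannot be established: it is simply false for adversarial lists. The paper avoids this by leaving $Y$ unforced (each $y_q$ keeps $k$ free list entries, so one can always realize $|\rho(Y)|=k-2$ avoiding two prescribed colours, by Hall's theorem) and by never placing a switcher between two forced vertices.

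Second, and independently, your clause gadget is a split $k$-switcher on $P_h$ alone, which only enforces that the literal vertices of $\VP P_h$ are not all the same \emph{colour}. Under an arbitrary list assignment there is no globally consistent ``true colour'' and ``false colour'': with your per-vertex labels $\mathsf{T}_v,\mathsf{F}_v$, a true literal $v$ and a false literal $w$ of the same clause may well satisfy $\mathsf{T}_v=\mathsf{F}_w$ (e.g.\ $L(v)=L(w)$ with the labels swapped), so $P_h$ can be monochromatic even though the valuation is a nae-valuation, and your forward direction breaks. The paper's construction instead introduces two anchor vertices $t$ and $-t$, each carrying a split $k$-forcer, and uses clause switchers on $P_h\cup\{t\}$ and $P_h\cup\{-t\}$ (and likewise $\{t,x_i,-x_i\}$, $\{-t,x_i,-x_i\}$ for the variables). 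A true literal can always be coloured to avoid $\rho(-t)$ and a false literal to avoid $\rho(t)$ within any $2$ remaining list entries, which is exactly the property needed to keep these anchored stars non-monochromatic; this anchor mechanism is the essential ingredient that makes the nae encoding survive the passage from colouring to choosability, and your proposal omits it.
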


\begin{proof}
 \stchose{k} is \ptwop for split graphs because split graphs have a polynomial amount of maximal stars.  Let $\phi(\VEC z, \VEC x)$ be a CNF formula with $\ell$ clauses $\VP P_1, \ldots, \VP P_\ell$, and $m+n$ variables $\VEC x= \VP x_1, \ldots, \VP x_n$, $\VEC z = \VP z_1, \ldots, \VP z_m$.  Use $i$, $f$, $h$, and $q$ to denote indices in $\{1, \ldots, n\}$, $\{1, \ldots, m\}$, $\{1, \ldots, \ell\}$, and $\{3, \ldots, k\}$, respectively.  Define $G$ as the split graph that has a \Definition{connection} vertex $x_i \in S(G)$ (resp.\ $-x_i$, $z_f$, $-z_f$) \Definition{representing} $\VP x_i$ (resp.\ $\overline{\VP x_i}$, $\VP z_f$, $\overline{\VP z_f}$), and $k$ \Definition{connection} vertices $t$, $-t$, $y_3, \ldots, y_{k}$.  Let $X = \{x_i, -x_i \mid 1 \leq i \leq n\}$, $Y = \{y_3, \ldots, y_k\}$, $Z = \{z_1, \ldots, z_m\}$, $-Z = \{-z_1, \ldots, -z_m\}$, and $P_h = \{x \mid x \in X \cup Z \cup -Z \text{ represents a literal in } \VP P_h\}$.  Graph $G$ also has \Definition{inner vertices} that are the vertices of split $k$-switchers and split $k$-forcers connecting connection vertices.  There are split $k$-forcers connecting $z_f$, $t$, $-t$, and split $k$-switchers connecting: $\{t, x_i, -x_i\}$, $\{-t, x_i, -x_i\}$, $\{z_f, -z_f\}$, $P_k \cup \{t\}$, $P_k \cup \{-t\}$, and $\{v, y_q\}$ for every $v \in X \cup Y \cup -Z \cup \{t,-t\}$.  

 Following the proof of Theorem~\ref{thm:w4-dart-gem ptp complete}, it can be observed that, for $k = 2$, $G$ admits a star $L$-coloring, for a $k$-list assignment $L$ of $G$, if and only if $(\forall \VEC x)(\exists \VEC y)\phi(\VEC x, \VEC y)$ is nae-true.  For $k > 2$, observe that if $(\forall \VEC x)(\exists \VEC y)\phi(\VEC x, \VEC y)$ is nae-true, then a star $L$-coloring $\rho$ is obtained if $\rho(Y)$ is taken so that $|\rho(Y)|= k-2$ and $\rho(t), \rho(-t) \not\in \rho(Y)$.  Conversely, if $G$ admits a star $L$-coloring for the $k$-list assignment in which $L(v) = \{0, \ldots, k-1\}$ for every connecting vertex $v \not\in Z \cup \{t,-t\}$, then $\rho(X) = \{0,1\}$, thus a nae-valuation of $\phi(\VEC x, \VEC y)$ is obtained.
\end{proof}

To end this section, consider the more general class of $\overline{C_4}$-free graphs.  By definition, $\{v\}S$ is a star of a graph $G$ if and only if $S$ is a maximal independent set of $G[N(v)]$.  In~\cite{FarberDM1989}, it is proved that $G[N(v)]$ has $O(d(v)^2)$ maximal independent sets when $G$ is $\overline{C_4}$-free.  Thus, $\overline{C_4}$-free graphs have $O(nm)$ maximal stars, which implies that the star coloring and star choosability problems on this class are \NP and \ptwop, respectively.  

\begin{theorem}\label{thm:2k2 coloring}
 \stcol{k} and \stchose{k} are respectively \NP-complete and \ptwop-complete for every $k \geq 2$ when the input is restricted to $\overline{C_4}$-free graphs.
\end{theorem}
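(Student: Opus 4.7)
The plan is to observe that the theorem follows almost immediately from the two preceding results on split graphs, together with the containment of graph classes. Membership in \NP and \ptwop has already been argued in the paragraph preceding the statement: since $\overline{C_4}$-free graphs have $O(nm)$ maximal stars by Farber's bound on maximal independent sets in the neighborhoods of $\overline{C_4}$-free graphs~\cite{FarberDM1989}, a star coloring can be verified in polynomial time, so \stcol{k} is in \NP and \stchose{k} is in \ptwop.

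For hardness, the key observation is that $\overline{C_4} = 2K_2$. Recall that split graphs are precisely the $\{2K_2, C_4, C_5\}$-free graphs~\cite{Golumbic2004}, so every split graph is $2K_2$-free, i.e., $\overline{C_4}$-free. Consequently, the class of split graphs is a subclass of the class of $\overline{C_4}$-free graphs, and any hardness result for split graphs transfers immediately to $\overline{C_4}$-free graphs. Theorem~\ref{thm:splitnpc} shows that \stcol{k} is \NP-complete on split graphs for every $k \geq 2$, and Theorem~\ref{thm:split stchose} shows that \stchose{k} is \ptwop-complete on split graphs for every $k \geq 2$. Both hardness statements therefore hold verbatim on the superclass of $\overline{C_4}$-free graphs, completing the proof.

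There is no real obstacle to overcome: the work was already done in Section~\ref{sec:split}, and the only thing to notice is the class inclusion $\text{split} \subseteq \overline{C_4}\text{-free}$, together with the previously established counting bound on maximal stars that places the problems in the appropriate complexity classes. Consequently, the argument can be presented in a single short paragraph combining these two ingredients.
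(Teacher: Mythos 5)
Your proposal is correct and matches the paper's intended argument exactly: membership follows from Farber's $O(d(v)^2)$ bound on maximal independent sets in neighborhoods (hence $O(nm)$ maximal stars), and hardness is inherited from Theorems~\ref{thm:splitnpc} and~\ref{thm:split stchose} because split graphs, being $\{2K_2, C_4, C_5\}$-free, form a subclass of the $2K_2$-free, i.e.\ $\overline{C_4}$-free, graphs. The paper leaves the hardness direction implicit, but your explicit one-paragraph justification is precisely what was intended.
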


\section{Threshold graphs}
\label{sec:threshold}

\newcommand{\AsVec}[1]{[#1]}

Threshold graphs form a well studied class of graphs which posses many definitions and characterizations~\cite{Golumbic2004,MahadevPeled1995}.  The reason for studying them in this article is that threshold graphs are those split graphs with no induced $P_4$'s.  Equivalently, a graph is a threshold graph if an only if it is $\{2K_2, P_4, C_4\}$-free.

In this section we develop a linear time algorithm for deciding if a threshold graph $G$ admits a star $k$-coloring.  If affirmative, then a star $k$-coloring of $G$ can be obtained in linear time. If negative, then a certificate indicating why $G$ admits no coloring is obtained.  We prove also that $G$ is star $k$-choosable if and only if $G$ admits a star $k$-coloring.  Thus, deciding whether $G$ is star $k$-choosable takes linear time as well.  It is worth noting that threshold graphs can be encoded with $O(n)$ bits using two sequences of natural numbers (cf.\ below).  We begin this section with some definitions on such sequences.

Let $S = s_1, \ldots, s_r$ be a sequence of natural numbers.  Each \range{i}{1}{r} is called an \Definition{index} of $S$.  For $k \in \mathbb{N}$, we write $S = \AsVec{k}$ and $S \leq \AsVec{k}$ to respectively indicate that $s_i = k$ and $s_i \leq k$ for every index $i$.  Similarly, we write $S > \AsVec{k}$ when $S \not\leq \AsVec{k}$, i.e., when $s_i > k$ for some index $i$.  Note that $S$ could be empty; in such case, $S = \AsVec{k}$ and $S \leq \AsVec{k}$ for every $k \in \mathbb{N}$.  For indices $i, j$, we use $S[i,j]$ to denote the sequence $s_i, \ldots, s_j$.  If $i > j$, then $S[i,j] = \emptyset$.  Similarly, we define $S(i, j] = S[i+1,j]$, $S[i, j) = S[i,j-1]$, and $S(i,j) = S[i+1, j-1]$.  

A \Definition{threshold representation} is a pair $(Q, S)$ of sequences of natural numbers such that $|Q| = |S| + 1$. Let $Q = q_1, \ldots, q_{r+1}$ and $S = s_1, \ldots, s_r$.  Each threshold representation defines a graph $G(Q, S)$ whose vertex set can be partitioned into $r+1$ blocks $Q_1, \ldots, Q_{r+1}$ with $|Q_1| = q_1, \ldots, |Q_{r+1}| = q_{r+1}$ and $r$ independent sets $S_1, \ldots, S_r$ with $|S_1| = s_1, \ldots, |S_r| = s_r$ such that, for $1 \leq i \leq j \leq r$, the vertices in $Q_i$ are adjacent to all the vertices in $S_j \cup Q_{j+1}$.  It is well known that $G$ is a connected threshold graph if and only if it is isomorphic to $G(Q, S)$ for some threshold representation $(Q, S)$~\cite{Golumbic2004,MahadevPeled1995}.  The following observation describes all the maximal stars of $G(Q, S)$.

\begin{observation}\label{obs:stars threshold}
  Let $(Q, S)$ be a threshold representation and $v$ be a vertex of $G(Q, S)$.  Then, $\{v\}W$ is a maximal star of $G(Q, S)$ if and only if there are indices $i \leq j$ of $Q$ such that $v \in Q_i$, and $W = \{w\} \cup \bigcup_{h=i}^{j-1} S_h$ for some vertex $w \in Q_j$.
\end{observation}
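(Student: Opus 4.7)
The plan is to prove both directions separately, exploiting the fact that $N(v)$ in $G(Q,S)$ has a very restricted structure: a clique on (most of) $\bigcup_h Q_h$ together with an independent set on a contiguous suffix of the $S_h$'s.

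For the backward implication I would verify the two defining properties directly. Given $v \in Q_i$, $w \in Q_j$ with $i \leq j$, and $W = \{w\} \cup \bigcup_{h=i}^{j-1} S_h$, the vertex $v$ is adjacent to $w$ (both lie in the clique $\bigcup_h Q_h$) and to every $s \in S_h$ with $i \leq h \leq j-1$ (since $Q_i$ is adjacent to $S_h$ whenever $i \leq h$), so $\{v\}W$ induces a star centered at $v$; the non-center vertices are pairwise non-adjacent because $w \in Q_j$ is non-adjacent to each $S_h$ with $h < j$, and the $S$-part is an independent set. For maximality, any candidate extension $z$ must either lie in some $Q_h$ (hence be adjacent to $w$ by the clique property), or lie in some $S_h$ with $h \geq i$ (to be in $N(v)$) and $h < j$ (to be non-adjacent to $w$), i.e., $z \in \bigcup_{h=i}^{j-1} S_h \subseteq W$; so no extension exists.

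For the forward implication I would first argue that the center $v$ may be taken in $Q$. Note that $N(u) = \bigcup_{h \leq i'} Q_h$ is a clique when $u \in S_{i'}$, so any star centered at $u$ has at most one other vertex, i.e., is an edge $\{u,w\}$ with $w \in Q_j$, $j \leq i'$; such an edge is never maximal in a canonical representation, because any $z \in Q_{r+1}$ is adjacent to $w$ (clique) but non-adjacent to $u$ (as $r+1 > i'$), giving a strictly larger star centered at $w$. Hence every maximal star has a center in some $Q_i$, which we designate as $v$. Next, $W \subseteq N(v) = \bigl(\bigcup_h Q_h \setminus \{v\}\bigr) \cup \bigcup_{h \geq i} S_h$ must be independent, and since the first summand is a clique, $|W \cap Q| \leq 1$. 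If $W \cap Q = \emptyset$, maximality forces $W = \bigcup_{h \geq i} S_h$, but any $z \in Q_{r+1}$ can be added (adjacent to $v$, non-adjacent to every $S_h$ with $h \leq r$), a contradiction; so $W \cap Q = \{w\}$ for some $w \in Q_j$. A similar extension argument using a vertex of $\bigcup_{h=j}^{i-1} S_h$ rules out $j < i$, and the non-adjacency between $w$ and the $S$-part forces $W \cap S \subseteq \bigcup_{h=i}^{j-1} S_h$, with equality by maximality.

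The main obstacle is the forward direction, specifically the case $|W|=1$: such a ``star'' is a single edge with two possible centers, and one must be careful to designate as $v$ the endpoint lying in the $Q$-block of smaller index, so that the constraint $i \leq j$ is satisfied. A related subtlety is that all the extension arguments rely implicitly on $Q_{r+1}$ being non-empty, which holds in a canonical threshold representation; this assumption should be stated or verified up front.
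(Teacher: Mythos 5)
Your argument is correct and complete. The paper states this as an \emph{observation} and supplies no proof of its own, so there is nothing to compare against; your direct verification (backward direction by checking independence of $W$ and exhausting the candidate extensions; forward direction by first relocating the center into $\bigcup_h Q_h$, then using that $N(v)\cap\bigcup_h Q_h$ is a clique to bound $|W\cap Q(G)|$ by one, and finally pinning down $W\cap S(G)$ by maximality) is exactly the kind of routine case analysis the authors are implicitly waving at. You are also right to flag the two genuine subtleties: the $|W|=1$ case, where the ``star'' is an edge with two admissible centers and the extension argument must allow the larger star to be re-centered (here saved by the fact that two vertices of the same block $Q_i$ are twins), and the reliance on a canonical representation in which $Q_{r+1}\neq\emptyset$ and each $S_h\neq\emptyset$ --- without these, an edge between a $Q$-vertex and an $S$-vertex, or between two $Q$-blocks with the same neighborhood, could be a maximal star not of the stated form, and the observation as written would fail. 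The paper leaves these hypotheses implicit in its definition of a threshold representation, so making them explicit, as you do, strengthens rather than departs from the intended proof.
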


For $k \in \mathbb{N}$, we say that index $i$ of $Q$ is \Definition{$k$-forbidden} for the threshold representation $(Q, S)$ when either $q_i > k$ or $q_i = k$ and there exists some index $j > i$ such that $q_j = k$, $Q(i, j) = \AsVec{k-1}$ and $S[i, j) = \AsVec{1}$.  The next theorem shows how to obtain a star $k$-coloring of $G$ when a threshold representation is provided.

\begin{theorem}\label{thm:threshold coloring}
  The following statements are equivalent for a threshold representation $(Q, S)$.
  \begin{enumerate}
    \item $G(Q,S)$ is star $k$-choosable.
    \item $G(Q,S)$ admits a star $k$-coloring.
    \item No index of $Q$ is not $k$-forbidden for $(Q, S)$.
  \end{enumerate}
\end{theorem}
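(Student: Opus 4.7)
The plan is to prove the cycle $(1) \Rightarrow (2) \Rightarrow (3) \Rightarrow (1)$, interpreting (3) as the statement that no index of $Q$ is $k$-forbidden; the literal reading ``every index is $k$-forbidden'' cannot be intended, since for instance the last index $r+1$ is never $k$-forbidden as soon as $q_{r+1} \leq k$ (there is no $j > r+1$ to witness it), so under the literal parse the equivalence would fail at trivial instances. With the intended reading, $(1) \Rightarrow (2)$ is immediate by applying $k$-choosability to the constant list assignment $L(v) = \{1, \ldots, k\}$.

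For $(2) \Rightarrow (3)$ I argue by contrapositive: given a $k$-forbidden index $i$, I exhibit a monochromatic maximal star in every $k$-coloring $\rho$ of $G(Q,S)$. If $q_i > k$, then Observation~\ref{obs:block coloring} forces two twins in the block $Q_i$ to share a colour, giving a monochromatic star $\{v\}\{w\}$. Otherwise $q_i = k$ and there is $j > i$ with $q_j = k$, $Q(i,j) = [k-1]$, $S[i,j) = [1]$. Writing $u_h$ for the unique vertex of $S_h$ ($i \leq h < j$), Observation~\ref{obs:stars threshold} makes $\{v\}(\{w\} \cup \{u_i, \ldots, u_{j-1}\})$ a maximal star whenever $v \in Q_i$ and $w \in Q_j$. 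Since $\rho(Q_i) = \rho(Q_j) = \{1, \ldots, k\}$ by Observation~\ref{obs:block coloring}, every colour $c$ is attained by some $v_c \in Q_i$ and $w_c \in Q_j$, so this star is monochromatic unless some $u_h$ satisfies $\rho(u_h) \neq c$. Applying the same reasoning from every intermediate block $Q_h$ (each of size $k-1$, so missing a single colour) toward $Q_j$ chains together enough constraints to force all $u_h$'s into a common colour $c^{\ast}$, and the star at $c = c^{\ast}$ is then monochromatic.

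For $(3) \Rightarrow (1)$ the approach is a left-to-right greedy $L$-colouring, for any $k$-list assignment $L$. I colour the blocks $Q_1, Q_2, \ldots$ in order (each needing $q_i \leq k$ distinct colours from its lists, which is possible because no index has $q_i > k$) and fill in the independent sets $S_1, S_2, \ldots$ between them. The only subtle step concerns a maximal star straddling two size-$k$ blocks $Q_i$ and $Q_j$ through singletons $u_i, \ldots, u_{j-1}$; non-forbiddenness of $i$ guarantees that either $q_j < k$, or some intermediate $Q_h$ has size $< k-1$, or some intermediate $S_h$ has at least two vertices, and any of these provides a degree of freedom to ensure that the $u_h$'s together take at least two colours, ruling out monochromaticity for every $c$. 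The main obstacle is engineering the sweep so that at each vertex the number of colours one is forced to avoid remains strictly below $|L(v)| = k$ while maintaining the invariant ``consecutive size-$k$ blocks are separated by $S$-vertices using at least two colours''; the non-forbidden structure supplies exactly the slack needed at every dangerous configuration, and the same argument simultaneously yields the coloring and choosability conclusions.
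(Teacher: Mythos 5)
Your reading of statement (3) as ``no index of $Q$ is $k$-forbidden'' is the intended one (the paper's proof confirms it), and your $(1)\Rightarrow(2)$ is fine, but both nontrivial implications have genuine gaps. The fatal one is in $(2)\Rightarrow(3)$: your chaining claim --- that avoiding monochromatic stars among the intermediate blocks forces all the singleton vertices $u_i,\ldots,u_{j-1}$ into a common colour $c^{\ast}$, so that the full star from $Q_i$ to $Q_j$ becomes monochromatic --- is false as soon as the intermediate blocks miss \emph{different} colours. Concretely, take $k=3$, $j=i+3$, with $\rho(Q_{i+1})=\{2,3\}$, $\rho(Q_{i+2})=\{1,3\}$, $\rho(u_i)=1$, $\rho(u_{i+1})=\rho(u_{i+2})=2$: the $u_h$ are not all equal and the star spanning $Q_i$ to $Q_j$ is not monochromatic; the violation here is the \emph{shorter} star centred at the colour-$2$ vertex of $Q_{i+1}$ with leaves $u_{i+1}$, $u_{i+2}$ and a colour-$2$ vertex of $Q_j$. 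In general the monochromatic star one must exhibit runs between two positions $a<b$ strictly inside $\{i,\ldots,j\}$, not from $Q_i$ to $Q_j$, and finding it is the combinatorial heart of the implication. The paper does this by an extremal choice: $a$ is the maximum index with $\rho(w_a)\in\rho(Q_a)$, and $b$ is then chosen minimal so that $\rho(w_a)$ reappears in $\rho(Q_b)$ while being the single missing colour $c_h=\rho(w_h)$ of every block strictly between; this yields a monochromatic maximal star between $Q_a$ and $Q_b$. Your proposal replaces this step by an assertion that does not hold.

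The direction $(3)\Rightarrow(1)$ is also only an outline at exactly the point where the proof lives. You correctly list the sources of slack (some $q_j<k$, some intermediate $q_h<k-1$, or some $s_h\geq 2$ --- though note you omit intermediate blocks of size exactly $k$, which forces passing to a minimal witness $j$), but a block $Q_h$ with $q_h<k-1$ does not by itself break a monochromatic star between two size-$k$ blocks, since none of $Q_h$'s vertices belong to that star: the slack must be \emph{converted} into a colour difference between two $S$-vertices lying inside the star's span. The paper's construction does this explicitly via the pointer $p(i)$ (the start of the maximal run with $Q(p(i),i]<[k]$) and four rules, notably $\rho(w_i),\rho(w_{p(i)})\notin\rho(Q_i)$ whenever $q_i<k$, and $\rho(w_i)\neq\rho(w_{p(i)})$ whenever $q_i<k-1$, and then verifies these are simultaneously satisfiable from arbitrary $k$-lists and kill every maximal star of Observation~\ref{obs:stars threshold}. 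Your ``the non-forbidden structure supplies exactly the slack needed'' defers precisely this invariant; as written, neither implication is proved.
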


\begin{proof}
  (i) $\Longrightarrow$ (ii) is trivial.  

  (ii) $\Longrightarrow$ (iii).  Suppose $G(Q, S)$ admits a star $k$-coloring $\rho$ and yet $Q$ contains some $k$-forbidden index $i$.  Since $Q_i$ is a block of $G(Q, S)$, then $q_i \leq k$ by Observation~\ref{obs:block coloring}.  Hence, $q_i = k$ and there exists an index $j > i$ such that $q_j = k$, $Q(i,j) = \AsVec{k-1}$, and $S[i,j) = \AsVec{1}$.  Let $w_h$ be the unique vertex in $S_h$ for \range{h}{i}{j-1}.  By Observation~\ref{obs:block coloring}, both $Q_i$ and $Q_j$ have at least one vertex of each color \range{c}{1}{k}, while for each index \range{h}{i+1}{j-1} there exists a color $c_h$ such that $\rho(Q_h) = \{1, \ldots, k\} \setminus \{c_h\}$.  Consequently, there are indices $a < b$ in $\{i, \ldots, j\}$ such that $\rho(w_a) \in \rho(Q_a) \cap \rho(Q_b)$ and $c_h = \rho(w_h) = \rho(w_a)$ for every index \range{h}{a+1}{b-1}.  Indeed, it is enough to take \range{a}{i}{j-1} as the maximum index with $\rho(w_a) \in \rho(Q_a)$ and \range{b}{a}{j} as the minimum index such that $\rho(w_a) \not\in \rho(Q_{a+1}) \cup \ldots \cup \rho(Q_{b-1})$.  Therefore, if $v_a$ and $v_b$ are the vertices of $Q_a$ and $Q_b$ with color $c$, respectively, then $\{v_a\}\{v_b, w_{a+1}, \ldots, w_{b-1}\}$ is a monochromatic maximal star by Observation~\ref{obs:stars threshold}, a contradiction.

  (iii) $\Longrightarrow$ (i).  Let $L$ be a $k$-list assignment of $G(Q,S)$ and define $w_i$ as any vertex of $S_i$ for each index $i$ of $I$.  For each index $i$ of $Q$, define \range{p(i)}{1}{i} as the minimum index such that $Q(p(i), i] < \AsVec{k}$.  Let $\rho$ be an $L$-coloring of $G(Q, S)$ that satisfies all the following conditions for every index $i$ of $Q$:
  \begin{enumerate}[(1)]
    \item $|\rho(Q_i)| = q_i$,
    \item if $q_i < k$ and $i \neq r+1$, then $\rho(w_{p(i)})$ and $\rho(w_i)$ do not belong to $\rho(Q_i)$,
    \item if $q_i < k-1$ and $1 < i < r+1$, then $\rho(w_i) \neq \rho(w_{p(i)})$, and 
    \item if $s_i > 1$ and $i \neq r+1$, then $|\rho(S_i)| \geq 2$.
  \end{enumerate}
  A coloring satisfying all the above conditions can obtained iteratively, by coloring the vertices in $Q_i \cup S_i$ before coloring the vertices in $Q_j \cup S_j$ for every pair of indices $i < j$.  We claim that $\rho$ is a star $L$-coloring of $G(Q, S)$.  To see why, let $\{v\}W$ be a maximal star of $G(Q, S)$.  By Observation~\ref{obs:stars threshold}, there are two indices $i \leq j$ of $Q$ such that $v \in Q_i$ and $W = \{w\} \cup S_i \cup \ldots \cup S_{j-1}$ for some $w \in Q_j$.  If $i = j$, then $\rho(v) \neq \rho(w)$ by~(1).  If $S[i, j) > \AsVec{1}$, then $S_h$ is not monochromatic by~(4).  If $q_i < k$, then $\{v\} \cup S_i$ is not monochromatic by~(2).  If $q_i = k$ and $q_j < k$, then \range{p(i)}{i}{j-1}, thus $\{v\} \cup S_{p(j)} \subset W$ is not monochromatic by~(2).  Finally, if $q_i = q_j = k$ and $S[i, j) = \AsVec{1}$, then there exists index $h$ such that $q_h < k -1$; otherwise $i$ would be a $k$-forbidden index of $(Q, S)$.  Then, by~(3), $\rho(w_h) \neq \rho(w_{p(h)})$ which implies that $W$ is not monochromatic.  Summing up, we conclude that $G(Q,S)$ has no monochromatic maximal star.
\end{proof}

Theorem~\ref{thm:threshold coloring} has several algorithmic consequences for a threshold representation $(Q, S)$ of a graph $G$.  As mentioned, $G$ is a split graph where $Q(G) = \bigcup Q_i$ and $S(G) = \bigcup S_i$, thus $\chi_S(G)$ is either $k$ or $k+1$, for $k = \max(Q)$.  While deciding if $k$ colors suffice for a general split graph is an {\NP}-complete problem, only $O(|Q|)$ time is needed to decide whether $\chi_S(G) = k$ when $(Q, S)$ is given as input; it is enough to find a $k$-forbidden index of $Q$.  Furthermore, if $k < \chi_S(G)$, then a $k$-forbidden index can be obtained in $O(|Q|)$ time as well.  Also, if $\chi_S(G) = k$, then a star $k$-coloring $\rho$ of $G$ can be obtained in linear time by observing rules (1)--(4) of implication (iii) $\Longrightarrow$ (i).  To obtain $\rho$, begin traversing $Q$ to find $p(i)$ for every index $i$ of $Q$.  Then, color the vertices of each block $Q_i$ with colors $1, \ldots, q_i$.  Following, color the vertices $w_1, \ldots, w_r$ in that order, taking the value of $q_i$ into account for each index $i$ of $S$.  If $q_i \geq k-1$, then $\rho(w_i) = k$; otherwise, $\rho(w_i)$ is any value in $\{q_i+1, \ldots, k\} \setminus \rho(w_{p(i)})$.  Finally, color the vertices in $S_i \setminus \{w_i\}$ with color $1$ for each index $i$ of $S$.  To encode $\rho$ only two values are required for each index $i$ of $Q$, namely, $q_i$ and $\rho(w_i)$.  Thus, $\rho$ can be obtained in $O(|Q|)$ time as well.  Finally, $(Q, S)$ can be obtained in $O(n+m)$ time from $G$ when $G$ is encoded with adjacency lists~\cite{Golumbic2004,MahadevPeled1995}.  Thus, all the above algorithms take $O(n+m)$ when applied to the adjacency list representation of $G$.  We record all these observation in the theorem below.

\begin{theorem}
  Let $(Q, S)$ be a threshold representation of a graph $G$.  The problems of computing $\chi_S(G)$, $ch_S(G)$, a $k$-forbidden index of $Q$ for $k < \chi_S(G)$, and a $\chi_S(G)$-coloring of $G$ can be solved in $O(|Q|)$ time when $(Q, S)$ is given as input.  In turn, all these problems take $O(n+m)$ time when an adjacency list representation of $G$ is given.
\end{theorem}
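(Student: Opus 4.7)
The plan is to combine Theorem~\ref{thm:threshold coloring} with elementary linear-time bookkeeping on the sequences $Q$ and $S$. First I would note that the equivalence of statements (1) and (2) of Theorem~\ref{thm:threshold coloring} gives $\chi_S(G) = ch_S(G)$, so the two numbers are computed simultaneously. Setting $k = \max Q$, Observation~\ref{obs:block coloring} applied to a largest block yields $\chi_S(G) \geq k$, whereas the observation opening Section~\ref{sec:split} produces a star coloring with $k+1$ colors; hence $\chi_S(G) \in \{k, k+1\}$. By Theorem~\ref{thm:threshold coloring}, deciding which of the two values is attained reduces to detecting whether $Q$ contains a $k$-forbidden index.

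To implement the forbidden-index test I would precompute, by one right-to-left scan of $Q$, the function $\mathrm{next}(i)$ returning the smallest index $j > i$ with $q_j \neq k-1$, or $\infty$ if no such $j$ exists; the recurrence $\mathrm{next}(i) = i+1$ when $q_{i+1} \neq k-1$ and $\mathrm{next}(i+1)$ otherwise runs in $O(|Q|)$ time. A parallel right-to-left scan of $S$ computes $\mathrm{bad}(i)$, the smallest $h \geq i$ with $s_h \neq 1$ (or $\infty$). Then $i$ is $k$-forbidden iff $q_i = k$, $\mathrm{next}(i) \leq r+1$, $q_{\mathrm{next}(i)} = k$, and $\mathrm{bad}(i) \geq \mathrm{next}(i)$, a constant-time test at each index; a single final pass therefore either returns a $k$-forbidden index or certifies that none exists.

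To construct a star $\chi_S(G)$-coloring I would follow rules~(1)--(4) from the implication (iii)$\Rightarrow$(i) of Theorem~\ref{thm:threshold coloring}. If $\chi_S(G) = k+1$ the coloring is the trivial one: give $Q_i$ the colors $\{1, \ldots, q_i\}$ and color every vertex of $\bigcup S_i$ with $k+1$. Otherwise, I would compute $p(i)$ for every index $i$ in one left-to-right pass, updating $p$ each time a new block of size $k$ is encountered, and then choose $\rho(w_i)$ in index order inspecting only $q_{p(i)}$, $q_i$, and $\rho(w_{p(i)})$; conditions (2) and (3) can always be satisfied because the relevant blocks have size strictly less than $k$. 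The remaining vertices of each $S_i$ receive color $1$. The coloring is stored implicitly by the $q_i$'s and the $r$ values $\rho(w_i)$, for an output of size $O(|Q|)$.

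For the adjacency-list case, a threshold representation $(Q, S)$ with $|Q| \leq n$ can be obtained in $O(n+m)$ time by the standard recognition algorithms in~\cite{Golumbic2004,MahadevPeled1995}, absorbing the $O(|Q|)$ bounds above. The main thing to be careful about is the amortized $O(1)$ bookkeeping behind the forbidden-index search and the computation of $p$; everything else is direct once Theorem~\ref{thm:threshold coloring} is in hand.
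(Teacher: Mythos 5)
Your proposal is correct and follows essentially the same route as the paper: reduce everything to Theorem~\ref{thm:threshold coloring}, observe $\chi_S(G)=ch_S(G)\in\{k,k+1\}$ for $k=\max Q$, detect a $k$-forbidden index by a linear scan, build the coloring via rules (1)--(4) of the implication (iii) $\Longrightarrow$ (i), and invoke the standard $O(n+m)$ recognition algorithm for the adjacency-list case. The only difference is that you spell out the forbidden-index test (the $\mathrm{next}/\mathrm{bad}$ precomputation) in more detail than the paper, which simply asserts the $O(|Q|)$ bound.
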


\section{Net-free block graphs}
\label{sec:block}

In this section we study the star coloring and star choosability problems on block graphs.  A graph is a \Definition{block graph} if it is chordal and diamond-free.  We develop a linear time algorithm for deciding if a net-free block graph $G$ admits a star $k$-coloring.  The algorithm is certified; a star $k$-coloring of $G$ is obtained in the affirmative case, while a forbidden induced subgraph of $G$ is obtained in the negative case.  As threshold graphs, net-free block graphs admit an $O(n)$ space representation using weighted trees (cf.\ below).  The certificates provided by the algorithms are encoded in such trees, and can be easily transformed to certificates encoded in terms of $G$.  We begin describing the weighted tree representation of net-free block graphs.

A \Definition{netblock representation} is a pair $(T, K)$ where $T$ is a tree and $K$ is a \Definition{weight function} from $E(T)$ to $\mathbb{N}$.  The graph $G(T, K)$ \Definition{represented} by $(T, K)$ is obtained by inserting a clique $B(vw)$ with $K(vw)$ vertices adjacent to $v$ and $w$, for every $vw \in E(T)$.  It is well known that a connected graph is a netblock graph if and only if it is isomorphic to $G(T, K)$ for some netblock representation $(T, K)$~\cite{BrandstadtLeSpinrad1999}. By definition, two vertices $x, y$ of $G(T, K)$ are twins if and only if 1.\ $xy \in B(vw)$ for some $vw \in E(T)$, or 2.\ $x$ is a leaf of $T$ adjacent to $v \in V(T)$ and $y \in B(vx)$, or 3.\ both $x$ and $y$ are leafs of $T$ (in which case $G(T, K)$ is a complete graph).  The following observation describes the remaining maximal stars of $G(T, K)$.

\begin{observation}
\label{obs:blockbicliques}
Let $(T, K)$ be a netblock representation and $v$ be a vertex of $G(T, K)$.  Then, $\{v\}S$ is a maximal star of $G(T, K)$ with $|S| > 1$ if and only if $v$ is an internal vertex of $T$ and $S$ contains exactly one vertex of $B(vw) \cup \{w\}$ for every $w \in N_T(v)$.
\end{observation}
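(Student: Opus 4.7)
The plan is to split according to whether the center $v$ is a bubble vertex (i.e., $v \in B(xy)$ for some $xy \in E(T)$) or a tree vertex. In the first case, the only neighbors of $v$ in $G(T,K)$ are $x$, $y$, and the other vertices of $B(xy)$, and $\{x,y\} \cup B(xy)$ is a clique by construction; hence $N[v]$ is a clique and any maximal star centered at $v$ has exactly one leaf. Analogously, if $v$ is a leaf of $T$ with $N_T(v) = \{w\}$, then $N(v) = \{w\} \cup B(vw)$ is a clique, so again every maximal star centered at $v$ satisfies $|S| = 1$. Both situations already rule out $|S|>1$.

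Suppose now that $v \in V(T)$ is internal. The key step is to describe $N(v)$ as the disjoint union of the cliques $C_w := \{w\} \cup B(vw)$ over $w \in N_T(v)$. By construction $v$ is adjacent to each such $w$ and to every vertex of $B(vw)$, so this union exhausts $N(v)$, and each $C_w$ is easily seen to be a clique. I then need to argue that no edge of $G(T,K)$ connects $C_{w_1}$ with $C_{w_2}$ whenever $w_1 \neq w_2$. A bubble vertex $x \in B(vw_i)$ has neighbors only in $\{v,w_i\}\cup B(vw_i) = \{v\}\cup C_{w_i}$, so it sees nothing in $C_{w_j}$ for $j \neq i$. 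For the two tree endpoints $w_1, w_2$, they are adjacent in $G(T,K)$ iff $w_1w_2 \in E(T)$, but together with $vw_1, vw_2 \in E(T)$ this would produce a triangle in the tree $T$, which is impossible.

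Having established that $N(v)$ is the disjoint union of the cliques $C_w$ with no edges between distinct parts, an independent set $S \subseteq N(v)$ can pick at most one vertex from each $C_w$, and it is maximal precisely when it picks exactly one vertex from each part. Therefore $|S| = \deg_T(v)$, which is $>1$ iff $v$ is internal. Both directions of the observation now follow: if $\{v\}S$ is a maximal star with $|S|>1$, the case analysis forces $v$ to be an internal tree vertex and $S$ to have the stated form; and conversely, any such choice of $S$ is a maximal independent set in $N(v)$, hence yields a maximal star of $G(T,K)$.

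I expect the main obstacle to be the inter-part non-adjacency argument, where the acyclicity of $T$ is essential: it is precisely the tree property that prevents two branches at $v$ from fusing into a triangle of $G(T,K)$ and thereby breaking the clean partition of $N(v)$ into the cliques $C_w$.
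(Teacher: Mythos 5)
Your proof is correct: the decomposition of $N(v)$ into the pairwise non-adjacent cliques $\{w\}\cup B(vw)$ for $w\in N_T(v)$, together with the observation that a star $\{v\}S$ with $|S|\geq 2$ is maximal exactly when $S$ is a maximal independent set of $N(v)$, is precisely the reasoning the paper leaves implicit (it states this as an observation without proof). Your case analysis for bubble vertices and leaves of $T$, and the use of acyclicity of $T$ to rule out edges between distinct parts, fill in that omitted justification correctly.
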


Let $(T, K)$ be a netblock representation.  For $k \in \mathbb{N}$, a \Definition{$k$-subtree} of $(T, K)$ is a subtree of $T$ formed by edges of weight $k$; a \Definition{maximal} $k$-subtree is a $k$-subtree whose edge set is not included in the edge set of another $k$-subtree.  A \Definition{$k$-exit} vertex of $(T, K)$ is a vertex $v$ that has some neighbor $w \in V(T)$ such that $K(vw) < k$.  The following theorem characterize those net-free block graphs that admit a star $k$-coloring.

\begin{theorem}\label{thm:block graph coloring}
The following statements are equivalent for a netblock representation $(T, K)$.

\begin{enumerate}[i.]
  \item $G(T, K)$ is star $k$-choosable.
  \item $G(T, K)$ admits a star $k$-coloring.
  \item $K(e) \leq k$ for every $e \in E(T)$ and every maximal $(k-1)$-subtree of $(T, K)$ contains a $(k-1)$-exit vertex of $(T, K)$.
\end{enumerate}
\end{theorem}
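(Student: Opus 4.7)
The plan is to establish the chain $(i) \Rightarrow (ii) \Rightarrow (iii) \Rightarrow (i)$. The first implication is immediate by applying the definition of $k$-choosability to the constant list $L\equiv\{1,\ldots,k\}$.

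For $(ii) \Rightarrow (iii)$, let $\rho$ be a star $k$-coloring of $G(T,K)$. The vertices of $B(e)$, together with any leaf endpoint of $e$, are pairwise twins and form a block, so Observation~\ref{obs:block coloring} yields $K(e)\leq k$. Suppose for a contradiction that a maximal $(k-1)$-subtree $T'$ has no $(k-1)$-exit vertex; then each vertex of $V(T')$ has all its $T$-edges of weight $k-1$ (necessarily in $T'$) or $k$. For $e\in E(T')$ let $m(e)$ be the color missing from $\rho(B(e))$, and note that $\rho(B(e))=\{1,\ldots,k\}$ when $e\notin E(T')$. By Observation~\ref{obs:blockbicliques}, every internal vertex $v$ of $T$ lying in $V(T')$ must escape monochromatic maximal stars through an edge $vw\in E(T')$ with $\rho(v)=m(vw)$ and $\rho(w)\neq m(vw)$, and $w$ must also be internal in $T$ (otherwise $\rho(w)=m(vw)$ by its own block constraint). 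The resulting map $v\mapsto w$ orients the subtree $T''$ of $T'$ induced by internal vertices of $T$, and no edge is oriented by both of its endpoints; hence $|V(T'')|\leq |E(T'')|\leq |V(T'')|-1$, which is impossible unless $V(T'')=\emptyset$. The two remaining degenerate cases ($|V(T')|=1$ whose unique vertex is a leaf of $T$, and $T=T'$ a single edge with both endpoints leaves) violate the block size bound directly.

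For $(iii) \Rightarrow (i)$, let $L$ be any $k$-list assignment. First, $(iii)$ implies the stronger bounds $K(vw)\leq k-1$ when $v$ is a leaf of $T$ and $K(vw)\leq k-2$ when $T=\{vw\}$, since otherwise $\{v\}$ or $\{vw\}$ would itself be a maximal $(k-1)$-subtree without exit; hence every block admits a proper $L$-coloring. I root $T$ so that the root of each maximal $(k-1)$-subtree $T_c$ is one of its $(k-1)$-exit vertices $r_c$, and for each internal vertex $v$ of $T$ I designate a witness neighbor $w_v$: the neighbor on the path toward $r_c$ if $v\in V(T_c)\setminus\{r_c\}$ for some $(k-1)$-component, and any neighbor with $K(vw_v)<k-1$ otherwise. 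The coloring is then built by a top-down traversal: for each non-root $u$ with parent $p$, I first assign $K(up)$ distinct colors to $B(up)$ from their lists (respecting leaf constraints) and then choose $\rho(u)\in L(u)$ so as to enforce $\rho(v)\notin\rho(B(vw_v))\cup\{\rho(w_v)\}$ for the internal $v$ whose witness is being resolved at this step. Observation~\ref{obs:blockbicliques} then certifies that no monochromatic maximal star survives.

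The main obstacle is the $(k-1)$-edges: at such an edge the block already consumes $k-1$ colors, leaving exactly one free color for the escape of each endpoint, so the witness choices inside a $(k-1)$-component are essentially forced. The remedy is the rooting at an exit vertex $r_c$, whose exit edge of weight strictly less than $k-1$ supplies the extra freedom needed to absorb the list constraints; this slack then propagates along the unique path from any vertex of $T_c$ to $r_c$ without conflict. The resulting bookkeeping, although intricate, is entirely local to each $(k-1)$-component and can be verified by a single top-down traversal.
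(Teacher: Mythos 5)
Your implication (ii) $\Rightarrow$ (iii) is correct and takes a genuinely different route from the paper: where the paper follows a maximal path inside a maximal $(k-1)$-subtree and argues that its last vertex must be a $(k-1)$-exit, you define an escape map $v \mapsto w$ on the internal vertices of the subtree and reach a contradiction by counting, using that no edge of a forest can be selected by both of its endpoints. That argument, including the two degenerate cases, checks out.

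The gap is in (iii) $\Rightarrow$ (i), at the sentence ``I root $T$ so that the root of each maximal $(k-1)$-subtree $T_c$ is one of its $(k-1)$-exit vertices $r_c$.'' Such a rooting need not exist. Take $k=3$ and the path $p_a, a_1, a_2, b_1, b_2, p_b$ with weights $K(p_aa_1)=1$, $K(a_1a_2)=2$, $K(a_2b_1)=3$, $K(b_1b_2)=2$, $K(b_2p_b)=1$. Condition (iii) holds, but the unique $2$-exit vertex of the maximal $2$-subtree $\{a_1,a_2\}$ is $a_1$ and the unique $2$-exit vertex of $\{b_1,b_2\}$ is $b_2$; for $a_1$ to be the vertex of its component closest to the global root, the root must lie in $\{p_a,a_1\}$, while for $b_2$ it must lie in $\{b_2,p_b\}$, and these sets are disjoint. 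Since you explicitly present this rooting as ``the remedy'' that makes the slack at the exit edge propagate through the top-down traversal, the construction as written breaks. The witness assignment itself (point each $v\in V(T_c)\setminus\{r_c\}$ toward $r_c$) does not depend on the rooting --- it is the paper's rule with $z(v)=r_c$ --- and the argument can be repaired by letting $w_v$ be either the parent or a child of $v$ and resolving each witness constraint when the edge $vw_v$ is processed. But then the one genuinely delicate point, namely that when $K(vw_v)=k-1$ and $w_v$ is the already-colored parent one can still choose the $k-1$ distinct colors of $B(vw_v)$ so that $L(v)\not\subseteq\rho(B(vw_v))\cup\{\rho(w_v)\}$ (e.g.\ by placing $\rho(w_v)$ inside $\rho(B(vw_v))$ when some list allows it, and otherwise using a color of some list of $B(vw_v)$ outside $L(v)$), is exactly the bookkeeping your last paragraph waves away; it is where the list version differs from plain coloring and must be written out.
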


\begin{proof}
   (i) $\Longrightarrow$ (ii) is trivial.  

   (ii) $\Longrightarrow$ (iii).  Suppose $G(T, K)$ admits a star $k$-coloring $\rho$.  By Observation~\ref{obs:block coloring}, $K(e) \leq k$ for every $e \in E(T)$ and, moreover, $K(e) \leq k-1$ when $e$ incides in a leaf.  Let $R$ be a maximal $(k-1)$-subtree of $(T, K)$, and consider a maximal path $P = v_1, \ldots, v_j$ of $R$ such that $\rho(v_i) \not\in \rho(B(v_iv_{i+1}) \cup \{v_{i+1}\})$ for $1 \leq i < j$.   Note that if $j > 1$, then $|B(v_iv_{i+1})| = k-1$, thus (1) $\rho(v_j) \in \rho(B(v_{j-1}v_j))$.   We claim that $v_j$ is a $(k-1)$-exit vertex.  Indeed, if $v_j$ is a leaf of $T$ and $w$ is its unique neighbor in $T$, then $j = 1$ by (1) and the fact that $B(wv_j) \cup \{v_j\}$ is a block.  Consequently, by the maximality of $P$, it follows that $K(v_jw) \neq k-1$, i.e., $v_j$ is a $(k-1)$-exit vertex.  On the other case, if $v_j$ is an internal vertex of $T$, then, by Observation~\ref{obs:blockbicliques}, $\{v_j\}S$ is a maximal star of $G(T, K)$ for every $S$ that contains exactly one vertex of $K(v_jw)$ for each $w \in N_T(v)$.  Therefore, there exists $w \in N_T(v_j)$ such that (2) $\rho(v_j) \not\in \rho(B(v_jw) \cup \{w\})$.  By (1), $w \neq v_{j-1}$, thus $K(v_jw) \neq k-1$ by the maximality of $R$ and $P$.  Moreover, $K(v_jw) \neq k$ by (2), thus $v_j$ is a $(k-1)$-exit vertex.
   
   (iii) $\Longrightarrow$ (i).  For every maximal $(k-1)$-subtree $R$ of $(T, K)$, let $z(R)$ be a $(k-1)$-exit vertex of $R$.  For every $v \in V(R)$, define $z(v) = z(R)$.  Observe this definition is correct, because maximal $(k-1)$-subtrees are vertex-disjoint.  Let $L$ be a $k$-list assignment of $G(T,K)$, and define $\rho$ as an $L$-coloring of $G(T, K)$ satisfying the following properties for every $vw \in E(T)$.  
  \begin{enumerate}[(1)]
    \item $\rho(v) \neq \rho(w)$,
    \item $|\rho(B(vw))| = K(vw)$ and $\rho(B(vw) \cup \{v,w\}) = \min\{K(vw)+2, k\}$, and
    \item if $K(vw) = k-1$ and $w$ belongs to the unique path from $v$ to $z(v)$ in $T$, then $\rho(v) \not\in \rho(B(vw))$.
  \end{enumerate}
  A coloring satisfying the above conditions can obtained by first coloring the vertices of $T$, and then coloring the vertices in $B(e)$ for every $e \in E(T)$.  Observe, in particular, that if $K(vw) = k-1$, then $z(v) = z(w)$, thus (3) is always possible.  We claim that $\rho$ is a star $L$-coloring of $G(T, K)$.  Let $\{v\}S$ be a maximal star of $G(T, K)$.  Suppose first that $S = \{x\}$, thus either $\{v, x\} \subseteq B(wz)$ for $wz \in V(T)$ or $v$ is a leaf of $T$ and $x \in B(vw) \cup \{w\}$ for some $w \in T$.  In the former case $\rho(v) \neq \rho(x)$ by (2).  In the latter case, by (2), either $\rho(v) \neq \rho(x)$ or $K(vw) \geq k-1$.  If $K(vw) \geq k-1$, then $K(vw) = k-1$ and $w$ is not a leaf of $T$; otherwise one of $\{v\}$ and $\{vw\}$ would induce be a maximal $(k-1)$-subtree without $(k-1)$-exit vertices.  Hence, $\rho(v) \neq \rho(x)$ by (3).  Suppose now that $|S| > 1$, thus $v$ is an internal vertex of $T$ by Observation~\ref{obs:blockbicliques}.  By hypothesis, $v$ has some neighbor $w \in V(T)$ such that either $K(vw) < k-1$ (when $v = z(v)$) or $w$ belongs to the path of $T$ between $v$ and $z(v)$ (when $v \neq z(v)$).  By Observation~\ref{obs:blockbicliques}, $S$ contains a vertex $x$ of $B(vw)$.  If $v = z(v)$, then $K(vw) < k-1$, thus $\rho(v) \neq \rho(x)$  by (1)~and~(2).  On the other case, if $v \neq z(v)$, then $K(vw) = k-1$, thus $\rho(v) \neq \rho(z)$ by (3).
\end{proof}
 
The algorithmic consequences of Theorem~\ref{thm:block graph coloring} are analogous as those observed for threshold graph in Section~\ref{sec:threshold}.  If $(T, K)$ is a netblock representation and $k$ is maximum among the sizes of the blocks of $G = G(T, K)$, then $\chi_S(G) = ch_S(G)$ equals either $k$ or $k+1$.  When $(T, K)$ is given, it takes $O(|V(T)|)$ time to find all the $(k-1)$-subtrees and its $(k-1)$-exit vertices, if they exist.  Thus, deciding if $\chi_S(G) = k$ takes $O(|V(T)|)$ time when $(T, K)$ is given as input.  Furthermore, if $k < \chi_S(G)$, then it takes $O(|V(T)|)$ time to compute a maximal $(k-1)$-subtree of $(T, K)$ with no $(k-1)$-exit vertices.  Such a subtree can be transformed into an induced subgraph of $G$ in $O(n+m)$ time if required.  Also, a $\chi_S(G)$-star-coloring $\rho$ of $G$ can be obtained in linear with rules (1)--(3) of implication (iii) $\Longrightarrow$ (i).  First apply a BFS traversal of $T$ to color the vertices of $T$ with rule (1), and then color the remaining vertices of $G$ following rules (2) and (3).  Finally, observe that $(T, K)$ can be obtained in $O(n+m)$ time from $G$ when $G$ is encoded by adjacency lists.  Thus, all the discussed algorithms take linear time when applied to the adjacency list representation of $G$.  We record all these observation in the theorem below.

\begin{theorem}\label{thm:block graph complexity}
  Let $(T, K)$ be a netblock representation of a graph $G$.  The problems of computing $\chi_S(G)$, $ch_S(G)$, a maximal $(k-1)$-subtree of $(T, K)$ with no $(k-1)$-exit vertices for $k < \chi_S(G)$, and a $\chi_S(G)$-coloring of $G$ can be solved in $O(|V(T)|)$ time when $(T, K)$ is given as input.  In turn, all these problems take $O(n+m)$ time when an adjacency list representation of $G$ is given.
\end{theorem}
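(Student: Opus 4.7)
The plan is to fill in the algorithmic details sketched in the paragraph preceding the theorem, relying entirely on the structural characterization in Theorem~\ref{thm:block graph coloring}. First I would note that since Theorem~\ref{thm:block graph coloring} asserts equivalence of star $k$-colorability and star $k$-choosability for any fixed $k$, one has $\chi_S(G) = ch_S(G)$, and it suffices to find the minimum $k$ for which condition (iii) holds. The key preliminary observation is that $\chi_S(G) \in \{k^\ast, k^\ast+1\}$, where $k^\ast$ is the maximum size of a twin-block of $G$. The lower bound $\chi_S(G) \geq k^\ast$ is immediate from Observation~\ref{obs:block coloring}. For the upper bound, set $k = k^\ast+1$: condition $K(e)\leq k$ holds since the blocks of $G(T,K)$ either equal $B(vw)$ (size $K(vw)$) or $B(vw)\cup\{v\}$ when $v$ is a leaf of $T$ (size $K(vw)+1$), and in both cases the size is bounded by $k^\ast$. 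Any maximal $(k{-}1)$-subtree $R$ that fails to contain a $(k{-}1)$-exit vertex would, by maximality, be closed under neighbors in $T$, hence equal to $T$; but then any leaf $v$ of $T$ (when $|V(T)|\geq 2$) would make $B(vw)\cup\{v\}$ a block of size $k^\ast+1$, contradicting the choice of $k^\ast$. The degenerate case $|V(T)|=1$ is trivial.

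Next I would describe the linear-time algorithms when $(T,K)$ is the input. Computing $k^\ast$ takes $O(|V(T)|)$ by one scan of the weights and of the degree-$1$ vertices of $T$. To test whether $\chi_S(G) = k^\ast$, apply Theorem~\ref{thm:block graph coloring} with $k=k^\ast$: scan edges to verify $K(e) \leq k^\ast$ (automatic), then perform a single BFS/DFS on $T$ restricted to edges of weight exactly $k^\ast - 1$ to identify the maximal $(k^\ast{-}1)$-subtrees; for each such subtree, check in one sweep whether some vertex has an incident $T$-edge of weight $<k^\ast-1$. This costs $O(|V(T)|)$ total, since each vertex and edge is inspected $O(1)$ times. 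If some subtree has no exit vertex it is returned as a certificate; otherwise $\chi_S(G)=k^\ast$, and in the remaining case $\chi_S(G)=k^\ast+1$. The star coloring itself is produced following rules (1)--(3) in the implication (iii)$\Longrightarrow$(i) of Theorem~\ref{thm:block graph coloring}: precompute, for every internal vertex $v$ of a maximal $(k{-}1)$-subtree $R$, the vertex $z(v)=z(R)$ and the first edge on the path from $v$ to $z(v)$ (one rooted traversal per subtree); then color the vertices of $T$ by a BFS honoring (1) and (3); finally describe each block $B(vw)$ by the at most two colors it must avoid and the size $K(vw)$, which suffices to encode the full coloring of $G$ in $O(|V(T)|)$ output size.

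For the adjacency-list input of $G$ itself, I would invoke the known linear-time recognition of block graphs (cf.~\cite{BrandstadtLeSpinrad1999}), which simultaneously produces the netblock representation $(T,K)$ in $O(n+m)$ time. Plugging this preprocessing before each of the four algorithms above yields $O(n+m)$ total time. If an explicit per-vertex coloring is required, writing it out is $O(n)$, still within $O(n+m)$; likewise, converting a maximal $(k{-}1)$-subtree of $T$ without a $(k{-}1)$-exit vertex into an induced subgraph of $G$ (a witness of non-$k$-colorability) is done by listing the corresponding blocks $B(vw)$, which costs $O(n+m)$ in the worst case. The main subtlety — and the only step that requires care — is confirming that the subtree-processing and the production of the coloring can be performed with just $O(|V(T)|)$ auxiliary work, independent of $n$, which follows from encoding the output implicitly as (tree coloring, block sizes, two forbidden colors per block).
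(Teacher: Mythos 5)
Your proposal is correct and follows essentially the same route as the paper: the paper's argument is precisely the paragraph preceding the theorem, which derives the $O(|V(T)|)$ algorithms from Theorem~\ref{thm:block graph coloring} in the same way (compute the maximum block size, test condition (iii) by traversing the weight-$(k-1)$ edges to locate maximal $(k-1)$-subtrees and their exit vertices, produce the coloring via rules (1)--(3) with a BFS of $T$, and obtain $(T,K)$ from the adjacency lists in $O(n+m)$ time). You merely make explicit a few details the paper leaves implicit, such as the verification that $\chi_S(G)\in\{k^\ast,k^\ast+1\}$ and the implicit $O(|V(T)|)$ encoding of the output coloring.
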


\section{Further remarks and open problems}

In this paper we investigated the time complexity of the star and biclique coloring and choosability problems.  In this section we discuss some open problems that follow from our research.

Theorem~\ref{thm:stcol-2} states that the star $k$-coloring problem is \stp-complete even when the input is restricted to \{$C_4$, $K_{k+2}$\}-free graphs.  In Section~\ref{sec:chordal} we discussed how to generalize this theorem to include \{$C_i$, $K_{k+2}$\}-free graphs, for every $i \in O(1)$.  An interesting question is what happens when $i$ grows to infinity, i.e., what happens when chordal graphs are considered.  By Theorem~\ref{thm:splitnpc}, we know that the star $k$-coloring problem is at least \NP-hard on chordal graphs.  Is it \NP-complete or not?  Similarly, by Theorem~\ref{thm:split stchose}, the star $k$-choosability problem on chordal graphs is at least \ptwop-hard; is it \ptwop-complete?

\begin{openproblem}
 Determine the time complexity of the star $k$-coloring ($k$-choosability) problem on chordal graphs and chordal $K_{k+2}$-free graphs.
\end{openproblem}

To prove the \stp-completeness of the star $k$-coloring problem we showed how to transform a formula $\phi(\VEC x, \VEC y)$ into a graph $G$.  Graph $G$ has many connection vertices that are joined together by $k$-keepers and $k$-switchers.  The purpose of the $k$-keepers is to force two vertices to have the same colors, while $k$-switchers are used to force different colors on a pair of vertices.  Both $k$-keepers and $k$-switchers contain blocks of size $k$.  By taking a close examination at Lemmas \ref{lem:keeper properties}~and~\ref{lem:switcher properties}, it can be seen that these blocks play an important role when colors need to be forced.  An interesting question is whether these blocks can be avoided.  

\begin{openproblem}
 Determine the time complexity of the star $k$-coloring ($k$-choosability) problem on graph where every block has size at most $j$, for $j \leq k$.
\end{openproblem}

Keepers and switchers not only have blocks of size $k$; when a $k$-keeper or a $k$-switcher connects two vertices $v$ and $w$, a clique of size $k+1$ containing $v$ and $w$ is generated.  We know that the star $k$-coloring problem is \stp-complete for $K_{k+2}$-free graphs, but what happens when $K_{k+1}$-free graphs are considered?  The answer for the case $k=2$ is given by Theorem~\ref{thm:stcol triangle-free}, i.e., the star $2$-coloring problem is easy on $K_3$-free graphs.  And for larger values of $k$?

\begin{openproblem}
 Determine the time complexity of the star $k$-coloring ($k$-choosability) problem on $K_j$-free graphs, for $k > 2$ and $4 \leq j \leq k+1$.
\end{openproblem}

In Section~\ref{sec:choosability}, the \ptp-completeness of the star $k$-choosability problem on $C_4$-free graphs is proved by induction.  For the case $k=2$, a graph $G_2$ is built from a DNF formula using the same ideas that we used to prove the hardness of the star $2$-coloring problem.  Then, for the case $k>2$, the graph $G_{k-1}$ is transformed into a graph $G_{k}$.  The reason for splitting the proof in two cases is that long $k$-switchers can no longer be included into $G_k$.  So, to avoid the inclusion of induced $C_4$'s, list switchers are used to build $G_2$, while $k$-switchers are used to transform $G_{k-1}$ into $G_{k}$ for $k > 2$.  This way, each generated graph $G_k$ is $C_4$-free.  We did not find a way to extend the holes in $G_k$ as it is done in Section~\ref{sec:chordal} for the star $k$-coloring problem.  Thus, $G_k$ contains $C_5$ as an induced subgraph for every $k > 2$, while $G_2$ contains $C_7$ as an induced subgraph.  Is the problem simpler when such holes are avoided?

\begin{openproblem}
 Determine the time complexity of the star $k$-choosability problem when all the holes in the input graph have length at least $i$, for $i \in O(1)$.
\end{openproblem}

The star coloring problem is easier when any of the graphs on three vertices does not appear as an induced subgraph, as discussed in Section~\ref{sec:small forbiddens}.  Similarly, the biclique coloring problem is simpler when it contains no induced $P_3$, $\overline{P_3}$, or $\overline{K_3}$, or when it is $K_3$-free and biclique-dominated.  Also, by Theorem~\ref{thm:np w4-dart-gem-kii}, the biclique coloring problem on $K_3$-free graphs is ``only'' \NP when it contains no induced $K_{i,i}$ for $i \in O(1)$.  

\begin{openproblem}
 Determine the time complexity of the biclique $k$-coloring ($k$-choosability) problem on $K_3$-free graphs.
\end{openproblem}

The star and biclique coloring problems are also simplified when diamonds are forbidden, as seen in Section~\ref{sec:diamond-free}.  By Theorem~\ref{thm:np w4-dart-gem}, the star $k$-coloring problem is \NP for diamond-free graphs, while the biclique $k$-coloring problem is \NP for \{diamond, $K_{i,i}$\}-free graphs ($i \in O(1)$) by Theorem~\ref{thm:np w4-dart-gem-kii}.  Additionally, if holes and nets are forbidden, then the star $k$-coloring problem can be solved easily by Theorem~\ref{thm:block graph complexity}.  This leaves at least two interesting questions.

\begin{openproblem}
 Determine the time complexity of the biclique $k$-coloring problems on diamond-free graphs.
\end{openproblem}

\begin{openproblem}
 Determine the time complexity of the star $k$-choosability problems on diamond-free graphs.
\end{openproblem}



\begin{thebibliography}{10}

\bibitem{AmilhastreVilaremJanssenDAM1998}
J.~Amilhastre, M.~C. Vilarem, and P.~Janssen.
\newblock Complexity of minimum biclique cover and minimum biclique
  decomposition for bipartite domino-free graphs.
\newblock {\em Discrete Appl. Math.}, 86(2-3):125--144, 1998.

\bibitem{BacsoGravierGyarfasPreissmannSebHoSJDM2004}
G{\'a}bor Bacs{\'o}, Sylvain Gravier, Andr{\'a}s Gy{\'a}rf{\'a}s, Myriam
  Preissmann, and Andr{\'a}s Seb{\H{o}}.
\newblock Coloring the maximal cliques of graphs.
\newblock {\em SIAM J. Discrete Math.}, 17(3):361--376 (electronic), 2004.

\bibitem{BrandstadtLeSpinrad1999}
Andreas Brandst{\"a}dt, Van~Bang Le, and Jeremy~P. Spinrad.
\newblock {\em Graph classes: a survey}.
\newblock SIAM Monographs on Discrete Mathematics and Applications. Society for
  Industrial and Applied Mathematics (SIAM), Philadelphia, PA, 1999.

\bibitem{CerioliKorenchendler2009}
M{\'a}rcia~R. Cerioli and Andr{\'e}~L. Korenchendler.
\newblock Clique-coloring circular-arc graphs.
\newblock In {\em L{AGOS}'09---{V} {L}atin-{A}merican {A}lgorithms, {G}raphs
  and {O}ptimization {S}ymposium}, volume~35 of {\em Electron. Notes Discrete
  Math.}, pages 287--292. Elsevier Sci. B. V., Amsterdam, 2009.

\bibitem{DefossezJGT2006}
David D{\'e}fossez.
\newblock Clique-coloring some classes of odd-hole-free graphs.
\newblock {\em J. Graph Theory}, 53(3):233--249, 2006.

\bibitem{DefossezJGT2009}
David D{\'e}fossez.
\newblock Complexity of clique-coloring odd-hole-free graphs.
\newblock {\em J. Graph Theory}, 62(2):139--156, 2009.

\bibitem{EguiaSoulignacDMTCS2012}
Martiniano Egu\'{\i}a and Francisco~J. Soulignac.
\newblock Hereditary biclique-{H}elly graphs: recognition and maximal biclique
  enumeration.
\newblock arXiv: 1103.1917.

\bibitem{EiterGottlob1995}
Thomas Eiter and Georg Gottlob.
\newblock Note on the complexity of some eigenvector problems.
\newblock Technical Report CD-TR 95/89, Christian Doppler Labor f\"ur
  Expertensyteme, TU Vienna, 1995.

\bibitem{FarberDM1989}
Martin Farber.
\newblock On diameters and radii of bridged graphs.
\newblock {\em Discrete Math.}, 73(3):249--260, 1989.

\bibitem{Garey1979}
Michael~R. Garey and David~S. Johnson.
\newblock {\em Computers and intractability}.
\newblock W. H. Freeman and Co., San Francisco, Calif., 1979.
\newblock A guide to the theory of NP-completeness, A Series of Books in the
  Mathematical Sciences.

\bibitem{Golumbic2004}
Martin~Charles Golumbic.
\newblock {\em Algorithmic graph theory and perfect graphs}, volume~57 of {\em
  Annals of Discrete Mathematics}.
\newblock Elsevier Science B.V., Amsterdam, second edition, 2004.
\newblock With a foreword by Claude Berge.

\bibitem{GravierHoangMaffrayDM2003}
Sylvain Gravier, Ch{\'{\i}}nh~T. Ho{\`a}ng, and Fr{\'e}d{\'e}ric Maffray.
\newblock Coloring the hypergraph of maximal cliques of a graph with no long
  path.
\newblock {\em Discrete Math.}, 272(2-3):285--290, 2003.

\bibitem{GroshausMonteroJoGT2012}
Marina Groshaus and Leandro~P. Montero.
\newblock On the iterated biclique operator.
\newblock {\em Journal of Graph Theory}, 2012.
\newblock Available online.

\bibitem{GroshausSzwarcfiterGC2007}
Marina Groshaus and Jayme~L. Szwarcfiter.
\newblock Biclique-{H}elly graphs.
\newblock {\em Graphs Combin.}, 23(6):633--645, 2007.

\bibitem{GroshausSzwarcfiterJGT2010}
Marina Groshaus and Jayme~L. Szwarcfiter.
\newblock Biclique graphs and biclique matrices.
\newblock {\em J. Graph Theory}, 63(1):1--16, 2010.

\bibitem{GutnerTarsiDM2009}
Shai Gutner and Michael Tarsi.
\newblock Some results on {$(a\colon b)$}-choosability.
\newblock {\em Discrete Math.}, 309(8):2260--2270, 2009.

\bibitem{KratochvilTuzaJA2002}
Jan Kratochv{\'{\i}}l and Zsolt Tuza.
\newblock On the complexity of bicoloring clique hypergraphs of graphs.
\newblock {\em J. Algorithms}, 45(1):40--54, 2002.

\bibitem{Lovasz1973}
L.~Lov{\'a}sz.
\newblock Coverings and coloring of hypergraphs.
\newblock In {\em Proceedings of the {F}ourth {S}outheastern {C}onference on
  {C}ombinatorics, {G}raph {T}heory, and {C}omputing ({F}lorida {A}tlantic
  {U}niv., {B}oca {R}aton, {F}la., 1973)}, pages 3--12, Winnipeg, Man., 1973.
  Utilitas Math.

\bibitem{MacedoDantasMachadoFigueiredo2012}
H\'elio~B. Mac\^edo~Filho, Simone Dantas, Raphael C.~S. Machado, and Celina
  M.~H. de~Figueiredo.
\newblock Biclique-colouring powers of paths and powers of cycles.
\newblock In {\em Proc. of the 11th Cologne-Twente Workshop on Graphs and
  Combinatorial Optimization, CTW 2012}, pages 134--138. 2012.
\newblock arXiv:1203.2543.

\bibitem{MacedoMachadoFigueiredo2012}
H\'elio~B. Mac\^edo~Filho, Raphael C.~S. Machado, and Celina M.~H.
  de~Figueiredo.
\newblock Clique-colouring and biclique-colouring unichord-free graphs.
\newblock In David Fern\'andez-Baca, editor, {\em LATIN 2012: Theoretical
  Informatics}, volume 7256 of {\em Lecture Notes in Computer Science}, pages
  530--541. Springer Berlin Heidelberg, 2012.

\bibitem{MaffrayPreissmannDM1996}
Fr{\'e}d{\'e}ric Maffray and Myriam Preissmann.
\newblock On the {NP}-completeness of the {$k$}-colorability problem for
  triangle-free graphs.
\newblock {\em Discrete Math.}, 162(1-3):313--317, 1996.

\bibitem{MahadevPeled1995}
N.~V.~R. Mahadev and U.~N. Peled.
\newblock {\em Threshold graphs and related topics}, volume~56 of {\em Annals
  of Discrete Mathematics}.
\newblock North-Holland Publishing Co., Amsterdam, 1995.

\bibitem{MarxTCS2011}
D{\'a}niel Marx.
\newblock Complexity of clique coloring and related problems.
\newblock {\em Theoret. Comput. Sci.}, 412(29):3487--3500, 2011.

\bibitem{MoharvSkrekovskiEJC1999}
Bojan Mohar and Riste {\v{S}}krekovski.
\newblock The {G}r\"otzsch theorem for the hypergraph of maximal cliques.
\newblock {\em Electron. J. Combin.}, 6:Research Paper 26, 13 pp.\
  (electronic), 1999.

\bibitem{Papadimitriou1994}
Christos~H. Papadimitriou.
\newblock {\em Computational complexity}.
\newblock Addison-Wesley Publishing Company, Reading, MA, 1994.

\bibitem{PrisnerC2000a}
Erich Prisner.
\newblock Bicliques in graphs. {I}. {B}ounds on their number.
\newblock {\em Combinatorica}, 20(1):109--117, 2000.

\bibitem{Terlisky2010}
Pablo Terlisky.
\newblock Biclique-coloreo de grafos.
\newblock Master's thesis, Universidad de Buenos Aires, 2010.

\bibitem{TuzaC1984}
Zsolt Tuza.
\newblock Covering of graphs by complete bipartite subgraphs: complexity of
  {$0$}-{$1$} matrices.
\newblock {\em Combinatorica}, 4(1):111--116, 1984.

\end{thebibliography}
\end{document}